\DeclareMathAlphabet{\mathpzc}{OT1}{pzc}{m}{it}
\newcommand{\md}{\mathrm{d}}
\newcommand{\avrg}[1]{\hat{#1}}
\newtheorem{theorem}{Theorem}
\numberwithin{theorem}{section}
\newtheorem{lemma}[theorem]{Lemma}
\newtheorem{proposition}[theorem]{Proposition}
\newtheorem{corollary}[theorem]{Corollary}
\numberwithin{equation}{section}
\definecolor{orange}{rgb}{1,0.5,0}
\definecolor{rb}{rgb}{1,0,1}
\begin{document}
\title{\textbf{Quasi-entropy by log-determinant covariance matrix and application to liquid crystals}\footnote{This work is partially supported by NSFC No. 11688101, NCMIS, AMSS grants for Outstanding Youth Project, and ICMSEC dirctor funds.}}
\author{Jie Xu\footnote{LSEC \& NCMIS, Institute of Computational Mathematics and Scientific/Engineering Computing (ICMSEC), Academy of Mathematics and Systems Science (AMSS), Chinese Academy of Sciences, Beijing 100190, China. Email: xujie@lsec.cc.ac.cn}}
\date{}
\maketitle

\begin{abstract}
  A quasi-entropy is constructed for tensors averaged by a density function on $SO(3)$ using the log-determinant of a covariance matrix. 
  It serves as a substitution of the entropy for tensors derived from a constrained minimization that involves integrals. 
  The quasi-entropy is an elementary function that possesses the essential properties of the original entropy.
  It constrains the covariance matrix to be positive definite, is strictly convex, and is invariant under rotations. 
  Moreover, when reduced by symmetries, it keeps the vanishing tensors of the symmetry zero.
  Explicit expressions are provided for axial symmetries up to four-fold, as well as tetrahedral and octahedral symmetries.
  The quasi-entropy is utilized to discuss phase transitions in several systems.
  The results are consistent with using the original entropy.
  Besides, some novel results are presented. 
\end{abstract}

\section{Introduction}
The free energy for a physical system always has an entropy term. 
When the system is described by a density function $\rho$, the entropy term is given as a multiple of the integral of $\rho\ln \rho$ on the space where $\rho$ is defined. 
In many systems, however, using the density function might be too complicated to solve. 
To obtain a simplified description, one might prefer to introduce some quantities averaged by $\rho$. 
Such an approach is commonly adopted in liquid crystals, and the averaged quantities serve as order parameters. 
It then requires to construct an entropy term for these averaged quantities. 
Let us explain by rod-like molecules. 
The density fucntion $\rho$ now depends on a unit vector $\bm{m}$ indicating the direction of the molecule.
A second order symmetric traceless tensor ($3\times 3$ matrix) $Q$ is defined as the average of $\bm{m}\otimes\bm{m}-\mathfrak{i}/3$ about $\rho$, where $\mathfrak{i}$ is the $3\times 3$ identity matrix. 
Apart from the entropy term, the rest of free energy can be expressed in terms of $Q$, usually a polynomial about $Q$ and its derivatives \cite{GennesPierreGillesde1993Tpol}. 
Therefore, to completely write down the free energy, it is necessary to give the entropy term by a function of $Q$.

One simple approach is to assume that we already have a function of $Q$ for the entropy, then to expand it into a polynomial of $Q$ keeping only a few low order terms. 
As a result, it merges into the rest of free energy which is also a polynomial, leading to a Landau-type energy. 
If we ignore the derivative terms, the simplest free energy would be a fourth degree polynomial, 
\begin{equation}
  a\sum_{i_1,i_2}Q_{i_1i_2}Q_{i_1i_2}-b\sum_{i_1,i_2,i_3}Q_{i_1i_2}Q_{i_2i_3}Q_{i_3i_1}+c(\sum_{i_1,i_2}Q_{i_1i_2}Q_{i_1i_2})^2. \label{fepol0}
\end{equation}
Another approach is the so-called Bingham closure, which was probably first proposed in \cite{bingham1974antipodally} and is used widely in simulations \cite{grosso2000closure,feng1998closure,wang1997comparative,Ilg2,INAC}. 
In the simplest case, the free energy consists of an entropy term and a quadratic term, 
\begin{equation}
  f_{\mathrm{ent}}(Q)-a'\sum_{i_1,i_2}Q_{i_1i_2}Q_{i_1i_2}, \label{entfe0}
\end{equation}
where the entropy term $f_{\mathrm{ent}}(Q)$ is derived by solving the constrained minimization problem 
\begin{equation}
  f_{\mathrm{ent}}=\mathrm{min}\int_{S^2}\rho\ln\rho\md\bm{m},\quad \mbox{ s.t. }\int_{S^2}\rho(\bm{m})\md\bm{m}=1,\ \int_{S^2}(\bm{m}\otimes\bm{m}-\frac{\mathfrak{i}}{3})\rho(\bm{m})\md\bm{m}=Q. \label{entmin0}
\end{equation}
It can be shown that if $Q$ is symmetric traceless with its eigenvalues lying in the physically meaningful interval $(-1/3,2/3)$, then $\rho$ is uniquely determined by $Q$ \cite{ball2010nematic}, so that $f_{\mathrm{ent}}$ can be regarded as a function of $Q$. 
Since $f_{\mathrm{ent}}$ is derived from $\int\rho\ln\rho\md\bm{m}$ in a definite setting, we would call $f_{\mathrm{ent}}(Q)$ given by \eqref{entmin0} the original entropy of $Q$. 

Liquid crystals can also be formed by rigid non-spherical molecules other than rod-like molecules, such as bent-core, cross-like and polygon-like molecules \cite{JJAP,PhysRevE.58.5873,Zhao29092015,jakli2018physics}. 
The density function $\rho$ is now a function on $SO(3)$. 
When choosing the averaged quantities for order parameters, multiple tensors are necessarily included. 
Based on various choices of tensors, many different free energies are discussed \cite{Bi1,fel1995tetrahedral,PhysRevE.58.5873,doi:10.1063/1.1649733,roy1999longitudinal,lubensky2002theory,bisi2006universal,brand2010macroscopic,luckhurst2012molecular,trojanowski2012tetrahedratic,shamid2014predicting,gaeta2016octupolar}. 
The choice of tensors and how they are coupled outside the entropy term actually depend on the molecular symmetry, which have been discussed systematically in some recent works \cite{nissinen2016classification,turzi2017determination,xu_softmatter,xu_tensors,xu_kernel}. 
The coefficients of each coupling can also be evaluated if the molecular interaction is specified \cite{RodModel,BentModel}. 

Let us focus on the entropy term. 
The free energy in most works mentioned above is Landau-type, i.e. the entropy term is taken as a polynomial. 
When there are multiple tensors, intrinsic difficulties arise in Landau-type theories. 
Because of stability requirements, the polynomial is at least fourth degree.
The result is that one might have to include too many terms. 
For example, a fourth degree polynomial about two second order tensors, which is used to discuss biaxial liquid crystals, possesses up to fourteen terms \cite{de2008landau}. 
The number of terms increases drastically if more tensors are involved, and it becomes a trouble to find out reasonable values of many coefficients, making it nearly impossible to write down a suitable function to act as the entropy.  
Another problem is that Landau-type theories are not able to constrain the tensors within the physically meaningful domain. 
This can be recognized from $Q$ for rod-like molecules. 
Obviously, a polynomial cannot constrain its eigenvalues in the interval $(-1/3,2/3)$. 
This might not be a major problem for the tensor $Q$. 
However, when there are multiple tensors, the relation between the tensors can result in complicated constraints that must be dealt with carefully.

In contrast, when the entropy term is derived from a minimization problem like \eqref{entmin0}, the above shortcomings can be overcome. 
For the tensor $Q$, the function $f_{\mathrm{ent}}(Q)$ does not have undetermined coefficients. 
Moreover, $Q$ is constrained in the physical domain since it is calculated from a density $\rho$. 
These two advantages also hold when more tensors are involved. 
The price paid for these advantages is that the entropy term involves integration (on $S^2$ in \eqref{entfe0}, on $SO(3)$ in general cases), thus it is difficult in analysis and costly in numerical simulations. 
It takes quite a lot of efforts to obtain analytical results \cite{liu2005axial,fatkullin2005critical,xu2017transmission}. 
Numerically, several fast algorithms are proposed in the case of rod-like molecules, for $f_{\mathrm{ent}}(Q)$ \cite{kent1987asymptotic,kume2005saddlepoint,kume2013saddlepoint,wang2008crucial,luo2018a}, but they cannot be easily extended to the cases with multiple tensors.

We propose a substitution of the original entropy $f_{\mathrm{ent}}$ with an \emph{elementary function} of averaged tensors on the space $SO(3)$, which we call a quasi-entropy. 
The construction is based on a covariance matrix in which all the components of these averaged tensors appear. 
This covariance matrix shall be positive definite. 
To guarantee this, we use the minus log-determinant of the covariance matrix as the quasi-entropy.
Generally, the covariance matrix contains averaged tensors not chosen as order parameters.
To obtain a function about the chosen tensors, the minus log-determinant is minimized with the values of the chosen tensors fixed. 
By using such an elementary function, we can avoid calculating the integral in $f_{\mathrm{ent}}$. 
Moreover, the quasi-entropy is able to keep some essential properties of the original entropy: 
\begin{itemize}
\item Strict convexity. 
\item Invariance under rotations. 
\item Consistency in the reduction by symmetry. Under certain molecular symmetry, it is known that some tensors will vanish when averaged. Consider all the vanishing tensors involved in the covariance matrix. When solving the minimization for the minus log-determinant as we described above, we can show that at the minimizer these tensors are indeed zero. 
\item In the case of rod-like molecules, the asymptotic behaviors are identical to $f_{\mathrm{ent}}(Q)$ given in \eqref{entmin0}. 
\end{itemize}



To illustrate the construction and properties of the quasi-entropy, we introduce some notations and results in Section \ref{prel}. 
The properties of the original entropy are presented in Section \ref{oent}, 
followed by the discussion of quasi-entropy in Section \ref{qentsec}. 
For several symmetries where the tensors involved are no more than fourth order, the explicit expressions are derived in Section \ref{qentexpl}.

It is expected that proposing the quasi-entropy that maintains the essential properties of the original entropy would be able to capture the correct physics while keeping the model simple, especially for phase transitions. 
Indeed, we illustrate it in Section \ref{appl} by investigating important cases, including the consistency with some flagship results when using the original entropy, as well as some novel results. 
The quasi-entropy is combined with a polynomial to give the free energy for liquid crystals formed by molecules with different symmetries. 
The focus is put on the spatially homogeneous phases, and the cases below are examined: 
\begin{itemize}
\item Rod-like molecules, with one second order tensor $Q$ as the order parameter.  We will show that at stationary points, $Q$ must have two equal eigenvalues. This result is consistent with the free energy including the original entropy \eqref{entmin0} \cite{liu2005axial,fatkullin2005critical}. 
\item Molecules of the symmetry of a rectangle/cuboid, with two second order tensors as the order parameters. When the entropy is obtained like in \eqref{entmin0}, some analytical results are given in \cite{xu2017transmission}. 
\item Bent-core molecules, with three order parameter tensors. Armed with the coefficients derived from molecular parameters (see \cite{BentModel}), a phase diagram is presented about molecular parameters. 
  The phase diagram turns out to be very close to that with the original entropy. 
\item Molecules of tetrahedral or octahedral symmetry, with a third order tensor and a fourth order tensor as the order parameters. We will present a phase diagram for a simplified case. To the knowledge of the author, the tetrahedral and octahedral symmetries have not been examined together previously. 
\end{itemize}
From these results, we can see that the quasi-entropy is able to effectively play the role of entropy, meanwhile much simpler than the original entropy. 

Concluding remarks with open problems are given in Section \ref{concl}. 

\section{Preliminaries\label{prel}}

We are considering a system consisting of rigid, non-spherical molecules. 
To represent the orientation of a molecule, we fix a right-handed orthonormal frame $(\bm{m}_1,\bm{m}_2,\bm{m}_3)$ on it. 
The body-fixed frame can rotate with the molecule.
Let us choose in $\mathbb{R}^3$ a right-handed reference orthonormal frame $(\bm{e}_1,\bm{e}_2,\bm{e}_3)$. 
Define a matrix $\mathfrak{p}\in SO(3)$, such that its $(i,j)$ components are given by the $i$-th coordinates of $\bm{m}_j$. 
A parameterization of $\mathfrak{p}$ is given by three Euler angles, 
\begin{align}
\mathfrak{p}=&(\bm{m}_1,\bm{m}_2,\bm{m}_3)
=\left(\begin{array}{ccc}
m_{11} & m_{21} & m_{31}\\
m_{12} & m_{22} & m_{32}\\
m_{13} & m_{23} & m_{33}
\end{array}\right)
\nonumber\\
=&\left(
\begin{array}{ccc}
 \cos\alpha &\ -\sin\alpha\cos\gamma &\ \sin\alpha\sin\gamma\\
 \sin\alpha\cos\beta &\ \cos\alpha\cos\beta\cos\gamma-\sin\beta\sin\gamma &
 \  -\cos\alpha\cos\beta\sin\gamma-\sin\beta\cos\gamma\\
 \sin\alpha\sin\beta &\ \cos\alpha\sin\beta\cos\gamma+\cos\beta\sin\gamma &
 \ -\cos\alpha\sin\beta\sin\gamma+\cos\beta\cos\gamma
\end{array}
\right).\label{EulerRep}
\end{align}

The element $\mathfrak{p}\in SO(3)$ also defines a rotation in the reference frame $(\bm{e}_1,\bm{e}_2,\bm{e}_3)$: for any vector $\bm{a}$, the rotated vector is given by $\mathfrak{p}\bm{a}$. 
In this way, we could view the product $\mathfrak{tp}$ of two elements in $SO(3)$ as the rotation of the frame $\mathfrak{p}=(\bm{m}_1,\bm{m}_2,\bm{m}_3)$ by $\mathfrak{t}$ by imposing the rotation $\mathfrak{t}$ on the three $\bm{m}_i$. 
The multiplication on the right, $\mathfrak{pt}$, has a different meaning. 
It can be regarded as performing the rotation with the body-fixed frame $(\bm{m}_1,\bm{m}_2,\bm{m}_3)$ taken as the reference frame. 
To comprehend the multiplications on the left and on the right, let us consider 
\begin{equation}
  \mathfrak{j}_{\theta}=\left(
  \begin{array}{ccc}
    1 & 0 & 0\\
    0 & \cos\theta & -\sin\theta\\
    0 & \sin\theta & \cos\theta
  \end{array}
  \right). \label{jtheta}
\end{equation}
The product $\mathfrak{j}_{\theta}\mathfrak{p}$ keeps the first line of $\mathfrak{p}$ invariant, 
while the product $\mathfrak{pj}_{\theta}$ keeps the first column, i.e. $\bm{m}_1$, invariant.
We also point out a simple observation to be used later: if the Euler angles of $\mathfrak{p}$ are $(\alpha,\beta,\gamma)$, then those of $\mathfrak{pj}_{\theta}$ are $(\alpha,\beta,\gamma+\theta)$. 

Using the Euler angles, the uniform probability measure on $SO(3)$ is expressed as $\md\mathfrak{p}=(1/8\pi^2)\sin\alpha\md\alpha\md\beta\md\gamma$. 
It is invariant under two kinds of rotations: for any function $g(\mathfrak{p})$ and $\mathfrak{t}\in SO(3)$, we have 
\begin{equation}
  \int g(\mathfrak{tp})\,\md\mathfrak{p}=\int g(\mathfrak{pt})\,\md\mathfrak{p}=\int g(\mathfrak{p})\,\md\mathfrak{p}. \label{introt}
\end{equation}

Next, we introduce some notations on tensors.
A $k$-th order tensor can be written as a linear combination of the basis given by the tensor product of $\bm{e}_i$, 
\begin{equation}
  U=\sum_{j_1,\ldots,j_k=1}^3U_{j_1\ldots j_k}\bm{e}_{j_1}\otimes\ldots\otimes\bm{e}_{j_k}, 
\end{equation}
where $U_{j_1\ldots j_k}$ are the components. 
The dot product is given by summing up the product of corresponding components, 
\begin{equation}
  U\cdot V= \sum_{j_1,\ldots,j_k=1}^3U_{j_1\ldots j_k}V_{j_1\ldots j_k},\quad |U|^2=U\cdot U. 
\end{equation}
A rotation $\mathfrak{p}$ acting on a tensor is defined by replacing $\bm{e}_i$ by $\bm{m}_i$ without changing the components: 
\begin{align}
  \mathfrak{p}\circ U=&\sum_{j_1,\ldots,j_k=1}^3U_{j_1\ldots j_n}\bm{m}_{j_1}\otimes\ldots\otimes\bm{m}_{j_n}.\label{rot0}
\end{align}
It gives a function of $\mathfrak{p}$, which we also denote by $U(\mathfrak{p})$. 
It is easy to verify that $U(\mathfrak{p}_1\mathfrak{p}_2)=\mathfrak{p}_1\circ U(\mathfrak{p}_2)$, and 
\begin{equation}
  U_1(\mathfrak{sp}_1)\cdot U_2(\mathfrak{sp}_2)=U_1(\mathfrak{p}_1)\cdot U_2(\mathfrak{p}_2), \quad\forall \mathfrak{s}\in SO(3). \label{innrot}
\end{equation}
Any tensor rotated by the identity $\mathfrak{i}$ is just itself, i.e. $U(\mathfrak{i})=U$. 

A $k$-th order tensor $U$ is called symmetric, if for any permutation $\sigma$ of $(1 \ldots k)$ it satisfies $U_{j_{\sigma(1)}\ldots j_{\sigma(k)}}=U_{j_1\ldots j_k}$.
For any $k$-th order tensor $U$, we define its symmetrization, 
\begin{equation}
  U_{\mathrm{sym}}=\frac{1}{k!}\sum_{j_1,\ldots,j_k=1}^3\sum_{\sigma}U_{j_{\sigma(1)}\ldots j_{\sigma(k)}}\bm{e}_{j_1}\otimes\ldots\otimes\bm{e}_{j_k}, \label{Usym}
\end{equation}
where the summation inside is taken over all the permutations. 
The trace of a $k$-th order symmetric tensor $U$ is defined by contracting two of its components, giving a $(k-2)$-th order symmetric tensor, 
\begin{align*}
  (\mathrm{tr}U)_{j_1\ldots j_{k-2}}=\sum_{i=1}^3U_{j_1\ldots j_{k-2}ii}. 
\end{align*}
A symmetric tensor $U$ is called a symmetric traceless tensor if $\mathrm{tr}U=0$. 
It is easy to verify that the symmetric and traceless properties are kept under rotations, i.e. if $U$ is symmetric (traceless), then $U(\mathfrak{p})$ is also symmetric (traceless). 

To express symmetric tensors, we introduce the monomial notation below, 
\begin{equation}
\bm{m}_1^{k_1}\bm{m}_2^{k_2}\bm{m}_3^{k_3}
=(\underbrace{\bm{m}_1\otimes\ldots}_{k_1}\otimes
\underbrace{\bm{m}_2\otimes\ldots}_{k_2}\otimes
\underbrace{\bm{m}_3\otimes\ldots}_{k_3}
)_{\mathrm{sym}}. \label{tensor_monomial}
\end{equation}
In this way, a polynomial about $\bm{m}_i$ can be regarded as a symmetric tensor, if every term in the polynomial has the same order. 
The $3\times 3$ identity matrix $\mathfrak{i}$ can be expressed as a polynomial, 
\begin{equation}
  \mathfrak{i}=\bm{m}_1^2+\bm{m}_2^2+\bm{m}_3^2. \label{squaresum}
\end{equation}

We use the Kronecker delta and the Levi-Civita symbol, which are given by 
$$
\delta_{ij}=\left\{
\begin{array}{ll}
  1, &i=j, \\
  0, &i\ne j.
\end{array}
\right.\quad
\epsilon_{ijk}=\left\{
\begin{array}{ll}
  1, &(ijk)=(123),(231),(312), \\
  -1, &(ijk)=(132),(213),(321),\\
  0, &\text{otherwise}.
\end{array}
\right.
$$

In the following, we present some results about tensors established in previous works, and some corollaries derived from these results. 
There might be many useful references, such as \cite{SpecFun,Group_Boerner,jerphagnon1978the,turzi2011on}. 
Here, we mainly follow \cite{xu_tensors} since the notations are identical, where all the propositions in this section can be found. 
It is known that the $k$-th order symmetric traceless tensors form a linear space of the dimension $2k+1$. 
Thus, let us choose an orthongonal basis which we denote as 
\begin{equation}
  \{W^k_1,\ldots,W^k_{2k+1}\},\text{ s.t. } W^k_i\cdot W^k_j=\delta_{ij}. \label{orth_basis}
\end{equation}
These tensors have the superscript $k$ to indicate their tensor order, which will be frequently adopted whenever we would like to emphasize the tensor order. 
These tensors also form an orthogonal basis when rotated by some $\mathfrak{p}\in SO(3)$. 
Any $k$-th order symmetric traceless tensor can be linearly expressed by the basis, 
\begin{equation}
  U^k(\mathfrak{p})=\sum_{j}u_j
  W^k_j(\mathfrak{p}). \label{linexpr0}
\end{equation}
The coefficients are \emph{independent of} $\mathfrak{p}$ because of the definition \eqref{rot0}. 
Thus, the transformation between two basis $W^k_j(\mathfrak{ps})$ and $W^k_j=W^k_j(\mathfrak{p})$ can be expressed as 
\begin{equation}
  W^k_j(\mathfrak{ps})=\sum_{j'=1}^{2k+1}\Gamma^k_{jj'}(\mathfrak{s})W^k_{j'}(\mathfrak{p}), \quad \mathfrak{s}\in SO(3), \label{basistrans}
\end{equation}
where the $(2k+1)\times(2k+1)$ matrix $\Gamma$ only depends on $\mathfrak{s}$. 
It is easy to see that 
\begin{align}
  &\Gamma(\mathfrak{s}_1\mathfrak{s}_1)=\Gamma(\mathfrak{s}_1)\Gamma(\mathfrak{s}_2), \quad \Gamma(\mathfrak{i})=I_{2k+1}, \label{rotmat}
\end{align}
where $I_{m}$ denotes the $m\times m$ identity matrix. 

%

Now, let us consider the components of an $n$-th order tensor $U(\mathfrak{p})$, which are functions of $\mathfrak{p}$. They can be expressed linearly by the components of symmetric traceless tensors, which we state in the proposition below.
\begin{proposition}\label{orth_basis_mat}
  The functions $W^k_i(\mathfrak{i})\cdot W^k_j(\mathfrak{p})$, for $k=0,1,\ldots,n$ and $1\le i,j\le 2k+1$, are linearly independent and can express linearly the components of an $n$-th order tensors $U(\mathfrak{p})$. 
\end{proposition}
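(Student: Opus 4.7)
The plan reduces everything, via an initial identification, to a statement about matrix elements of the rotation representation $\Gamma^k$, and then applies two classical ingredients from $SO(3)$-representation theory. Substituting $\mathfrak{p}=\mathfrak{i}$ in \eqref{basistrans} gives $W^k_j(\mathfrak{s})=\sum_{j'}\Gamma^k_{jj'}(\mathfrak{s})W^k_{j'}$; taking the dot product with $W^k_i=W^k_i(\mathfrak{i})$ and using the orthonormality \eqref{orth_basis}, I would obtain
\[
W^k_i(\mathfrak{i})\cdot W^k_j(\mathfrak{p})=\Gamma^k_{ji}(\mathfrak{p}).
\]
Thus the functions under study are exactly the matrix entries of $\Gamma^k$, and the proposition becomes: the entries $\Gamma^k_{ji}$ for $k\le n$ and $1\le i,j\le 2k+1$ are linearly independent on $SO(3)$, and their span contains every component of $U(\mathfrak{p})$ for every $n$-th order tensor $U$.

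For linear independence I would invoke classical $SO(3)$-representation theory. By \eqref{rotmat}, $\Gamma^k$ is a $(2k+1)$-dimensional real representation of the compact group $SO(3)$ carried by the space of $k$-th order symmetric traceless tensors; these are the standard irreducible representations of $SO(3)$, pairwise inequivalent as $k$ varies. The Peter--Weyl/Schur orthogonality relations then give $L^2(SO(3),\md\mathfrak{p})$-orthogonality of distinct matrix entries $\Gamma^k_{ji}$, which is strictly stronger than linear independence.

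For the spanning claim I would use the irreducible decomposition of $(\mathbb{R}^3)^{\otimes n}$ under the diagonal $SO(3)$-action, which (by iterated Clebsch--Gordan) is a direct sum of copies of the symmetric-traceless-tensor spaces for $k=0,1,\ldots,n$. Any $n$-th order tensor $U$ decomposes accordingly, and because each summand carries a rotation conjugate to $\Gamma^k$, each Cartesian component of that summand is, by \eqref{basistrans}, a $\mathfrak{p}$-independent linear combination of the entries $\Gamma^k_{jj'}(\mathfrak{p})$; summing over summands expresses every component of $U(\mathfrak{p})$ in the required form. The main obstacle lies here: one must realise this decomposition explicitly for general $n$, and a self-contained write-up would need to construct the projectors by iterated symmetrisation and successive trace contraction (using the identity \eqref{squaresum} to extract traces), which is combinatorially involved. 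Fortunately \cite{xu_tensors} already sets up exactly this decomposition in the present notation, so the cleanest route is to invoke it and then apply the identification $W^k_i(\mathfrak{i})\cdot W^k_j(\mathfrak{p})=\Gamma^k_{ji}(\mathfrak{p})$ summand by summand.
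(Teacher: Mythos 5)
Your proposal cannot be checked against an in-paper argument, because the paper does not actually prove Proposition \ref{orth_basis_mat}: it is stated without proof, with the blanket remark just above it that ``all the propositions in this section can be found'' in \cite{xu_tensors}. Judged on its own terms, your representation-theoretic route is correct. The identification is right: putting $\mathfrak{p}=\mathfrak{i}$ in \eqref{basistrans} and pairing with the orthonormal basis \eqref{orth_basis} gives $W^k_i(\mathfrak{i})\cdot W^k_j(\mathfrak{p})=\Gamma^k_{ji}(\mathfrak{p})$; the spaces of $k$-th order symmetric traceless tensors carry the pairwise inequivalent $(2k+1)$-dimensional irreducible representations of $SO(3)$, and Schur orthogonality gives $L^2(SO(3),\md\mathfrak{p})$-orthogonality of the entries, hence linear independence. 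The spanning half also works: complete reducibility plus iterated Clebsch--Gordan decomposes $(\mathbb{R}^3)^{\otimes n}$ into copies of these spaces with $k\le n$, and since $W^k_j(\mathfrak{p})=\sum_{j'}\Gamma^k_{jj'}(\mathfrak{p})W^k_{j'}$ with $\mathfrak{p}$-independent coefficients, every component of every irreducible summand of $U(\mathfrak{p})$ is a constant-coefficient combination of the $\Gamma^k_{jj'}(\mathfrak{p})$. Two small caveats. First, the ``main obstacle'' you flag is lighter than you suggest: for the existence statement you only need abstract complete reducibility of the compact-group representation, not explicit projectors, so the appeal to \cite{xu_tensors} is a convenience rather than a necessity. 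Second, with the convention \eqref{basistrans} the map $\mathfrak{s}\mapsto\Gamma^k(\mathfrak{s})$ is actually an anti-homomorphism (equation \eqref{rotmat} as printed contains a typo), and Schur orthogonality is usually stated for complex unitary irreducibles; both issues are harmless here---composition with $\mathfrak{s}\mapsto\mathfrak{s}^{-1}$ preserves Haar measure, and these odd-dimensional real irreducibles are of real type so their complexifications remain irreducible---but a careful write-up should say so rather than quote the classical statements verbatim.
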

\begin{corollary}\label{w2n}
  The functions $W^k_i(\mathfrak{i})\cdot W^k_j(\mathfrak{p})$, for $k=0,1,\ldots,2n$ and $1\le i,j\le 2k+1$, are linearly equivalent to
  \begin{align*}
  &\Big(W^k_i(\mathfrak{i})\cdot W^k_j(\mathfrak{p})\Big)\Big(W^{k'}_{i'}(\mathfrak{i})\cdot W^{k'}_{j'}(\mathfrak{p})\Big),\\ &\qquad 0\le k,k'\le n,\ 1\le i,j\le 2k+1,\ 1\le i',j'\le 2k'+1. 
  \end{align*}
\end{corollary}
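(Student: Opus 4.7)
The plan is to show that the two families span the same subspace of functions on $SO(3)$, by proving both inclusions. The ingredients are Proposition~\ref{orth_basis_mat} (invoked at order $2n$ in one direction and at order $n$ in the other) together with the observation that rotation distributes over tensor products, $\mathfrak{p}\circ(U\otimes V)=(\mathfrak{p}\circ U)\otimes(\mathfrak{p}\circ V)$, and hence over dot products, $(A\otimes B)\cdot(C\otimes D)=(A\cdot C)(B\cdot D)$.

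For the easier inclusion I would rewrite each product as
\[
\bigl(W^k_i(\mathfrak{i})\cdot W^k_j(\mathfrak{p})\bigr)\bigl(W^{k'}_{i'}(\mathfrak{i})\cdot W^{k'}_{j'}(\mathfrak{p})\bigr)=\bigl(W^k_i\otimes W^{k'}_{i'}\bigr)\cdot\bigl(W^k_j\otimes W^{k'}_{j'}\bigr)(\mathfrak{p}),
\]
which is a linear combination of components of the tensor $W^k_j\otimes W^{k'}_{j'}$ of order $k+k'\le 2n$. By Proposition~\ref{orth_basis_mat}, such components lie in the span of $W^m_a(\mathfrak{i})\cdot W^m_b(\mathfrak{p})$ with $m\le 2n$.

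The converse direction is the substantive one. Given $K\le 2n$, I fix a splitting $K=k+k'$ with $k,k'\le n$ (for instance $k=\min(K,n)$, $k'=K-k$); this is exactly where the bound $2n$ enters. Using the canonical identification $(\mathbb{R}^3)^{\otimes K}\cong(\mathbb{R}^3)^{\otimes k}\otimes(\mathbb{R}^3)^{\otimes k'}$, I decompose the fixed tensors as finite sums $W^K_I=\sum_\beta X^\beta\otimes Y^\beta$ and $W^K_J=\sum_\alpha U^\alpha\otimes V^\alpha$ with $X^\beta,U^\alpha$ of order $k$ and $Y^\beta,V^\alpha$ of order $k'$. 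Distributing rotation and inner product yields
\[
W^K_I(\mathfrak{i})\cdot W^K_J(\mathfrak{p})=\sum_{\alpha,\beta}\bigl(X^\beta\cdot U^\alpha(\mathfrak{p})\bigr)\bigl(Y^\beta\cdot V^\alpha(\mathfrak{p})\bigr),
\]
and each factor on the right is a linear combination of components of a tensor of order at most $n$, hence by Proposition~\ref{orth_basis_mat} lies in the span of $W^m_a(\mathfrak{i})\cdot W^m_b(\mathfrak{p})$ with $m\le n$. Summing over $\alpha,\beta$ then realises $W^K_I(\mathfrak{i})\cdot W^K_J(\mathfrak{p})$ as a linear combination of the advertised products. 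The argument is essentially bookkeeping and there is no real obstacle beyond recognising the splitting $K=k+k'$ and using the compatibility of rotation with tensor products to apply Proposition~\ref{orth_basis_mat} twice at the smaller order $n$.
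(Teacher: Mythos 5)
Your proposal is correct and takes essentially the same route as the paper: the key step in both is to split a tensor of order $K\le 2n$ into a sum of tensor products of factors of order at most $n$, use the compatibility of the rotation action and the dot product with tensor products, and apply Proposition~\ref{orth_basis_mat} to each factor. The only difference is that you spell out both inclusions of the claimed linear equivalence (including the easy direction via Proposition~\ref{orth_basis_mat} at order $k+k'\le 2n$), whereas the paper writes out only the substantive one.
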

\begin{proof}
  The functions $W^k_i(\mathfrak{i})\cdot W^k_j(\mathfrak{p})$ for $k=0,1,\ldots,2n$ are linearly equivalent to the components of $U(\mathfrak{p})$ where $U$ takes all the $2n$-th order tensors. The basis of $2n$-th order tensor can be naturally expressed as the tensor product of two $n$-th order tensor, i.e.
  $$
  \bm{e}_{j_1}\otimes\ldots\otimes\bm{e}_{j_{2n}}=(\bm{e}_{j_1}\otimes\ldots\otimes\bm{e}_{j_{n}})\otimes(\bm{e}_{j_{n+1}}\otimes\ldots\otimes\bm{e}_{j_{2n}}). 
  $$
  Thus, $U(\mathfrak{p})$ can be written as the sum of several tensors of the form $V_1(\mathfrak{p})\otimes V_2(\mathfrak{p})$ where $V_1$, $V_2$ are $n$-th order tensors. Notice that the components of $V_1(\mathfrak{p})$ and $V_2(\mathfrak{p})$ can be expressed linearly by $W^k_i(\mathfrak{i})\cdot W^k_j(\mathfrak{p})$ for $k=0,1,\ldots,n$. 
\end{proof}

Finally, we consider the molecular symmetry. 
A rigid molecule might be invariant under a rotation $\mathfrak{s}\in SO(3)$.
If we look at the frame $\mathfrak{p}=(\bm{m}_1,\bm{m}_2,\bm{m}_3)$, the rotation transforms $\mathfrak{p}$ to $\mathfrak{ps}$. 
All such rotations form a subgroup $\mathcal{G}$ of $SO(3)$. 
According to the group $\mathcal{G}$, the space of $n$-th order symmetric traceless tensors can be decomposed into two subspaces.
Let 
\begin{equation}
  \mathbb{A}^{\mathcal{G},n}=\{n\text{-th order }T(\mathfrak{p}): T(\mathfrak{ps})=T(\mathfrak{p}),\ \forall \mathfrak{s}\in \mathcal{G}\},\label{tgset}
\end{equation}
which is the space consisting of all the $n$-th order symmetric traceless tensors $T$ invariant under any rotation in $\mathcal{G}$. 
It is equivalent to require a tensor invariant under the generating elements of $\mathcal{G}$.
The orthogonal complement of this space is characterized by the following proposition. 
\begin{proposition}\label{inv_van}
  The orthogonal complement of the space $\mathbb{A}^{\mathcal{G},n}$ is given by all the $n$-th order symmetric traceless tensors vanishing when averaged over $\mathcal{G}$, 
  \begin{equation}
    (\mathbb{A}^{\mathcal{G},n})^{\perp}=\left\{T(\mathfrak{p}): \frac{1}{\#\mathcal{G}}\sum_{\mathfrak{s}\in\mathcal{G}}T(\mathfrak{ps})=0\right\}. \label{tgcompset}
  \end{equation}
\end{proposition}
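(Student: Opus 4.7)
The plan is to recognize the averaging map $T\mapsto \frac{1}{\#\mathcal{G}}\sum_{\mathfrak{s}\in\mathcal{G}}\mathfrak{s}\circ T$ as an orthogonal projection onto $\mathbb{A}^{\mathcal{G},n}$, so that its kernel is precisely the claimed orthogonal complement. The argument is the standard Reynolds-operator argument from representation theory, adapted to the notation of the paper.

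First I would convert the conditions on functions of $\mathfrak{p}$ to conditions on tensors. Using $T(\mathfrak{ps})=\mathfrak{p}\circ(\mathfrak{s}\circ T)$ and $T(\mathfrak{p})=\mathfrak{p}\circ T$, the membership $T\in\mathbb{A}^{\mathcal{G},n}$ is equivalent to $\mathfrak{s}\circ T=T$ for every $\mathfrak{s}\in\mathcal{G}$, and the vanishing condition in \eqref{tgcompset} is equivalent to $\sum_{\mathfrak{s}\in\mathcal{G}}\mathfrak{s}\circ T=0$ (the extra $\mathfrak{p}\circ$ on the outside is the same invertible linear map for every $\mathfrak{p}$, so it contributes nothing). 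Thus the whole proposition is really a statement in the finite-dimensional space of $n$-th order symmetric traceless tensors.

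Next I would introduce the linear operator $P_{\mathcal{G}}T:=\frac{1}{\#\mathcal{G}}\sum_{\mathfrak{s}\in\mathcal{G}}\mathfrak{s}\circ T$ and check the two properties that characterize an orthogonal projection onto $\mathbb{A}^{\mathcal{G},n}$. Idempotence and the identification of the image with $\mathbb{A}^{\mathcal{G},n}$ follow from the group structure: reindexing $\mathfrak{s}\mapsto\mathfrak{ts}$ shows $\mathfrak{t}\circ P_{\mathcal{G}}T=P_{\mathcal{G}}T$ for all $\mathfrak{t}\in\mathcal{G}$, so $\mathrm{Im}\,P_{\mathcal{G}}\subseteq\mathbb{A}^{\mathcal{G},n}$, while $P_{\mathcal{G}}T=T$ whenever $\mathfrak{s}\circ T=T$ for all $\mathfrak{s}\in\mathcal{G}$. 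For self-adjointness I would plug $\mathfrak{p}_1=\mathfrak{p}_2=\mathfrak{i}$ into \eqref{innrot} to get $(\mathfrak{s}\circ T_1)\cdot(\mathfrak{s}\circ T_2)=T_1\cdot T_2$, hence $(\mathfrak{s}\circ T_1)\cdot T_2=T_1\cdot(\mathfrak{s}^{-1}\circ T_2)$, and then average over $\mathfrak{s}\in\mathcal{G}$ using $\mathfrak{s}^{-1}\in\mathcal{G}$ to conclude $(P_{\mathcal{G}}T_1)\cdot T_2=T_1\cdot(P_{\mathcal{G}}T_2)$. A self-adjoint idempotent on a finite-dimensional inner-product space is the orthogonal projection onto its image, so $\ker P_{\mathcal{G}}=(\mathbb{A}^{\mathcal{G},n})^{\perp}$, and this kernel is exactly the right-hand side of \eqref{tgcompset}.

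The main obstacle is purely bookkeeping: keeping straight the two roles of $T$, one as a tensor and one as the function $\mathfrak{p}\mapsto\mathfrak{p}\circ T$, and making sure the reduction from ``for every $\mathfrak{p}$'' to a tensor-level identity is used correctly when translating both the defining condition of $\mathbb{A}^{\mathcal{G},n}$ and the vanishing condition in \eqref{tgcompset}. Once this translation is made, the invariance of the inner product under the $\circ$-action supplied by \eqref{innrot} does all the remaining work through the standard projection lemma.
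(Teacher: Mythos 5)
Your proof is correct. Note that the paper itself states Proposition \ref{inv_van} without proof, deferring to \cite{xu_tensors}, so there is no in-paper argument to compare against; your Reynolds-operator argument --- reducing the function-level conditions to tensor-level ones via $T(\mathfrak{ps})=\mathfrak{p}\circ(\mathfrak{s}\circ T)$ and then verifying that the group average is a self-adjoint idempotent with image $\mathbb{A}^{\mathcal{G},n}$, using the orthogonality of the action supplied by \eqref{innrot} --- is the standard and complete way to establish the claim for the finite groups the statement implicitly assumes.
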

The above decomposition is useful when we examine the tensors averaged by the density function $\rho(\mathfrak{p})$.
If a rigid molecule possesses the symmetry given by the group $\mathcal{G}$, the density function satisfies $\rho(\mathfrak{ps})=\rho(\mathfrak{p})$ for any $\mathfrak{s}\in \mathcal{G}$.
As a result, those tensors in the space $(\mathbb{A}^{\mathcal{G},n})^{\perp}$ are zero when averaged by $\rho(\mathfrak{p})$. 

\section{Entropy of tensors by minimization\label{oent}}
In this section, we discuss the properties of the entropy term about tensors deduced from minimization. 
We start from formulating the minimization problem. 
The state of a system containing rigid molecules can be described by the density function $\rho(\bm{x},\mathfrak{p})$. 
In this work, we do not consider spatial variations, so we assume that $\rho=\rho(\mathfrak{p})$ only depends on the orientation $\mathfrak{p}$. 
For any function $g(\mathfrak{p})$ on $SO(3)$, we denote its average over the density $\rho(\mathfrak{p})$ as $\langle g(\mathfrak{p})\rangle$, i.e. 
\begin{equation}
  \langle g(\mathfrak{p})\rangle = \int_{SO(3)} g(\mathfrak{p})\rho(\mathfrak{p})\md\mathfrak{p}. \label{avrgdef}
\end{equation}

The free energy about $\rho$ is written as 
\begin{equation}
  F[\rho]=\int \rho\ln\rho \md\mathfrak{p}+F_{\mathrm{i}}[\rho]. 
\end{equation}
The second term in the above represents contribution of molecular interactions. Regardless of what molecule we consider, this term can be expanded, and the resulting terms are given by several symmetric traceless tensors averaged by $\rho$ (see \cite{xu_kernel,RodModel,BentModel} and the references therein). 
Using the notation in \eqref{avrgdef}, these averaged tensors are written as $\langle U_j(\mathfrak{p})\rangle$ with emphasis on what $U_j(\mathfrak{p})$ are averaged. 
We also introduce another notation for the same quantity, $\avrg{U}_j=\langle U_j(\mathfrak{p})\rangle$, to represent the values of these averaged tensors under a certain $\rho$.
In most cases, these two notations are the same but with different emphasis. 
However, there will be cases where they are not identical, which we will point out. 
The free energy is now written in the form 
\begin{equation}
  \int \rho\ln\rho \md\mathfrak{p} + F_{\mathrm{i}}[\avrg{U}_1\ldots,\avrg{U}_k]. 
\end{equation}
If we could write the entropy term as a function of $\avrg{U}_j$, we eventually obtain a free energy about these tensors. 
To accompolish this, the entropy is minimized with the values of $\avrg{U}_j$ fixed. 
In other words, we solve the constrained minimization problem below, 
\begin{align}
  \min \int \rho\ln\rho \md\mathfrak{p},\quad \text{ s.t. }\langle 1\rangle=\int\rho(\mathfrak{p}) \md\mathfrak{p}=1,\quad \langle U_j(\mathfrak{p})\rangle=\int U_j(\mathfrak{p})\rho(\mathfrak{p})\md\mathfrak{p}=\avrg{U}_j. \label{entmingen}
\end{align}
For the tensor $Q$, it is exactly the Bingham closure. 
The solution is unique, as the following proposition is given in \cite{xu_tensors}.
\begin{proposition}\label{unique0}
  Assume there exists a $0\le\rho<+\infty$ such that $\langle U_j(\mathfrak{p})\rangle=\avrg{U}_j$. The problem \eqref{entmingen} has a unique solution of the form 
  \begin{equation}
    \rho(\mathfrak{p})=\frac{1}{Z}\exp\Big(\sum_{j=1}^{k}B_j\cdot U_j(\mathfrak{p})\Big), \label{Boltz}
  \end{equation}
  where $B_j$ are symmetric traceless tensors of the same order as $U_j$, and 
  \begin{equation}
    Z=\int \exp\Big(\sum_{j=1}^{k}B_j\cdot U_j(\mathfrak{p})\Big)\md\mathfrak{p}. \label{PartFun}
  \end{equation}
\end{proposition}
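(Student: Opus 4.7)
The plan is to run the classical Gibbs variational argument, combining the strict convexity of $x \ln x$ with Lagrange multipliers for the affine constraints. Let $\mathcal{C}$ denote the convex set of densities $\rho \geq 0$ on $SO(3)$ satisfying $\int \rho\, \md \mathfrak{p} = 1$ and $\langle U_j(\mathfrak{p})\rangle = \avrg{U}_j$ for $j = 1, \ldots, k$; the feasibility hypothesis says $\mathcal{C} \neq \emptyset$. Since the functional $\mathcal{E}[\rho] = \int \rho \ln \rho\, \md \mathfrak{p}$ is strictly convex on $\mathcal{C}$ (as $x \ln x$ is strictly convex on $[0,\infty)$ with the convention $0 \ln 0 = 0$), and $\mathcal{C}$ is convex, any minimizer is automatically unique; only existence and the explicit form remain to be proved.

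For existence I would use the direct method. Pick a minimizing sequence $\{\rho_n\} \subset \mathcal{C}$. The bound $x \ln x \geq -1/e$ together with the upper bound inherited from any feasible density makes $\int \rho_n \ln \rho_n\, \md \mathfrak{p}$ controlled, and by the de la Vall\'ee-Poussin criterion this yields equi-integrability of $\{\rho_n\}$ in $L^1(SO(3))$. Extract a weak $L^1$ limit $\rho^{\ast} \geq 0$; since each $U_j$ is continuous and bounded on the compact group $SO(3)$, the constraints pass to the limit, so $\rho^{\ast} \in \mathcal{C}$, and weak lower semicontinuity of $\mathcal{E}$ on $L^1$ promotes $\rho^{\ast}$ to a minimizer.

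To obtain the Boltzmann form I would use Lagrange multipliers for the finitely many affine constraints. Formally, the Lagrangian
\begin{equation*}
  \mathcal{L}[\rho;\lambda_0,B_1,\ldots,B_k] = \int \rho \ln \rho\, \md \mathfrak{p} - \lambda_0 \Big(\int \rho\, \md \mathfrak{p} - 1\Big) - \sum_{j=1}^{k} B_j \cdot \Big(\int U_j(\mathfrak{p})\rho\, \md \mathfrak{p} - \avrg{U}_j\Big),
\end{equation*}
with $B_j$ a tensor of the same order as $U_j$, has vanishing first variation at $\rho^{\ast}$. A perturbation $\rho \mapsto \rho + \epsilon \eta$ with $\eta$ of mean zero and vanishing $U_j$-moments yields the pointwise condition $\ln \rho^{\ast} + 1 - \lambda_0 - \sum_j B_j \cdot U_j(\mathfrak{p}) = 0$, so $\rho^{\ast}$ is proportional to $\exp(\sum_j B_j \cdot U_j(\mathfrak{p}))$; normalization absorbs the constant into $Z$ and recovers \eqref{Boltz}--\eqref{PartFun}. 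Because each $U_j$ is symmetric traceless, the contraction $B_j \cdot U_j$ depends only on the symmetric traceless part of $B_j$, so we are free to choose $B_j$ symmetric traceless.

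The main obstacle is producing the Lagrange multipliers rigorously, since the integrand is not smooth at $\rho = 0$ and the admissible space is infinite-dimensional. I would bypass this by a direct duality argument: define $\Phi(B_1,\ldots,B_k) = (\langle U_1\rangle_B,\ldots,\langle U_k\rangle_B)$, where the averages are taken under the Boltzmann density \eqref{Boltz}, and observe that $\Phi$ is the gradient of $\ln Z$ with respect to the tensors $B_j$. The Hessian is the covariance matrix of the $U_j$, which is strictly positive definite on the symmetric traceless subspace whenever the $U_j$ are linearly independent as functions on $SO(3)$ (a Cauchy--Schwarz/variance argument that dovetails with Proposition \ref{orth_basis_mat}). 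Hence $\Phi$ is injective and an open map on its domain; coupled with a coercivity estimate as $|B_j| \to \infty$, this shows $\Phi$ attains every feasible moment vector, giving the required $B_j$ and completing the construction of the unique minimizer.
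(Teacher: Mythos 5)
The paper does not actually prove this proposition --- it is imported verbatim from \cite{xu_tensors} (``The solution is unique, as the following proposition is given in \cite{xu_tensors}''), so there is no in-paper argument to compare against; the relevant precedent is the Ball--Majumdar analysis \cite{ball2010nematic} for the single-$Q$ case. Your route (strict convexity for uniqueness, direct method for existence, duality for the Boltzmann form) is the standard and correct architecture for this statement, and the positive-definiteness of the Hessian of $\ln Z$ is justified exactly as you say via Proposition \ref{orth_basis_mat}.

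Two places in your sketch carry essentially all of the difficulty and are left as assertions. First, the coercivity of the dual functional $B\mapsto \ln Z(B)-\sum_j B_j\cdot\avrg{U}_j$ as $|B|\to\infty$ is precisely where the hypothesis ``there exists $0\le\rho<+\infty$ with $\langle U_j(\mathfrak{p})\rangle=\avrg{U}_j$'' must enter, and you never actually use that hypothesis. The link is: if $\avrg{U}=(\avrg{U}_1,\ldots,\avrg{U}_k)$ lay on the boundary of the attainable moment body, a supporting functional $C$ would force any feasible measure to concentrate on the level set $\{\mathfrak{p}: C\cdot U(\mathfrak{p})=\max\}$, which has Haar measure zero (the components are polynomials in $\mathfrak{p}$), contradicting the existence of an $L^1$ density; interiority of $\avrg{U}$ then gives $\ln Z(B)-\sum_j B_j\cdot\avrg{U}_j\ge \varepsilon|B|-C$ along every ray, hence a minimizing $B$ with $\langle U_j\rangle_B=\avrg{U}_j$. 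Without this step the dual map $\Phi$ being open and injective does not by itself show it hits $\avrg{U}$. Second, having produced $B$, you should close the argument with the Gibbs inequality rather than the formal Euler--Lagrange computation (which, as you note, is ill-defined where $\rho^{\ast}=0$): for any feasible $\rho$,
\begin{equation*}
\int\rho\ln\rho\,\md\mathfrak{p}-\int\rho_B\ln\rho_B\,\md\mathfrak{p}
=\int\rho\ln\frac{\rho}{\rho_B}\,\md\mathfrak{p}
+\sum_j B_j\cdot\big(\avrg{U}_j-\avrg{U}_j\big)\ \ge\ 0,
\end{equation*}
with equality iff $\rho=\rho_B$ a.e. This single identity delivers existence, the Boltzmann form, and uniqueness at once, and makes your separate direct-method/weak-$L^1$ compactness paragraph unnecessary (though it is not wrong). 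With these two repairs your proposal is a complete proof.
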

Let the density be \eqref{Boltz} in the entropy. It becomes 
\begin{equation}
  f_{\mathrm{ent}}(\avrg{U}_1,\ldots,\avrg{U}_k)=\sum_{j=1}^{k}B_j\cdot \avrg{U}_j-\ln Z, \label{entropy0}
\end{equation}
with $Z$ given in \eqref{PartFun}. 
We write it as a function of $\avrg{U}_j$, because $B_j$ are uniquely determined by $\avrg{U}_j$ according to the proposition, and $Z$ is a function of $B_j$. 
Similarly, we denote the density in \eqref{Boltz} as $\rho\big(\mathfrak{p}|\langle U_j(\mathfrak{p})\rangle=\avrg{U}_j\big)$, or in short as $\rho(\mathfrak{p}|\avrg{U}_j)$ if it is unambiguous what tensors are averaged. 




Now let us show some properties of $f_{\mathrm{ent}}$. 
\begin{enumerate}
\item It is strictly convex about the tensors $\avrg{U}_j$. This comes from the strict convexity of $\rho\ln\rho$. 
For any two sets of tensors $(\avrg{U}_{j,1})$ and $(\avrg{U}_{j,2})$, the density function $\tilde{\rho}=\Big(\rho(\mathfrak{p}|\avrg{U}_{j,1})+\rho(\mathfrak{p}|\avrg{U}_{j,2})\Big)/2$ gives the averaged tensors $\langle U_j(\mathfrak{p})\rangle=(\avrg{U}_{j,1}+\avrg{U}_{j,2})/2$. 
Since $f_{\mathrm{ent}}\Big((\avrg{U}_{j,1}+\avrg{U}_{j,2})/2\Big)$ is the minimum value of the entropy with the tensors taking the values $(\avrg{U}_{j,1}+\avrg{U}_{j,2})/2$, we deduce that 
\begin{align*}
f_{\mathrm{ent}}\Big((\avrg{U}_{j,1}+\avrg{U}_{j,2})/2\Big)\le & \int\tilde{\rho}\ln \tilde{\rho}\md\mathfrak{p}\\
\le & \frac{1}{2}\int\Big(\rho(\mathfrak{p}|\avrg{U}_{j,1})\ln \rho(\mathfrak{p}|\avrg{U}_{j,1})+\rho(\mathfrak{p}|\avrg{U}_{j,2})\ln \rho(\mathfrak{p}|\avrg{U}_{j,2})\Big)\md\mathfrak{p}\\
= & \frac{1}{2}\Big(f_{\mathrm{ent}}(\avrg{U}_{j,1})+f_{\mathrm{ent}}(\avrg{U}_{j,2})\Big). 
\end{align*}
The equality holds if and only if $\rho(\mathfrak{p}|\avrg{U}_{j,1})=\rho(\mathfrak{p}|\avrg{U}_{j,2})$, which is equivalent to $\avrg{U}_{j,1}=\avrg{U}_{j,2}$ by Proposition \ref{unique0}. 
\item It is invariant under rotations in two different ways, as we describe below. 
\begin{enumerate}
\item Rotate the values of the averaged tensors, $\avrg{U}_j$, to $\avrg{U}_j(\mathfrak{t})$. We have 
  $$
  f_{\mathrm{ent}}(\avrg{U}_j)= f_{\mathrm{ent}}\big(\avrg{U}_j(\mathfrak{t})\big).
  $$
\item Rotate the tensors to be averaged, $U_j(\mathfrak{p})$, to $U_j(\mathfrak{ps})$. We actually consider the minimization problem with the constraints 
  \begin{equation}
    \langle U_j(\mathfrak{ps})\rangle=\avrg{U}_j.\nonumber
  \end{equation}
  The entropy $f_{\mathrm{ent}}$ is also invariant under the above rotation. 
\end{enumerate}
To show the invariance under rotations, let us write down the density $\rho$ for both cases. 
For the case (a), consider the density 
\begin{equation}
\rho(\mathfrak{p})=\frac{1}{Z}\exp\big(\sum_j B_j(\mathfrak{t})\cdot U_j(\mathfrak{p})\big). \label{density_rot1}
\end{equation}
We can calculate using \eqref{introt} and \eqref{innrot} that 
\begin{align*}
  Z=&\int \exp\big(\sum_j B_j(\mathfrak{t})\cdot U_j(\mathfrak{p})\big)\,\md\mathfrak{p}
  =\int \exp\big(\sum_j B_j\cdot U_j(\mathfrak{t}^{-1}\mathfrak{p})\big)\,\md\mathfrak{p}\\
  =&\int \exp\big(\sum_j B_j\cdot U_j(\mathfrak{p})\,\md\mathfrak{p}\big)\,\md\mathfrak{p}. 
\end{align*}
Therefore, the average is calculated as 
\begin{align*}
  \langle U_i(\mathfrak{p})\rangle=&\frac{1}{Z}\int \exp\big(\sum_j B_j(\mathfrak{t})\cdot U_j(\mathfrak{p})\big)U_i(\mathfrak{p})\,\md\mathfrak{p}\\
  =&\frac{1}{Z}\int \exp\big(\sum_j B_j\cdot U_j(\mathfrak{t}^{-1}\mathfrak{p})\big)U_i(\mathfrak{p})\,\md\mathfrak{p}\\
  =&\frac{1}{Z}\int \exp\big(\sum_j B_j\cdot U_j(\mathfrak{p})\big)U_i(\mathfrak{tp})\,\md\mathfrak{p}\\
  =&\mathfrak{t}\circ\frac{1}{Z}\int \exp\big(\sum_j B_j\cdot U_j(\mathfrak{p})\big)U_i(\mathfrak{p})\,\md\mathfrak{p}\\
  =&\mathfrak{t}\circ\avrg{U}_i. 
\end{align*}
By Proposition \ref{unique0}, $\rho\big(\mathfrak{p}|\avrg{U}_j(\mathfrak{t})\big)$ is exactly given by \eqref{density_rot1}. So we have
\begin{align*}
  f_{\mathrm{ent}}\big(\avrg{U}_j(\mathfrak{t})\big)=&\sum_{j=1}^{k}B_j(\mathfrak{t})\cdot \avrg{U}_j(\mathfrak{t})-\ln Z
  =\sum_{j=1}^{k}B_j\cdot \avrg{U}_j-\ln Z
  =f_{\mathrm{ent}}\big(\avrg{U}_j). 
\end{align*}
For the case (b), a similar procedure yields 
\begin{equation}
\rho\big(\mathfrak{p}|\langle U_j(\mathfrak{ps})\rangle=\avrg{U}_j\big)=\frac{1}{Z}\exp\big(\sum_j B_j\cdot U_j(\mathfrak{ps})\big). \label{density_rot2}
\end{equation}
The invariance is obvious using this density. 
\item We take molecular symmetry into consideration. By the result of \cite{xu_kernel}, when the molecule allows rotations in a group $\mathcal{G}\subseteq SO(3)$, the tensors $U_j$ can only be the invariant tensors, i.e. $U_j(\mathfrak{ps})=U_j(\mathfrak{p})$ for any $\mathfrak{s}\in\mathcal{G}$.
  As a result, the density $\rho(\mathfrak{p}|\avrg{U}_j)$ satisfies 
  \begin{align}
    \rho(\mathfrak{ps}|\avrg{U}_j)=&\frac{1}{Z}\exp\big(\sum_j B_j\cdot U_j(\mathfrak{ps})\big)
    =\frac{1}{Z}\exp\big(\sum_j B_j\cdot U_j(\mathfrak{ps})\big)
    =\rho(\mathfrak{p}|\avrg{U}_j). \label{density_sym}
  \end{align}
  We could calculate the average of any tensor using the density $\rho$. In particular, for a tensor $U\in(\mathbb{A}^{\mathcal{G},l})^{\perp}$, we have $\langle U(\mathfrak{p})\rangle=0$ because of \eqref{density_sym} and Proposition \ref{inv_van}. 
\end{enumerate}

We would like to formulate the entropy in a different way, which is convenient for us to explain the idea of proposing the quasi-entropy. 
In the above, there is no restriction in \eqref{entmingen} on what tensors $U_j$ can be chosen. 
Let us choose a special set of tensors: all the basis symmetric traceless tensors $W^k_j$ up to $m$-th order, i.e. $k=1,\ldots, m$ and $1\le j\le 2k+1$, for some $m$. 
Here, we do not need to include $W^0_1$, because it is a scalar so that $W^0_1(\mathfrak{p})$ is a constant function. 
By minimizing the entropy under the constraints $\langle W^k_j(\mathfrak{p})\rangle=\avrg{W}^k_j$, we determine a density function $\rho(\mathfrak{p}|\avrg{W}^k_j)$, and obtain the entropy 
\begin{equation}
  f_m=f_{\mathrm{ent}}\Big(\avrg{W}^k_j\big|_{k=1}^m\Big). 
\end{equation}
The subscript $m$ of $f_m$ indicates that all the tensors up to $m$-th order are included. 

Now, we go back to the general choice of $U_j$. 
Denote by $n_j$ the order of $U_j$, and suppose $m$ is no less than the maximum of $n_j$. 
Express $U_j$ by a linear combination of $W^{n_j}_{j'}$, 
\begin{equation}
  U_j(\mathfrak{p})=\sum_{j'=1}^{2n_j+1} u_{jj'}W^{n_j}_{j'}(\mathfrak{p}), \label{coefU}
\end{equation}
where $u_{jj'}$ are independent of $\mathfrak{p}$ (cf. \eqref{linexpr0}). 
Then, the averaged tensors satisfy 
\begin{equation}
  \avrg{U}_j=\sum_{j'=1}^{2n_j+1} u_{jj'}\avrg{W}^{n_j}_{j'}. 
\end{equation}
Using this linear relation, we consider the following minimization problem, 
\begin{align}
  \min_{\avrg{W}^k_j} f_{m},\quad \mbox{s.t. }\sum_{j'=1}^{2n_j+1} u_{jj'}\avrg{W}^{n_j}_{j'}=\avrg{U}_j. \label{minprob1}
\end{align}
By Proposition \ref{unique0}, its unique solution is given by $W^k_j(\mathfrak{p})$ averaged by the density $\rho(\mathfrak{p}|\avrg{U}_j)$. Therefore, the entropy $f_{\mathrm{ent}}(\avrg{U}_j)$ can also be given by the minimum value of $f_m$ subject to some linear constraints. 
The three properties discussed above can be restated as follows. 
\begin{enumerate}
\item Strict convexity about $\avrg{U}_j$. 
\item Invariance under rotations. 
  \begin{enumerate}
  \item If we replace $\avrg{U}_j$ with $\avrg{U}_j(\mathfrak{t})$, the minimum value of \eqref{minprob1} is the same. 
  \item We change the constraints into 
    \begin{equation}
      \sum_{j',j''=1}^{2n_j+1} u_{jj'}\Gamma_{j'j''}(\mathfrak{s})\avrg{W}^{n_j}_{j''}=\avrg{U}_j. 
    \end{equation}
    Then, the minimum value of \eqref{minprob1} is also the same. 
    According to \eqref{basistrans}, the above equation just translates the conditions $\langle U_j(\mathfrak{ps})\rangle=\avrg{U}_j$ into the language of the basis $W^{n_j}_{j'}$. 
  \end{enumerate}
\item Assume that for all $j$, we have $\sum_{j'}u_{jj'}W^{n_j}_{j'}(\mathfrak{p})\in \mathbb{A}^{\mathcal{G},n_j}$. For any $l$-th order symmetric traceless tensor $U(\mathfrak{p})\in(\mathbb{A}^{\mathcal{G},l})^{\perp}$ for $l\le m$, let us express it by $W^l_j(\mathfrak{p})$:
  $$
  U(\mathfrak{p})=\sum_{j=1}^{2l+1}u_jW^l_j(\mathfrak{p}). 
  $$
  Then, the minimizer of \eqref{minprob1} must satisfy $\sum_{j=1}u_j\avrg{W}^l_j=0$. 
\end{enumerate}

A few remarks are given on the formulation \eqref{minprob1}. 
It is noticed that only the average tensors $\avrg{U}_j$ and $\avrg{W}^k_j$ are involved. 
The relations between the averaged tensors are given through the coefficients $u_{jj'}$ that are derived from the unaveraged tensors $U_j(\mathfrak{p})$ and $W^k_j(\mathfrak{p})$. 
In other words, we only use $U_j(\mathfrak{p})$ and $W^k_j(\mathfrak{p})$ to figure out the linear relations between the averaged tensors. 
After the coefficients are obtained, we can forget about what tensors are averaged. 
This idea will also be adopted in the next section when constructing the quasi-entropy. 

  
\section{Quasi-entropy\label{qentsec}}

We know from the previous section that whatever the tensors $U_j$ we choose, we can always write down an entropy $f_{\mathrm{ent}}$ as a function of the averaged values $\avrg{U}_j$. 
However, such a function is implicitly defined where integrals on $SO(3)$ are involved. 
It thus brings huge difficulties in both theoretical analysis and numerical simulations when applying to particular systems. 
In this section, we construct an elementary function about $\avrg{U}_j$ as a substitution of $f_{\mathrm{ent}}$. 
We shall prove that the function satisfies the properties we claimed in the previous section. 

Let us consider the components of $U(\mathfrak{p})$ where $U$ takes all the tensors of the order no greater than certain order $n$. 
They are all functions on $SO(3)$. 
By Proposition \ref{orth_basis_mat}, a basis of these functions can be given by 
\begin{equation}
  W^k_j\cdot W^k_{j'}(\mathfrak{p}),\quad 0\le k\le n,\ 1\le j,j'\le 2k+1. \label{basisfun}
\end{equation}
We have mentioned that $k=0$ gives a constant function, which we will deal with separately. 
Let all the remaining functions (i.e. $1\le k\le n$) form a vector $\bm{w}_n(\mathfrak{p})$, where the $n$ in the subscript represents the highest tensor order we choose. 
Its average is denoted by $\langle\bm{w}_n(\mathfrak{p})\rangle=\avrg{\bm{w}}_n$, where the components are given by $W^k_j\cdot \avrg{W}^k_{j'}$. Similar to the previous section, the notation on the left-hand side emphasizes what function is averaged, and the right-hand side represents the value of the averaged function. 

When these functions are averaged by some density function $\rho>0$, the averaged values $\langle\bm{w}_n(\mathfrak{p})\rangle=\avrg{\bm{w}}_n$ shall meet some constraints. 
For any scalar function $g(\mathfrak{p})$, we have $\langle \big(g(\mathfrak{p})\big)^2\rangle\ge \langle g(\mathfrak{p})\rangle^2$ and the equality holds only if $g$ is a constant function. 
Since we have excluded the constant function out of $\bm{w}_n(\mathfrak{p})$, the covariance matrix, 
\begin{equation}
  \langle\bm{w}_n(\mathfrak{p})\big(\bm{w}_n(\mathfrak{p})\big)^t\rangle-\langle\bm{w}_n(\mathfrak{p})\rangle\langle\bm{w}_n(\mathfrak{p})\rangle^t, \label{covorg}
\end{equation}
shall be positive definite. 
Using Corollary \ref{w2n}, the components of the matrix $\bm{w}_n(\mathfrak{p})\big(\bm{w}_n(\mathfrak{p})\big)^t$ can be expressed linearly by $\bm{w}_{2n}(\mathfrak{p})$ together with the constant function. 
Therefore, we can find some matrices $Y_j$ and write 
\begin{equation}
  \bm{w}_n(\mathfrak{p})\big(\bm{w}_n(\mathfrak{p})\big)^t=\sum_{j=1}^{d(2n)} Y_j w_{2n,j}(\mathfrak{p})+Y_0. 
\end{equation}
Here, we use $d(n)=\sum_{j=1}^n(2j+1)^2$ to denote the length of the vector $\bm{w}_n$, $w_{2n,j}$ denotes the $j$-th component of the vector $\bm{w}_{2n}$, and $Y_j$ for $0\le j\le d(2n)$ are constant matrices of the size $d(n)\times d(n)$. 
According to the same convention, we will use $\avrg{w}_{2n,j}$ to denote the $j$-th component of the vector $\avrg{\bm{w}}_{2n}$. 
Thus, the covariance matrix can be written as 
\begin{equation}
  C(\avrg{\bm{w}}_{2n})=\sum_{j=1}^{d(2n)} Y_j\avrg{w}_{2n,j}+Y_0-\avrg{\bm{w}}_n(\avrg{\bm{w}}_n)^t. \label{covmat}
\end{equation}

As we remarked at the end of the previous section, in \eqref{covmat} only the linear relations between $\bm{w}_n(\mathfrak{p})\big(\bm{w}_n(\mathfrak{p})\big)^t$ and $\bm{w}_{2n}(\mathfrak{p})$ are inherited, in terms of $Y_j$, by the averaged values $\avrg{\bm{w}}_{2n}$. 
As a result, the right-hand side of \eqref{covmat} is an expression depending only on $\avrg{\bm{w}}_{2n}$ without further constraints. 
Hence, we need a function that is able to constrain the matrix $C(\avrg{\bm{w}}_{2n})$ positive definite. 
Taking this requirement into consideration, the quasi-entropy we propose is based on the log-determinant of the covariance matrix, 
\begin{equation}
  q_{2n}^{(0)}(\avrg{W}^k_j|_{k\le 2n})=
    -\ln\det C(\avrg{\bm{w}}_{2n}),\quad C(\avrg{\bm{w}}_{2n})\text{ positive definite}. 
  \label{logdet_general}
\end{equation}
We write it as a function of $\avrg{W}^k_j$ since $\avrg{\bm{w}}_{2n}$ gives all the linearly independent components. 
The function $q_{2n}^{(0)}$ is defined only for $\avrg{\bm{w}}_{2n}$ such that $C(\avrg{\bm{w}}_{2n})$ is positive definite.
It is easy to see that for any sequence of positive definite matrix $C_j$, the limit of $j\to +\infty$, if it exists, is positive semi-definite. 
If the limit is a singular matrix, we have 
\begin{equation}
  \lim_{j\to +\infty}-\ln\det C_j=+\infty. \nonumber
\end{equation}
Thus, the function $q_{2n}^{(0)}$ actually gives a barrier function that enforces the positive-definiteness. 

The vector $\bm{w}_n(\mathfrak{p})$ has multiple choices. 
We shall show that the resulting quasi-entropy is independent of how $\bm{w}_n(\mathfrak{p})$ is chosen. 
\begin{proposition}\label{anotherw}
  Choose any vector $\bm{w}'_n(\mathfrak{p})$ consisting of linearly independent functions that, together with the constant function, can linearly express the components of any $U(\mathfrak{p})$ no greater than $n$-th order. 
  The minus log-determinant of the covariance matrix of $\bm{w}'_n(\mathfrak{p})$ is a constant different from $q_{2n}^{(0)}(\avrg{W}^k_j)$. 
\end{proposition}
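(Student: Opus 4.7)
The statement (as parsed) should read that the minus log-determinant for $\bm{w}'_n$ differs from $q_{2n}^{(0)}(\avrg{W}^k_j)$ by a constant. My plan is to reduce the claim to a standard change-of-basis computation for covariance matrices under an affine transformation.

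The first step is to argue that the two vectors $\bm{w}_n(\mathfrak{p})$ and $\bm{w}'_n(\mathfrak{p})$ are related by an affine transformation with an invertible square matrix, that is, there exist a constant invertible matrix $A$ and a constant vector $\bm{b}$ (both independent of $\mathfrak{p}$ and of $\rho$) such that
\begin{equation*}
\bm{w}'_n(\mathfrak{p}) = A\,\bm{w}_n(\mathfrak{p}) + \bm{b}.
\end{equation*}
This is justified as follows. By Proposition \ref{orth_basis_mat} together with Corollary \ref{w2n}, the components of $\bm{w}_n(\mathfrak{p})$ and the constant function $1$ form a basis of the linear space $V_n$ of all functions on $SO(3)$ that appear as components of tensors $U(\mathfrak{p})$ of order at most $n$. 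By hypothesis, the components of $\bm{w}'_n(\mathfrak{p})$ together with $1$ form another basis of the same space $V_n$. In particular both bases have the same cardinality, so $A$ is square; each $w'_{n,i}$ being expressible through the $w_{n,j}$ plus a constant gives the entries of $A$ and $\bm{b}$, and the fact that $\{\bm{w}'_n,1\}$ is again a basis forces $A$ to be invertible (otherwise a nontrivial linear combination of $\bm{w}'_n$ would lie in the span of $1$).

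The second step is to propagate this affine relation to the covariance matrices. Taking expectations one has $\avrg{\bm{w}}'_n = A\avrg{\bm{w}}_n + \bm{b}$, so the centred quantities satisfy $\bm{w}'_n(\mathfrak{p}) - \avrg{\bm{w}}'_n = A\bigl(\bm{w}_n(\mathfrak{p}) - \avrg{\bm{w}}_n\bigr)$. Taking the outer product and averaging,
\begin{equation*}
C'(\avrg{\bm{w}}_{2n}) = A\, C(\avrg{\bm{w}}_{2n})\, A^t,
\end{equation*}
where $C'$ denotes the covariance matrix of $\bm{w}'_n$. Note that $C'$ depends only on $\avrg{\bm{w}}_{2n}$ because the linear relations used to build it out of $\avrg{\bm{w}}_{2n}$ are inherited through $A$ and $\bm{b}$, exactly as in the construction of $C$ itself. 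Taking $-\ln\det$ on both sides,
\begin{equation*}
-\ln\det C'(\avrg{\bm{w}}_{2n}) = -\ln\det C(\avrg{\bm{w}}_{2n}) - 2\ln|\det A|,
\end{equation*}
and the additive term $-2\ln|\det A|$ is independent of $\avrg{\bm{w}}_{2n}$, i.e.\ of $\rho$. This is the claimed equality up to a constant.

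The only real obstacle is verifying that the number of components of $\bm{w}'_n$ equals that of $\bm{w}_n$ and that $A$ is genuinely invertible; but this is immediate once one observes that both $\{w_{n,j}\}\cup\{1\}$ and $\{w'_{n,i}\}\cup\{1\}$ are bases of the same finite-dimensional space $V_n$, so their cardinalities coincide and the transition matrix between them is invertible. All other steps are routine linear algebra.
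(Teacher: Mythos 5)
Your proposal is correct and follows essentially the same route as the paper: both establish the affine relation $\bm{w}'_n=A\bm{w}_n+\bm{b}$ with $A$ invertible, deduce that the covariance matrices satisfy $C'=ACA^t$, and conclude that the minus log-determinants differ by the constant $-\ln\det(A^tA)=-2\ln|\det A|$. Your extra justification of why $A$ is square and nonsingular (both collections together with $1$ being bases of the same space) is a point the paper merely asserts, and your centering shortcut is equivalent to the paper's direct expansion of the outer product.
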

\begin{proof}
The condition implies that the vector $\bm{w}'_n$ can be expressed as a linear transforamtion of $\bm{w}_n$: 
$$
\bm{w}'_n=A\bm{w}_n+\bm{b}, 
$$
where $A$ is a non-singular constant matrix, and $\bm{b}$ is a constant vector.
So we have
\begin{align*}
\bm{w}'_n\bm{w}'_n{}^t=&(A\bm{w}_n+\bm{b})(A\bm{w}_n+\bm{b})^t\\
=&A(\sum_jY_jw_{2n,j}+Y_0)A^t+A\bm{w}_n\bm{b}^t+\bm{b}\bm{w}_n^tA^t+\bm{b}\bm{b}^t. 
\end{align*}
The covariance matrix of $\bm{w}'_n$ is then given by 
\begin{align*}
  &A(\sum_jY_j\avrg{w}_{2n,j}+Y_0)A^t+A\avrg{\bm{w}}_n\bm{b}^t+\bm{b}\avrg{\bm{w}}_n^tA^t+\bm{b}\bm{b}^t
  -(A\avrg{\bm{w}}_n+\bm{b})(A\avrg{\bm{w}}_n+\bm{b})^t\\
  =&A(\sum_jY_j\avrg{w}_{2n,j}+Y_0-\avrg{\bm{w}}_n\avrg{\bm{w}}_n^t)A^t. 
\end{align*}
We could follow the same procedure for $\bm{w}_n$ to define a quasi-entropy as 
the minus log-determinant of the covariance matrix, which can be calculated as 
\begin{align*}
  &-\ln\det \Big(A(\sum_jY_j\avrg{w}_{2n,j}+Y_0-\avrg{\bm{w}}_n\avrg{\bm{w}}_n^t)A^t\Big)\\
  =&-\ln\det (\sum_jY_j\avrg{w}_{2n,j}+Y_0-\avrg{\bm{w}}_n\avrg{\bm{w}}_n^t)-\ln\det (A^tA)\\
  =&q_{2n}^{(0)}(\avrg{W}^k_j)-\ln\det(A^tA). 
\end{align*}
The constant difference is $-\ln\det(A^tA)$ depending only on the transformation matrix between $\bm{w}_n$ and $\bm{w}'_n$, which is independent of $\avrg{W}^k_j$. 
\end{proof}
Notice that distinction by a constant makes no difference in the free energy. 
Therefore, when discussing the quasi-entropy, we could choose the $\bm{w}(\mathfrak{p})$ according to the convenience of our discussion. 

In what follows, we prove that the quasi-entropy given in \eqref{logdet_general} satisfies the properties of the entropy by constrained minimization given in the previous section. 
We begin with a few technical results. 
\begin{lemma}\label{Ylind}
  The matrices $Y_j$ where $j=0,\ldots,d(2n)$ in \eqref{covmat} are linearly independent. 
\end{lemma}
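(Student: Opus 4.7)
My plan is to reduce the linear independence of the matrices $Y_j$ to the linear independence of the scalar functions $\{1, w_{2n,1}(\mathfrak{p}), \ldots, w_{2n,d(2n)}(\mathfrak{p})\}$ and the fact, from Corollary \ref{w2n}, that these functions are spanned by the products $(\bm{w}_n)_p(\bm{w}_n)_q(\mathfrak{p})$. To unify notation I will set $w'_0(\mathfrak{p}) \equiv 1$ and $w'_j(\mathfrak{p}) = w_{2n,j}(\mathfrak{p})$ for $1 \le j \le d(2n)$, so that Proposition \ref{orth_basis_mat} tells us $\{w'_j\}_{j=0}^{d(2n)}$ is linearly independent.

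The first step is to read off what the matrix entries of $Y_j$ actually are. Component-wise, the defining relation becomes
\begin{equation*}
(\bm{w}_n)_p(\mathfrak{p})\,(\bm{w}_n)_q(\mathfrak{p}) = \sum_{j=0}^{d(2n)} (Y_j)_{pq}\, w'_j(\mathfrak{p}),
\end{equation*}
and since $\{w'_j\}$ is a linearly independent family, the scalars $(Y_j)_{pq}$ are the \emph{unique} expansion coefficients of the product $(\bm{w}_n)_p(\bm{w}_n)_q$ in this basis.

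The second step is a duality argument. Suppose $\sum_{j=0}^{d(2n)} c_j Y_j = 0$; componentwise this reads $\sum_j c_j (Y_j)_{pq} = 0$ for every pair $(p,q)$. Define the linear functional $\ell$ on the span $V = \mathrm{span}\{w'_j\}_{j=0}^{d(2n)}$ by $\ell(w'_j) = c_j$. The preceding identity says precisely that $\ell$ vanishes on every product $(\bm{w}_n)_p(\bm{w}_n)_q$.

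The final step invokes Corollary \ref{w2n}: these products span the entire space $V$. Hence $\ell \equiv 0$ on $V$, which forces $c_j = \ell(w'_j) = 0$ for all $j$, yielding the desired linear independence. I do not foresee a genuine obstacle; the only point requiring care is cleanly separating the role of $Y_0$ (coefficient of the constant function) from the $Y_j$ with $j \ge 1$, and making sure that linear independence over the combined index set $\{0, 1, \ldots, d(2n)\}$ follows from Proposition \ref{orth_basis_mat} applied to the full range $0 \le k \le 2n$ rather than the range $1 \le k \le 2n$ used to define $\bm{w}_{2n}$.
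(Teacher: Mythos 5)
Your reduction is the right one, and your duality framing is essentially the paper's argument in dual form: the paper arranges the entries of the $Y_j$ into a $d(n)^2\times\big(d(2n)+1\big)$ coefficient matrix and argues that it has full column rank because its row rank equals $d(2n)+1$. The genuine gap is in your final step. You invoke Corollary \ref{w2n} to conclude that the products $(\bm{w}_n)_p(\bm{w}_n)_q$ span all of $V=\mathrm{span}\{w'_0,\ldots,w'_{d(2n)}\}$, but Corollary \ref{w2n} is stated for the index range $0\le k,k'\le n$. The products in that range with $k=0$ or $k'=0$ are, up to scalars, the constant function and the components of $\bm{w}_n$ themselves; they are \emph{not} of the form $(\bm{w}_n)_p(\bm{w}_n)_q$, since $\bm{w}_n$ excludes the $k=0$ entry. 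So the corollary only tells you that the genuine products \emph{together with} $1$ and the components of $\bm{w}_n$ span $V$, which is strictly weaker than what you need: your functional $\ell$ is only known to vanish on the genuine products, and nothing so far prevents it from being nonzero on, say, a component of $\bm{w}_n$ viewed as an element of $V$.

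This is precisely the point on which the paper spends most of its proof: it establishes the refined statement that Corollary \ref{w2n} still holds with the range $1\le k,k'\le n$, by induction on $n$. For $n=1$ the first-order components are recovered from the antisymmetric parts of the products (e.g.\ $m_{1i}m_{2j}-m_{2i}m_{1j}=\sum_s\epsilon_{ijs}m_{3s}$) and the constant from $\sum_i m_{ji}m_{ji}=1$; for $n>1$, the first row and first column of the block matrix built from $\big(1,(\bm{w}_n)^t\big)$ are expressible through $\bm{w}_{n_1}\big(\bm{w}_{n_1}\big)^t$ for some $n_1<n$ with $2n_1\ge n$, and the induction closes. The caveat you flag at the end concerns only the linear independence of the family $\{w'_j\}$, which is indeed immediate from Proposition \ref{orth_basis_mat}; the spanning refinement is the missing ingredient. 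Once it is supplied, your duality argument completes the proof.
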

\begin{proof}
  For our discussion below, we need to slightly refine Corollary \ref{w2n}. We shall show that Corollary \ref{w2n} still holds if we change the condition $0\le k,k'\le n$ into $1\le k,k'\le n$.
  In other words, we need to show that the components of $\bm{w}_n(\mathfrak{p})\big(\bm{w}_n(\mathfrak{p})\big)^t$ can linearly express the components of $\bm{w}_{2n}(\mathfrak{p})$ and the constant function. 
  Use induction. The case $n=1$ could be verified directly. 
  If $n>1$, Corollary \ref{w2n} actually indicates that the components of the matrix
  $$
  \left(\begin{array}{c}1\\\bm{w}_{n}(\mathfrak{p})\end{array}\right)\Big(1,\big(\bm{w}_n(\mathfrak{p})\big)^t\Big)=\left(
  \begin{array}{cc}
    1 & \bm{w}_n(\mathfrak{p})^t\\
    \bm{w}_n(\mathfrak{p}) & \bm{w}_n(\mathfrak{p})\big(\bm{w}_n(\mathfrak{p})\big)^t
  \end{array}
  \right) 
  $$
  are linearly equivalent to the components of $\bm{w}_{2n}(\mathfrak{p})$ together with the constant function. 
  However, the first row and the first column can be expressed linearly by $\bm{w}_{n_1}(\mathfrak{p})\big(\bm{w}_{n_1}(\mathfrak{p})\big)^t$ where $2n_1\ge n$, which completes the induction since $n_1<n$ if $n>1$. 
  
  Now we are ready to deal with the linear independence of $Y_j$. 
  Rearrange the components of the matrix $Y_j$ into a vector $\bm{y}_j$, and align them as $(\bm{y}_0,\ldots,\bm{y}_{d(2n)})$ to form a new matrix of the size $d(n)^2\times \big(d(2n)+1\big)$.
  This matrix is the coefficient matrix of the functions given by $\bm{w}_n(\mathfrak{p})\big(\bm{w}_n(\mathfrak{p})\big)^t$ under the basis $\bm{w}_{2n}(\mathfrak{p})$ and $1$. 
  The result from the previous paragraph indicates that its row rank equals to $d(2n)+1$. Therefore, it is column full-rank, which is exactly the linearly independence of $Y_j$. 
\end{proof}

\begin{lemma}\label{minuspd}
  Let $V$ be a $k\times k$ symmetric positive definite matrix, and $R$ be a symmetric positive semi-definite matrix of the same size.
  Assume that $V-R$ is also positive definite. Then we have 
  \begin{align*}
    \det(V-R)\le \det V. 
  \end{align*}
  The equality holds only when $R=0$. 
\end{lemma}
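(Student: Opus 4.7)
The plan is to diagonalize the problem by using the symmetric positive definite square root of $V$. First I would write $V = V^{1/2} V^{1/2}$ and factor
\[
V - R = V^{1/2}\bigl(I - V^{-1/2} R V^{-1/2}\bigr) V^{1/2},
\]
so that taking determinants gives $\det(V-R) = \det V \cdot \det(I - S)$, where $S := V^{-1/2} R V^{-1/2}$. The problem then reduces to showing $\det(I-S) \le 1$, with equality only when $S = 0$.

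Next I would collect the spectral properties of $S$. Since $R$ is symmetric positive semi-definite and $V^{-1/2}$ is symmetric, $S$ is symmetric positive semi-definite, so all its eigenvalues $\lambda_i$ satisfy $\lambda_i \ge 0$. The hypothesis that $V - R$ is positive definite forces $I - S$ to be positive definite by the same factorization, which gives $\lambda_i < 1$ for each $i$. Hence every $\lambda_i \in [0, 1)$.

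Finally, I would compute $\det(I - S) = \prod_i (1 - \lambda_i)$. Each factor lies in $(0, 1]$, so the product is at most $1$, which yields $\det(V - R) \le \det V$. For equality, one needs every factor $1 - \lambda_i$ to equal $1$, i.e.\ $\lambda_i = 0$ for all $i$; combined with symmetry this forces $S = 0$, and since $V^{-1/2}$ is invertible this is equivalent to $R = 0$.

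I do not anticipate a real obstacle here; the only small point requiring care is ensuring the equivalence between the positive-definiteness of $V - R$ and that of $I - S$ (so that the eigenvalues of $S$ are strictly less than $1$), which follows directly from the congruence via the invertible $V^{1/2}$. Everything else is a one-line spectral computation.
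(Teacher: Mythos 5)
Your proof is correct, but it takes a genuinely different route from the paper's. The paper first proves the rank-one case $R=\bm{v}\bm{v}^t$ by rotating $\bm{v}$ onto the first coordinate axis and expanding the determinant by cofactors, so that $\det(V-\bm{v}\bm{v}^t)=\det V - aM_{11}$ with $a\ge 0$ and the $(1,1)$ cofactor $M_{11}>0$; it then writes a general positive semi-definite $R$ as a sum of $k$ rank-one terms and applies the rank-one inequality iteratively. Your argument instead conjugates by the symmetric square root $V^{1/2}$ and reduces everything to the eigenvalues of $S=V^{-1/2}RV^{-1/2}$, which lie in $[0,1)$. Both are sound. The paper's route is more elementary (no matrix square root, only cofactor expansion and orthogonal change of basis), but it leaves the equality case of the iterated application somewhat implicit; your route is shorter, makes the equality case transparent (all $\lambda_i=0$ forces $S=0$, hence $R=0$ by invertibility of $V^{1/2}$), and is in fact the same congruence-plus-spectrum technique the paper itself uses for Lemma \ref{convexlogdet}, where $V$ is factored as $LL^t$ and the eigenvalues of $L^{-1}R(L^t)^{-1}$ are examined. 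Your one point of care — that positive definiteness of $V-R$ is equivalent to that of $I-S$ via the congruence $I-S=V^{-1/2}(V-R)V^{-1/2}$ — is handled correctly.
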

\begin{proof}
  We first deal with the case $R=\bm{v}\bm{v}^t$ where $\bm{v}$ is a vector with $k$ components. 
  Suppose $V-\bm{v}\bm{v}^t$ is also positive definite. Then we have 
  \begin{equation}
    \det (V-\bm{v}\bm{v}^t)\le \det V. \label{Vv}
  \end{equation}
  We can choose an orthogonal matrix $T$ such that only the first component of $T\bm{v}$ is nonzero, which we denote as $a$. 
  Denote $\tilde{V}=TVT^t$, and its $(1,1)$ cofactor by $M_{11}>0$.
  Denote by $E_{11}$ the matrix with the $(1,1)$ component equal to one and other components equal to zero. 
  Then we have 
  \begin{equation}
    \det(V-\bm{v}\bm{v}^t)=\det(\tilde{V}-aE_{11})=\det \tilde{V}-aM_{11}\le \det Q.\nonumber
  \end{equation}
  Generally, the positive semi-definite matrix $R$ can be expressed as 
  \begin{align*}
    R=\sum_{i=1}^k \bm{v}_i\bm{v}_i^t. 
  \end{align*}
  Use \eqref{Vv} for $k$ times to conclude the proof. 
\end{proof}

\begin{corollary}\label{diagest}
  Assume that $V$ is positive definite and can be written as 
  \begin{align*}
    V=\left(
    \begin{array}{cc}
      V_1 & R\\
      R^t & V_2
    \end{array}
    \right), 
  \end{align*}
  where $V_1$ and $V_2$ are square matrices.
  Then we have 
  \begin{equation}
    \det V\le \det V_1\det V_2. 
  \end{equation}
  The equality holds only when $R=0$. 
\end{corollary}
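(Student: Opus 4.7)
The plan is to reduce the inequality to Lemma \ref{minuspd} via a Schur complement computation. First I would use the block determinant identity
\begin{equation*}
\det V = \det V_1 \cdot \det(V_2 - R^t V_1^{-1} R),
\end{equation*}
which is valid as soon as $V_1$ is invertible, and here $V_1$ is positive definite because $V$ is (any principal submatrix of a positive definite matrix is positive definite).

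Next I would observe that the Schur complement $S = V_2 - R^t V_1^{-1} R$ is itself positive definite: this is a standard consequence of $V$ being positive definite, seen e.g.\ by writing a vector $(\bm{x}, \bm{y})$ and optimizing out $\bm{x}$. Meanwhile $R^t V_1^{-1} R$ is positive semi-definite because $V_1^{-1}$ is positive definite. Therefore $V_2 = S + R^t V_1^{-1} R$ fits the hypotheses of Lemma \ref{minuspd} with $V_2$ playing the role of the positive definite matrix and $R^t V_1^{-1} R$ the positive semi-definite subtracted part. Applying the lemma yields
\begin{equation*}
\det S = \det(V_2 - R^t V_1^{-1} R) \le \det V_2,
\end{equation*}
and combining with the Schur identity gives $\det V \le \det V_1 \det V_2$.

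For the equality case I would trace back through the chain: equality in Lemma \ref{minuspd} forces $R^t V_1^{-1} R = 0$; since $V_1^{-1}$ is positive definite, this in turn forces $R = 0$ (for any column $\bm{r}$ of $R$, $\bm{r}^t V_1^{-1} \bm{r} = 0$ implies $\bm{r} = 0$). I do not anticipate a real obstacle here — the whole argument is essentially a one-line Schur complement reduction — the only thing to be mildly careful about is checking that the Schur complement is strictly positive definite (not just positive semi-definite) so that Lemma \ref{minuspd}'s hypothesis ``$V - R$ positive definite'' is genuinely met, which follows from $V$ being positive definite rather than merely positive semi-definite.
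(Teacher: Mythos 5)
Your proof is correct and follows essentially the same route as the paper's: block Gauss elimination to get $\det V = \det V_1\det(V_2 - R^tV_1^{-1}R)$, then Lemma \ref{diagest}'s companion Lemma (\ref{minuspd}) applied to the Schur complement, with the equality case traced back to $R=0$. The only cosmetic difference is in the equality step, where you use positive definiteness of $V_1^{-1}$ directly on the columns of $R$ while the paper factors $V_1=LL^t$; both are fine.
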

\begin{proof}
  Use block Gauss elimination,  
  \begin{align*}
    V=\left(
    \begin{array}{cc}
      I & 0\\
      RV_1^{-1} & I
    \end{array}
    \right)
    \left(
    \begin{array}{cc}
      V_1 & R^t\\
      0 & V_2-RV_1^{-1}R^t
    \end{array}
    \right). 
  \end{align*}
  Since both $V_1$ and $V_2$ are positive definite, using Lemma \ref{minuspd}, we deduce that 
  \begin{align*}
    \det V=\det V_1\det(V_2-RV_1R^t)\le \det V_1\det V_2. 
  \end{align*}
  The equality holds only when $RV_1R^t=0$. We could decompose $V_1=LL^t$ where $L$ is nonsingular lower-triangular. It means that $RL(RL)^t=0$, giving $RL=0$, then $R=0$. 
\end{proof}

\begin{lemma}\label{convexlogdet}
  The function $-\ln\det V$ is strictly convex about $V$ in the region of symmetric positive definite matrices. 
\end{lemma}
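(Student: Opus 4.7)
The plan is to reduce strict convexity to a one-variable statement along line segments in the cone of symmetric positive definite matrices, and then compute the second derivative explicitly. Given two distinct SPD matrices $V_1 \ne V_2$ and $t \in [0,1]$, the matrix $V_t = (1-t)V_1 + tV_2$ remains SPD because the SPD cone is convex, so it suffices to show that $\phi(t) = -\ln\det V_t$ satisfies $\phi''(t) > 0$ on $[0,1]$.

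First, I would use the standard identity $\frac{d}{dt}\ln\det V_t = \mathrm{tr}(V_t^{-1}\dot V_t)$, together with $\frac{d}{dt}V_t^{-1} = -V_t^{-1}\dot V_t V_t^{-1}$, to obtain
\begin{equation*}
  \phi''(t) = \mathrm{tr}\bigl(V_t^{-1}\dot V_t V_t^{-1}\dot V_t\bigr),\qquad \dot V_t = V_2 - V_1.
\end{equation*}
Next, I would symmetrize the expression by setting $A = V_t^{-1/2}(V_2-V_1)V_t^{-1/2}$, where $V_t^{-1/2}$ is the SPD square root of $V_t^{-1}$. Since $A$ is symmetric, the trace rearranges as $\phi''(t) = \mathrm{tr}(A^2) = \|A\|_F^2 \ge 0$.

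Finally, for strict convexity I would observe that $\|A\|_F^2 = 0$ forces $A = 0$, and conjugating by the invertible matrix $V_t^{1/2}$ then forces $V_2 - V_1 = 0$, contradicting $V_1 \ne V_2$. Hence $\phi''(t) > 0$ throughout $[0,1]$, and in particular $\phi(1/2) < \tfrac{1}{2}\phi(0) + \tfrac{1}{2}\phi(1)$, which is the strict midpoint convexity; combined with continuity of $-\ln\det$ on the SPD cone, this upgrades to strict convexity in the usual sense.

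I do not expect a real obstacle here: the computation is textbook linear algebra, and the only subtle point is justifying that $V_t$ stays strictly positive definite so that $V_t^{-1/2}$ is well defined, which follows immediately from convexity of the SPD cone. An alternative, essentially equivalent, route would be simultaneous diagonalization: writing $V_1 = LL^t$ and $L^{-1}V_2 L^{-t} = O\Lambda O^t$ with $\Lambda = \mathrm{diag}(\mu_i)$ and all $\mu_i > 0$, one gets $-\ln\det V_t = -\ln\det V_1 - \sum_i \ln\bigl((1-t) + t\mu_i\bigr)$, where each summand is strictly convex in $t$ whenever $\mu_i \ne 1$, and $V_1 \ne V_2$ guarantees at least one such $\mu_i$. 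I would present whichever derivation fits the paper's style best.
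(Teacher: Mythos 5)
Your proposal is correct, but your primary route differs from the paper's. The paper proves strict midpoint convexity directly: writing $V=LL^t$ and letting $\lambda_i$ be the eigenvalues of $L^{-1}R(L^t)^{-1}$, it computes $\ln\det(V+R)+\ln\det(V-R)-2\ln\det V=\sum_i\ln(1-\lambda_i^2)\le 0$, with equality forcing all $\lambda_i=0$ and hence $R=0$. Your main argument instead differentiates twice along a segment and identifies $\phi''(t)=\mathrm{tr}\bigl(V_t^{-1}\dot V_tV_t^{-1}\dot V_t\bigr)=\|V_t^{-1/2}(V_2-V_1)V_t^{-1/2}\|_F^2>0$; this is a clean, standard computation that yields strict convexity on the whole segment at once, without the (mild) extra step of upgrading midpoint convexity via continuity that the paper's argument implicitly relies on. Your sketched alternative --- congruence by $L^{-1}$ to reduce to $-\sum_i\ln((1-t)+t\mu_i)$ --- is essentially the paper's argument reparametrized along the segment, so the two approaches are really the same eigenvalue reduction viewed at the midpoint versus along the whole line. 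One small stylistic remark: once you have $\phi''>0$ on $(0,1)$ you already have strict convexity of $\phi$, so the concluding detour through strict midpoint convexity plus continuity is unnecessary. Either derivation is acceptable; the differential one is arguably tidier, while the paper's version has the advantage that the same decomposition $V=LL^t$ and the inequality $\sum_i\ln(1-\lambda_i^2)\le 0$ are reused verbatim in the proof of Lemma \ref{minuspd} and Theorem \ref{q0prop}.
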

\begin{proof}
Assume that $V$ and $V\pm R$ are all $k\times k$ symmetric positive definite matrices. Then we can write $V=LL^t$, where $L$ is nonsingular. 
Thus, 
\begin{align*}
  \ln\det (V+R)=&\ln\det \big(L(I+L^{-1}R(L^t)^{-1})L^t\big)
  =\ln\det V + \ln\det (I+L^{-1}R(L^t)^{-1})\\
  =&\ln\det V + \sum_{i=1}^{k}\ln(1+\lambda_i), 
\end{align*}
where $\lambda_i$ are the eigenvalues of $L^{-1}R(L^t)^{-1}$.
Similarly, we have
\begin{align*}
  \ln\det (V-R)=\ln\det V + \sum_{i=1}^{k}\ln(1-\lambda_i). 
\end{align*}
Therefore, 
\begin{equation}
  \ln\det (V+R)+\ln\det (V+R)-2\ln\det V=\sum_{i=1}^{k}\ln(1-\lambda_i^2)\le 0. \nonumber
\end{equation}
The equality holds only when $\lambda_i=0$ for all $i$, which implies $R=0$. 
\end{proof}

We are now ready for the properties of $q_{2n}^{(0)}$. 
\begin{theorem}\label{q0prop}
  The quasi-entropy $q_{2n}^{(0)}(\avrg{W}^k_j)$ satisfies the following properties. 
  \begin{enumerate}
  \item The domain of $q_{2n}^{(0)}(\avrg{W}^k_j)$ is convex, and it is strictly convex on the domain. 
  \item It is invariant under rotations of two kinds: 
    \begin{enumerate}
    \item If $\avrg{W}^k_j$ is replaced by $\avrg{W}^k_j(\mathfrak{t})$; 
    \item If $\avrg{W}^k_j$ is replaced by $\sum_{j'=1}^{2k+1}\Gamma^k_{jj'}(\mathfrak{s})\avrg{W}^k_{j'}$. 
    \end{enumerate}
  \end{enumerate}
\end{theorem}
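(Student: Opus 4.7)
The plan is to split the theorem into its four claims and, in each case, reduce to matrix identities through the affine-plus-outer-product structure of \eqref{covmat}. For the domain and strict convexity, the starting point is the identity
\begin{align*}
C\bigl(\lambda\avrg{\bm{w}}^A_{2n}+(1-\lambda)\avrg{\bm{w}}^B_{2n}\bigr)
&=\lambda C(\avrg{\bm{w}}^A_{2n})+(1-\lambda)C(\avrg{\bm{w}}^B_{2n})\\
&\quad+\lambda(1-\lambda)(\avrg{\bm{w}}^A_n-\avrg{\bm{w}}^B_n)(\avrg{\bm{w}}^A_n-\avrg{\bm{w}}^B_n)^t,
\end{align*}
obtained by direct expansion of \eqref{covmat}: only the quadratic piece $-\avrg{\bm{w}}_n\avrg{\bm{w}}_n^t$ produces a nonlinear defect, and that defect is manifestly rank-one positive semi-definite. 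Positive definiteness therefore propagates to convex combinations, giving convexity of the domain. Applying Lemma \ref{minuspd} to absorb the defect and Lemma \ref{convexlogdet} to the linear part yields the convexity inequality for $-\ln\det C$. For strictness, one of two cases must hold: if $\avrg{\bm{w}}^A_n\ne\avrg{\bm{w}}^B_n$ the defect is nonzero and Lemma \ref{minuspd} is strict; if $\avrg{\bm{w}}^A_n=\avrg{\bm{w}}^B_n$ but $\avrg{\bm{w}}^A_{2n}\ne\avrg{\bm{w}}^B_{2n}$, then $C_A-C_B=\sum_l Y_l(\avrg{w}^A_{2n,l}-\avrg{w}^B_{2n,l})$ is nonzero by Lemma \ref{Ylind}, so Lemma \ref{convexlogdet} is strict.

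For rotation invariance (a), the goal is to realize the transformation $\avrg{W}^k_j\mapsto\mathfrak{t}\circ\avrg{W}^k_j$ as an orthogonal conjugation $C\mapsto M(\mathfrak{t})CM(\mathfrak{t})^t$, which automatically preserves $\det C$. Expanding $\mathfrak{t}\circ\avrg{W}^k_j$ via \eqref{basistrans} shows that the components of $\avrg{\bm{w}}_n$ transform as $\widetilde{\avrg{\bm{w}}}_n=M(\mathfrak{t})\avrg{\bm{w}}_n$, where $M(\mathfrak{t})$ is block diagonal in the index pair $(k,j)$ with each block equal to $\Gamma^k(\mathfrak{t})^t$; these blocks are orthogonal since they are change-of-basis matrices between orthonormal bases of the $(2k+1)$-dimensional space of $k$-th order symmetric traceless tensors. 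The same $M(\mathfrak{t})$ also implements the pointwise identity $\bm{w}_n(\mathfrak{tp})=M(\mathfrak{t})\bm{w}_n(\mathfrak{p})$. Squaring this pointwise identity, expanding both sides via $\bm{w}_n\bm{w}_n^t=\sum_j Y_jw_{2n,j}+Y_0$, and equating coefficients of the linearly independent functions $w_{2n,l}(\mathfrak{p})$ and $1$ yields the algebraic identities
$$
M(\mathfrak{t})Y_lM(\mathfrak{t})^t=\sum_j Y_jM'_{jl}(\mathfrak{t}),\qquad M(\mathfrak{t})Y_0M(\mathfrak{t})^t=Y_0,
$$
where $M'(\mathfrak{t})$ is the analogous orthogonal matrix relating $\bm{w}_{2n}(\mathfrak{tp})$ to $\bm{w}_{2n}(\mathfrak{p})$ and hence $\widetilde{\avrg{\bm{w}}}_{2n}$ to $\avrg{\bm{w}}_{2n}$. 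Substituting into \eqref{covmat} collapses to $C(\widetilde{\avrg{\bm{w}}}_{2n})=M(\mathfrak{t})C(\avrg{\bm{w}}_{2n})M(\mathfrak{t})^t$, and since $|\det M(\mathfrak{t})|=1$, the log-determinant is unchanged.

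Part (b) follows by the same template with $\mathfrak{p}\mapsto\mathfrak{ps}$ in place of $\mathfrak{p}\mapsto\mathfrak{tp}$: the identity \eqref{basistrans} directly delivers $\bm{w}_n(\mathfrak{ps})=N(\mathfrak{s})\bm{w}_n(\mathfrak{p})$ with $N(\mathfrak{s})$ orthogonal and block diagonal in $(k,i)$ with blocks $\Gamma^k(\mathfrak{s})$, and the induced map on $\avrg{\bm{w}}_n$ from $\avrg{W}^k_j\mapsto\sum_{j'}\Gamma^k_{jj'}(\mathfrak{s})\avrg{W}^k_{j'}$ is read off without appealing to \eqref{innrot}. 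I expect the main obstacle in the whole argument to be the index bookkeeping in (a): one must verify carefully that the map on $\avrg{\bm{w}}_n$ induced by rotating the averaged tensors coincides exactly with the pointwise transformation $M(\mathfrak{t})$ coming from $\mathfrak{p}\mapsto\mathfrak{tp}$, a matching that boils down to the inner-product invariance \eqref{innrot}. Once this coincidence is pinned down, the theorem reduces to an orthogonal conjugation of a positive definite matrix, whose log-determinant is manifestly preserved.
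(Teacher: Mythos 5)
Your argument is correct, and the convexity half coincides with the paper's: the paper writes the same identity at the midpoint $\lambda=1/2$, namely $\tfrac12\big(C(\avrg{\bm{w}}_{[1],2n})+C(\avrg{\bm{w}}_{[2],2n})\big)=C\big(\tfrac12(\avrg{\bm{w}}_{[1],2n}+\avrg{\bm{w}}_{[2],2n})\big)-\tfrac14(\avrg{\bm{w}}_{[1],n}-\avrg{\bm{w}}_{[2],n})(\avrg{\bm{w}}_{[1],n}-\avrg{\bm{w}}_{[2],n})^t$, then invokes Lemmas \ref{minuspd}, \ref{convexlogdet} and \ref{Ylind} exactly as you do; your two-case analysis of when equality fails is the explicit form of the paper's one-line appeal to Lemma \ref{Ylind}.

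For rotation invariance your route is genuinely different in mechanism, though it lands on the same underlying fact. The paper first reinterprets the rotated values as the averages, under one and the same density, of a rotated basis vector $\bm{w}'_n(\mathfrak{p})$ (built from $W^k_i\cdot W^k_j(\mathfrak{t}^{-1}\mathfrak{p})$ or $W^k_i\cdot W^k_j(\mathfrak{ps}^{-1})$), then invokes Proposition \ref{anotherw} to reduce everything to the single scalar $-\ln\det(A^tA)$, which it kills by integrating $\bm{w}_n\bm{w}_n^t$ against the Haar measure and using \eqref{introt}. You instead never integrate: you identify the transformation matrix concretely as block-diagonal in $\Gamma^k$, prove its orthogonality from the orthonormality of the $W^k_j$ and the dot-product invariance \eqref{innrot}, and push the congruence $C\mapsto MCM^t$ through \eqref{covmat} by matching coefficients of the linearly independent functions $w_{2n,l}(\mathfrak{p})$ and $1$ (Proposition \ref{orth_basis_mat} justifies this matching) to get $MY_lM^t=\sum_jY_jM'_{jl}$ and $MY_0M^t=Y_0$. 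What the paper's route buys is economy: it reuses Proposition \ref{anotherw} and needs no information about the structure of $A$ beyond $|\det A|=1$. What your route buys is a self-contained, integration-free verification and the stronger structural statement that the covariance matrix transforms by an explicit orthogonal conjugation, with the potential coefficient-matching pitfall you flagged resolved exactly as you anticipated, by the coincidence of the pointwise map $\bm{w}_n(\mathfrak{p})\mapsto\bm{w}_n(\mathfrak{tp})$ with the induced map on averages. Both are complete proofs.
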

\begin{proof}
Choose two different values of $\avrg{\bm{w}}_{2n}$, which we denote as $\avrg{\bm{w}}_{[1],2n}, \avrg{\bm{w}}_{[2],2n}$. 
For the matrix $C(\avrg{\bm{w}}_{2n})$, we have 
\begin{align*}
  \frac{1}{2}\Big(C(\avrg{\bm{w}}_{[1],2n})&+C(\avrg{\bm{w}}_{[2],2n})\Big)\\
  =&\sum_{j=1}^{d(2n)} Y_j\frac{1}{2}(\avrg{w}_{[1],2n,j}+\avrg{w}_{[2],2n,j})+Y_0-\frac{1}{4}(\avrg{\bm{w}}_{[1],n}+\avrg{\bm{w}}_{[2],n})(\avrg{\bm{w}}_{[1],n}+\avrg{\bm{w}}_{[2],n})^t\\
  &-\frac{1}{4}(\avrg{\bm{w}}_{[1],n}-\avrg{\bm{w}}_{[2],n})(\avrg{\bm{w}}_{[1],n}-\avrg{\bm{w}}_{[2],n})^t\\
  =&C\big(\frac{1}{2}(\avrg{\bm{w}}_{[1],2n}+\avrg{\bm{w}}_{[2],2n})\big)-\frac{1}{4}(\avrg{\bm{w}}_{[1],n}-\avrg{\bm{w}}_{[2],n})(\avrg{\bm{w}}_{[1],n}-\avrg{\bm{w}}_{[2],n})^t. 
\end{align*}
Therefore, if the two matrices $C(\avrg{\bm{w}}_{[1],2n})$ and $C(\avrg{\bm{w}}_{[2],2n})$ are both positive definite, the matrix $C\big(\frac{1}{2}(\avrg{\bm{w}}_{[1],2n}+\avrg{\bm{w}}_{[2],2n})\big)$ is also positive definite. 
Using Lemma \ref{minuspd} and \ref{convexlogdet}, we deduce that 
\begin{align*}
  &\frac{1}{2}\Big(-\ln\det C(\avrg{\bm{w}}_{[1],2n})-\ln\det C(\avrg{\bm{w}}_{[2],2n})\Big)\\
  \ge & -\ln\det \Big(C\big(\frac{1}{2}(\avrg{\bm{w}}_{[1],2n}+\avrg{\bm{w}}_{[2],2n})\big)-\frac{1}{4}(\avrg{\bm{w}}_{[1],n}-\avrg{\bm{w}}_{[2],n})(\avrg{\bm{w}}_{[1],n}-\avrg{\bm{w}}_{[2],n})^t\Big)\\
  \ge & -\ln\det C\big(\frac{1}{2}(\avrg{\bm{w}}_{[1],2n}+\avrg{\bm{w}}_{[2],2n})\big). 
\end{align*}
By noticing Lemma \ref{Ylind}, the equality holds only if $\avrg{\bm{w}}_{[1],2n}=\avrg{\bm{w}}_{[2],2n}$. 

For the invariance under rotations, 
we go back to the construction of covariance matrix starting from \eqref{covorg}. 
Let us reformulate the conditions (a) and (b) below, to find out what averaged tensors equal to $\avrg{W}^k_j$ after the rotations. 
\begin{enumerate}[(a)]
\item From $\langle W^k_{j}(\mathfrak{p})\rangle=\avrg{W}^k_j(\mathfrak{t})$, we deduce that 
$$
{W}^k_i\cdot\langle W^k_{j}(\mathfrak{t}^{-1}\mathfrak{p})\rangle={W}^k_i(\mathfrak{t})\cdot\langle W^k_{j}(\mathfrak{p})\rangle={W}^k_i(\mathfrak{t})\cdot\avrg{W}^k_j(\mathfrak{t})={W}^k_i\cdot\avrg{W}^k_j. 
$$
\item From $\langle W^k_j(\mathfrak{p})\rangle=\sum_{j'=1}^{2k+1}\Gamma^k_{jj'}(\mathfrak{s})\avrg{W}^k_{j'}$, we use \eqref{basistrans} and \eqref{rotmat} to deduce that 
\begin{align*}
  \langle W^k_{j}(\mathfrak{ps}^{-1})\rangle=&\sum_{j'}\Gamma_{jj'}(\mathfrak{s}^{-1})\langle W^k_{j'}(\mathfrak{p})\rangle
  \\
  =&\sum_{j',j''}\Gamma_{jj'}(\mathfrak{s}^{-1})\Gamma_{j'j''}(\mathfrak{s})\avrg{W}^k_{j''}\\
  =&\sum_{j''}\delta_{jj''}\avrg{W}^k_{j''}\\
  =&\avrg{W}^k_{j}. 
\end{align*}
So we have 
$$
{W}^k_i\cdot\langle W^k_{j}(\mathfrak{ps}^{-1})\rangle={W}^k_i\cdot\avrg{W}^k_j.
$$
\end{enumerate}
For the quasi-entropy, let us use a different $\bm{w}'_n$ for the covariance matrix \eqref{covmat}: 
in the case (a), we let $\bm{w}'_n$ consist of $W^k_i\cdot W^k_j(\mathfrak{t}^{-1}\mathfrak{p})$; in the case (b), we let $\bm{w}'_n$ consist of $W^k_i\cdot W^k_j(\mathfrak{ps}^{-1})$. 
In both cases, we shall notice that $Y_j$ in \eqref{covmat} are identical for $\bm{w}_n$.
Since the averaged values are also identical, by Proposition \ref{anotherw} we only need to check that the constant difference between two definitions is zero. 
For this purpose, we examine the transformation matrix $A$. 
By \eqref{introt}, we have
\begin{align*}
  &\int\big(W^k_i\cdot W^k_j(\mathfrak{t}^{-1}\mathfrak{p})\big)\big(W^{k'}_{i'}\cdot W^{k'}_{j'}(\mathfrak{t}^{-1}\mathfrak{p})\big)\,\md\mathfrak{p}\\
  =&\int\big(W^k_i\cdot W^k_j(\mathfrak{ps}^{-1})\big)\big(W^{k'}_{i'}\cdot W^{k'}_{j'}(\mathfrak{ps}^{-1})\big)\,\md\mathfrak{p}\\
  =&\int\big(W^k_i\cdot W^k_j(\mathfrak{p})\big)\big(W^{k'}_{i'}\cdot W^{k'}_{j'}(\mathfrak{p})\big)\,\md\mathfrak{p}. 
\end{align*}
Hence, we deduce for both cases that 
$$
\int \bm{w}_n(\bm{w}_n)^t\,\md\mathfrak{p}=\int \bm{w}'_n(\bm{w}'_n)^t\,\md\mathfrak{p}=A\left(\int \bm{w}_n(\bm{w}_n)^t\,\md\mathfrak{p}\right)A^t. 
$$
Calculating the determinant, we arrive at $\det(A^tA)=1$. 
The constant difference is then given by $-\ln \det(A^tA)=0$. 

\end{proof}

The function $q_{2n}^{(0)}$ is a function of all the symmetric traceless tensors up to the $2n$-th order.
We could define the quasi-entropy for any choice of symmetric traceless tensors using this function. 
Assume that some tensors $U_j$ of the order $n_j$ are chosen, with their averaged values $\langle U_j(\mathfrak{p})\rangle=\avrg{U}_j$ given. 
In the last paragraph of the previous section, we state the entropy for $\avrg{U}_j$ as a constrained minimization of the entropy $f_m$ about all $\avrg{W}^k_j$ for $k\le m$, where we require $m\ge n_j$. 
Here, we substitute the function $f_m$ with $q_{2n}^{(0)}$ where $2n\ge n_j$, without changing the other settings. 
In other words, we define the quasi-entropy $q_{2n}(\avrg{U}_j)$ about $\avrg{U}_j$ as the solution to the following constrained minimization problem, 
\begin{align}
  q_{2n}(\avrg{U}_j)=\min_{\avrg{W}^k_j} q_{2n}^{(0)}(\avrg{W}^k_j),\quad \mbox{s.t. }\sum_{j'=1}^{2n_j+1} u_{jj'}\avrg{W}^{n_j}_{j'}=\avrg{U}_j. \label{minprob2}
\end{align}
Here, the coefficients $u_{jj'}$ are determined from \eqref{coefU}. 
It shall be noted that the constraints are all linear. 

\begin{proposition}\label{minexist}
  The minimization problem in \eqref{minprob2} possesses a unique solution, if the feasible set, in which the constraints hold and the matrix $C(\avrg{\bm{w}}_{2n})$ is positive definite, is nonempty. 
\end{proposition}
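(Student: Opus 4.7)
The plan is to combine the strict convexity already proved in Theorem~\ref{q0prop} with a barrier-plus-coercivity argument from convex optimization. Uniqueness is automatic from strict convexity once we know the feasible set is convex, so most of the work lies in showing that the infimum is attained.

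First I would verify that the feasible set $\mathcal{F}$ is convex. The linear constraints $\sum_{j'}u_{jj'}\avrg{W}^{n_j}_{j'}=\avrg{U}_j$ cut out an affine subspace. For the condition $C(\avrg{\bm{w}}_{2n})\succ 0$, the midpoint identity
\[
C\bigl(\tfrac{1}{2}(\avrg{\bm{w}}_{[1],2n}+\avrg{\bm{w}}_{[2],2n})\bigr) = \tfrac{1}{2}\bigl(C(\avrg{\bm{w}}_{[1],2n})+C(\avrg{\bm{w}}_{[2],2n})\bigr) + \tfrac{1}{4}(\avrg{\bm{w}}_{[1],n}-\avrg{\bm{w}}_{[2],n})(\avrg{\bm{w}}_{[1],n}-\avrg{\bm{w}}_{[2],n})^t
\]
from the proof of Theorem~\ref{q0prop} exhibits it as a sum of two positive definite matrices plus a positive semidefinite rank-one term, hence positive definite. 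So $\mathcal{F}$ is the intersection of two convex sets and is nonempty by hypothesis. Combined with the strict convexity of $q_{2n}^{(0)}$, this immediately yields uniqueness of any minimizer.

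For existence I would use that $q_{2n}^{(0)}=-\ln\det C$ is a barrier function: wherever $C$ approaches the boundary of the positive definite cone from within $\mathcal{F}$, $\det C\to 0$ and $q_{2n}^{(0)}\to +\infty$. The remaining issue is coercivity at infinity. For any sequence $\avrg{\bm{w}}^{(k)}_{2n}\in\mathcal{F}$ with $t_k:=\|\avrg{\bm{w}}^{(k)}_{2n}\|\to\infty$, pass to a normalized limit direction $\bm{v}=\lim \avrg{\bm{w}}^{(k)}_{2n}/t_k$ and split $\bm{v}=(\bm{v}_n,\bm{v}_{>})$ by order. From \eqref{covmat}, if $\bm{v}_n\neq 0$ then $C^{(k)}/t_k^2\to -\bm{v}_n\bm{v}_n^t$, contradicting $C^{(k)}\succ 0$. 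So $\bm{v}_n=0$, the rank-one term becomes negligible, and the leading matrix $L:=\sum_j Y_j v_j$ is nonzero by Lemma~\ref{Ylind}. Because each pointwise $\bm{w}_n(\mathfrak{p})\bm{w}_n(\mathfrak{p})^t$ is rank one, one shows that $L$ is at best rank deficient as a positive semidefinite matrix, so that either $C^{(k)}$ eventually develops a negative eigenvalue (pulling the sequence against the barrier) or $\det C^{(k)}$ grows slowly enough that $q_{2n}^{(0)}(\avrg{\bm{w}}^{(k)}_{2n})$ stays bounded below along the ray. In either case the sublevel sets of $q_{2n}^{(0)}|_{\mathcal{F}}$ are compact, and the Weierstrass theorem delivers a minimizer, which is unique by strict convexity.

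The main obstacle will be the coercivity step, specifically ruling out the scenario where a recession direction of $\mathcal{F}$ carries $q_{2n}^{(0)}$ to $-\infty$ along a ray on which $C$ remains positive definite. Handling this requires exploiting the explicit structure of the matrices $Y_j$, which encode the product $\bm{w}_n(\mathfrak{p})\bm{w}_n(\mathfrak{p})^t$ in the $\le 2n$-th order basis and so produce only rank-one matrices pointwise; any convex combination with nonnegative spectrum must therefore be rank deficient in a controlled way. Once this structural fact is pinned down, the remaining pieces---convexity of $\mathcal{F}$, uniqueness from strict convexity, and the barrier at the boundary---are direct consequences of results already established in the paper.
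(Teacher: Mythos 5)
Your uniqueness argument (strict convexity on a convex feasible set cut out by linear constraints) and your observation that $-\ln\det C$ acts as a barrier on the singular part of the boundary both match the paper. The genuine gap is in the existence step, and you have correctly located it yourself: the coercivity analysis at infinity is not carried out, and the sketch you give would not close it. Your dichotomy --- ``either $C^{(k)}$ eventually develops a negative eigenvalue or $\det C^{(k)}$ grows slowly enough that $q_{2n}^{(0)}$ stays bounded below along the ray'' --- does not yield compact sublevel sets: in the second branch a minimizing sequence could still escape to infinity with the objective bounded, and the infimum need not be attained. Worse, nothing in the rank-one heuristic for $\bm{w}_n(\mathfrak{p})\bm{w}_n(\mathfrak{p})^t$ prevents the recession matrix $L=\sum_j Y_jv_j$ from being positive \emph{definite}, in which case $\det C^{(k)}\sim t_k^{d(n)}\det L\to+\infty$ and $q_{2n}^{(0)}\to-\infty$ along the ray, so the infimum would be $-\infty$. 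Excluding this scenario is exactly the content you would still have to supply.

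The paper closes this by proving something stronger and simpler: the feasible domain is \emph{bounded}, so there are no recession directions at all. The key structural input is the pointwise identity $\sum_{i}\,|W^k_i|^{-2}\big(W^k_i\cdot W^k_j(\mathfrak{p})\big)^2=|W^k_j|^2$, which says that certain positive linear combinations of the diagonal entries of $\sum_jY_j\avrg{w}_{2n,j}+Y_0$ are constants; since positive definiteness forces each diagonal entry to be positive, each is bounded above, the off-diagonal entries are then bounded via $2\times 2$ principal minors, and Lemma~\ref{Ylind} (column full-rank of the coefficient matrix formed by the $Y_j$) upgrades these entrywise bounds to a bound on $\avrg{\bm{w}}_{2n}$ itself. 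The same diagonal bounds together with Corollary~\ref{diagest} give that $-\ln\det C$ is bounded below, so a minimizing sequence exists, is bounded, and any subsequential limit with singular $C$ is excluded by the barrier. If you want to repair your write-up, replace the recession analysis with this boundedness argument; the rest of your proposal then goes through.
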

\begin{proof}
  Since the constraints are linear, the uniqueness is guaranteed by the strict convexity. In the following, we show the existence. 
  
  First, in the matrix $C(\avrg{\bm{w}}_{2n})$, we show that the diagonal elements of $\sum_jY_j\avrg{w}_{2n,j}+Y_0$ are bounded by some constant.
  Since $\sum_jY_j\avrg{w}_{2n,j}+Y_0$ is positive definite, every diagonal element must be positive.
  Meanwhile, $W^k_j$ is an orthogonal basis of $k$-th order symmetric traceless tensors. So we have 
  $$
  \sum_{i=1}^{2k+1}\frac{1}{|W^k_i|^2}\big(W^k_i\cdot W^k_j(\mathfrak{p})\big)^2=|W^k_j|^2, 
  $$
  which is a constant. 
  Thus, the diagonal elements of $\sum_jY_j\avrg{w}_{2n,j}+Y_0$ can be divided into several subsets.
  For the elements in each subset, a linear combination with positive coefficients is a constant, which is sufficient for the upper-boundedness. 

  As a result, the off-diagonal elements of $\sum_jY_j\avrg{w}_{2n,j}+Y_0$ are also bounded by the constant. This can be realized by noticing that any $2\times 2$ principal minor is positive definite. 
  Besides, based on Corollary \ref{diagest}, we deduce from the upper-boundedness of diagonal elements that $-\ln\det C(\avrg{\bm{w}}_{2n})$ is bounded from below. 
  
  Next, we claim that the domain of the $q_{2n}^{(0)}$ is bounded for $\avrg{\bm{w}}_{2n}$. By the derivation of the previous two paragraphs, every element in the matrix $\sum_jY_j\avrg{w}_{2n,j}+Y_0$ yields a linear inequality about $\avrg{\bm{w}}_{2n}$. Let us put them together, so that they can be expressed in the form 
  $$
  \Big|\sum_{j}B_{ij}\avrg{w}_{2n,j}+a_i\Big|\le b_i, \quad i=1,\ldots,d(n)^2. 
  $$
  where $B$ is some matrix and $a_i,b_i$ are constants. 
  Now, by Lemma \ref{Ylind}, the matrix $B$ is column full-rank. 
  This implies that the region given by the inequalities is bounded. 
  
  Now, let us choose a sequence $\avrg{\bm{w}}_{[l],2n}$ such that 
  \begin{align}
    \lim_{l\to +\infty}-\ln\det C(\avrg{\bm{w}}_{[l],2n})=\inf_{\sum u_{jj'}\avrg{W}^{n_j}_{j'}=\avrg{U}_j}q_{2n}^{(0)}(\avrg{W}^k_j). \label{seqinf}
  \end{align}
  Since $\avrg{\bm{w}}_{[l],2n}$ is bounded, we can choose a subsequence such that $\avrg{\bm{w}}_{[l],2n}$ converges to a $\avrg{\bm{w}}_{[0],2n}$. The linear constraints still hold for $\avrg{\bm{w}}_{[0],2n}$. If $C(\avrg{\bm{w}}_{[0],2n})$ is positive definite, it gives a minimizer. We shall eliminate the possibility of $C(\avrg{\bm{w}}_{[0],2n})$ being a singular positive semi-definite matrix.
  If it is the case, we can deduce that 
  \begin{align*}
    \lim_{l\to +\infty}\det C(\avrg{\bm{w}}_{[l],2n})=\det C(\avrg{\bm{w}}_{[0],2n})=0, 
  \end{align*}
  which contradicts \eqref{seqinf}. 
\end{proof}

\begin{theorem}\label{qprop}
The quasi-entropy \eqref{minprob2} satisfies three properties. 
\begin{enumerate}
\item Its domain is convex, in which the function is strict convex about $\avrg{U}_j$. 
\item Invariance under rotations. 
  \begin{enumerate}
  \item If we replace $\avrg{U}_j$ with $\avrg{U}_j(\mathfrak{t})$, the minimum value is the same. 
  \item We change the constraints into 
    \begin{equation}
      \sum_{j',j''=1}^{2n_j+1} u_{jj'}\Gamma_{j'j''}(\mathfrak{s})\avrg{W}^{n_j}_{j''}=\avrg{U}_j. 
    \end{equation}
  \end{enumerate}
\item Assume that for all $j$, we have $\sum_{j'}u_{jj'}W^{n_j}_{j'}(\mathfrak{p})\in \mathbb{A}^{\mathcal{G},n_j}$. For any $l$-th order symmetric traceless tensor $U(\mathfrak{p})\in(\mathbb{A}^{\mathcal{G},l})^{\perp}$ for $l\le m$, let us express it by $W^l_j$:
  $$
  U(\mathfrak{p})=\sum_{j=1}^{2l+1}u_jW^l_j(\mathfrak{p}). 
  $$
  The minimizer must satisfy $\sum_{j=1}u_j\avrg{W}^l_j=0$. 
\end{enumerate}
\end{theorem}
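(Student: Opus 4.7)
The plan is to lift each property of $q_{2n}^{(0)}$ established in Theorem~\ref{q0prop} to $q_{2n}$ by exploiting the linearity of the constraints in \eqref{minprob2} together with the uniqueness of the minimizer granted by Proposition~\ref{minexist}.

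For strict convexity, I would take two admissible value sets $\avrg{U}_{[1],j}$ and $\avrg{U}_{[2],j}$ with their unique minimizers $\{\avrg{W}^k_{[1],j'}\}$ and $\{\avrg{W}^k_{[2],j'}\}$. Since the constraints are linear in the $\avrg{W}^{n_j}_{j'}$, the midpoint $\{(\avrg{W}^k_{[1],j'}+\avrg{W}^k_{[2],j'})/2\}$ is feasible for the tuple $(\avrg{U}_{[1],j}+\avrg{U}_{[2],j})/2$. Theorem~\ref{q0prop}(1) then upgrades $q_{2n}((\avrg{U}_{[1]}+\avrg{U}_{[2]})/2)\le q_{2n}^{(0)}(\text{midpoint})$ to a strict inequality whenever the two minimizers differ, and they must differ since the constraint map sends distinct minimizers to distinct $\avrg{U}_j$.

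For the two invariance statements, I would show that the minimizer transforms equivariantly with the prescribed rotation, then invoke the corresponding part of Theorem~\ref{q0prop}. In case (a), since $U_j(\mathfrak{tp})=\mathfrak{t}\circ U_j(\mathfrak{p})=\sum_{j'}u_{jj'}\big(\mathfrak{t}\circ W^{n_j}_{j'}\big)$, rotating each $\avrg{W}^k_{j'}$ by $\mathfrak{t}$ sends a feasible point for $\{\avrg{U}_j\}$ to a feasible point for $\{\avrg{U}_j(\mathfrak{t})\}$, and Theorem~\ref{q0prop}(2a) preserves the objective. In case (b), applying the transformation $\avrg{W}^k_{j'}\mapsto\sum_{j''}\Gamma^k_{j'j''}(\mathfrak{s})\avrg{W}^k_{j''}$ maps the two feasible sets onto each other by construction, and Theorem~\ref{q0prop}(2b) again preserves the objective; equality of minima follows in both cases.

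For the symmetry property, I would first argue that the unique minimizer $\{\avrg{W}^k_{[*],j'}\}$ is itself $\mathcal{G}$-invariant, i.e.\ $\sum_{j''}\Gamma^k_{j'j''}(\mathfrak{s})\avrg{W}^k_{[*],j''}=\avrg{W}^k_{[*],j'}$ for every $\mathfrak{s}\in\mathcal{G}$. The key algebraic step is to convert the hypothesis $\sum_{j'}u_{jj'}W^{n_j}_{j'}\in\mathbb{A}^{\mathcal{G},n_j}$ via \eqref{basistrans} and the linear independence of the $W^{n_j}_{j''}$ into the identity $\sum_{j'}u_{jj'}\Gamma^{n_j}_{j'j''}(\mathfrak{s})=u_{jj''}$, which shows that the $\Gamma(\mathfrak{s})$-transform of the minimizer remains feasible; case~(b) of invariance combined with uniqueness then forces this transform to coincide with the minimizer. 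Next, for any $U=\sum_j u_j W^l_j\in(\mathbb{A}^{\mathcal{G},l})^{\perp}$, Proposition~\ref{inv_van} together with linear independence of the $W^l_{j'}$ gives $(1/\#\mathcal{G})\sum_{\mathfrak{s}\in\mathcal{G}}\sum_j u_j\Gamma^l_{jj'}(\mathfrak{s})=0$ for every $j'$, and averaging the invariance relation of the minimizer over $\mathcal{G}$ yields $\sum_j u_j\avrg{W}^l_{[*],j}=0$. I expect the main obstacle to be exactly this bookkeeping step---encoding the tensor-level symmetry $U_j\in\mathbb{A}^{\mathcal{G},n_j}$ as an algebraic identity on the coefficients $u_{jj'}$ and the representation matrices $\Gamma^{n_j}$ so that the $\mathcal{G}$-action interacts compatibly with the linear constraints and with the covariance-matrix construction inherited from Theorem~\ref{q0prop}.
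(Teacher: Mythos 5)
Your proposal is correct, and for the first two properties it follows essentially the same route the paper sketches (lift the convexity and rotation invariance of $q_{2n}^{(0)}$ from Theorem \ref{q0prop} to the constrained minimum via feasibility of midpoints and of rotated points, plus uniqueness from Proposition \ref{minexist}). One wording slip there: you justify that the two minimizers differ by saying the constraint map ``sends distinct minimizers to distinct $\avrg{U}_j$,'' which is injectivity and is neither needed nor true in general; the correct (and trivial) direction is that equal minimizers would force equal constraint values $\avrg{U}_j$, so distinct $\avrg{U}_j$ force distinct minimizers.

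For the third property your argument is genuinely different from the paper's, and it works. You observe that the hypothesis $\sum_{j'}u_{jj'}W^{n_j}_{j'}\in\mathbb{A}^{\mathcal{G},n_j}$ translates via \eqref{basistrans} into $\sum_{j'}u_{jj'}\Gamma^{n_j}_{j'j''}(\mathfrak{s})=u_{jj''}$, so the feasible set of \eqref{minprob2} is invariant under the induced $\Gamma(\mathfrak{s})$-action; combined with Theorem \ref{q0prop}(2b) and uniqueness, the minimizer is a fixed point of the action, and averaging over $\mathcal{G}$ together with Proposition \ref{inv_van} kills the $(\mathbb{A}^{\mathcal{G},l})^{\perp}$ components. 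The paper instead argues computationally: it introduces the complex Wigner-type basis \eqref{parbasis}, sorts the components of $\bm{w}_n$ by the phase index $j\bmod m$ for a cyclic group $\mathcal{C}_m$, notes that the constrained quantities sit only in the diagonal blocks of the covariance matrix, and invokes the Fischer-type inequality of Corollary \ref{diagest} to show the determinant is maximized exactly when the off-block (vanishing-tensor) averages are zero; general point groups are then handled through their cyclic subgroups. Your symmetrization-plus-uniqueness argument is shorter and more conceptual, and it bypasses the complex basis entirely; the paper's block computation, on the other hand, yields concrete structural information about where the vanishing tensors live in $C(\avrg{\bm{w}}_{2n})$, which is reused in Section \ref{qentexpl} to write the explicit quasi-entropies. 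One small caveat applying to your averaging step (and implicitly to Proposition \ref{inv_van} as stated with $\#\mathcal{G}$): for infinite groups such as $\mathcal{C}_{\infty}$ or $\mathcal{D}_{\infty}$ one must either average with respect to Haar measure or replace $\mathcal{G}$ by a finite subgroup whose invariant tensors up to order $2n$ coincide; this is routine but worth a sentence.
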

\begin{proof}
  The domain of $q_{2n}(\avrg{U}_j)$ is given by: there exists $\avrg{W}^k_j$, such that $\sum_{j'}u_{jj'}W^{n_j}_{j'}=\avrg{U}_j$ and $C(\avrg{\bm{w}}_{2n})$ is positive definite.
  In this case, there exists a unique minimizer $\avrg{W}^k_j$ of \eqref{minprob2} according to Proposition \ref{minexist}. 
  Using similar arguments in Theorem \ref{q0prop} for the minimizer $\avrg{W}^k_j$, the first two properties can be established. 



  We focus on the last property. 
  %
  For our discussion afterwards, we choose a particular complex-valued basis 
  \begin{align}
    W^k_j(\mathfrak{p})=\left\{
    \begin{array}{ll}
      p_{k,j}(\bm{m}_1,\mathfrak{i})(\bm{m}_2+\sqrt{-1}\bm{m}_3)^{j-1}, & 1\le j\le k+1;\\
      p_{k,j}(\bm{m}_1,\mathfrak{i})(\bm{m}_2-\sqrt{-1}\bm{m}_3)^{j-k-1}, &k+2\le j\le 2k+1. 
    \end{array}
    \right.\label{parbasis}
  \end{align}
  Here, we recall that a polynomial of $\bm{m}_i$ gives a symmetric tensor.
  The polynomial $p_{k,j}$ is defined from the $(k-j)$-th order Jacobi polynomial with the indices $(j,j)$ \cite{xu_tensors}, which we do not write down explicitly because it is unrelated to our discussion. 
  Using the Euler angles, we could write 
  \begin{align*}
    \bm{m}_2\pm\sqrt{-1}\bm{m}_3=\left(\left(
    \begin{array}{c}
      -\sin\alpha\\
      \cos\alpha\cos\beta\\
      \cos\alpha\sin\beta
    \end{array}
    \right)
    +\sqrt{-1}
    \left(
    \begin{array}{c}
      0\\
      -\sin\beta\\
      \cos\beta
    \end{array}
    \right)\right)
    e^{\mp\sqrt{-1}\gamma}.
  \end{align*}
  Meanwhile, $\bm{m}_1$ only depends on $\alpha$ and $\beta$.
  Therefore, the basis tensors can be written in the form 
  \begin{equation}
    p_{k,j}(\bm{m}_1,\mathfrak{i})(\bm{m}_2\pm\sqrt{-1}\bm{m}_3)^{j}=V^k(\alpha,\beta)e^{\mp\sqrt{-1}j\gamma}, \label{tensor_basis}
  \end{equation}
  where $V^k(\alpha,\beta)$ is a $k$-th order symmetric traceless tensor independent of $\gamma$. 
  Any of its component can be written in the separate variable form $v(\alpha,\beta)e^{\mp\sqrt{-1}j\gamma}$. 
  When constructing the covariance matrix, since the basis is complex-valued, we need to replace all the transpose $\bm{w}_n^t$ in the preceding by conjugate transpose $\bm{w}_n^*$. 

  We first discuss the special case where the group $\mathcal{G}$ is $\mathcal{C}_m=\{\mathfrak{i},\mathfrak{j}_{2\pi/m},\ldots,\mathfrak{j}_{2(n-1)\pi/m}\}$ generated by a rotation by the angle $2\pi/m$.
  Let us look into the invariant tensors of the group. 
  Suppose that the Euler angles of $\mathfrak{p}$ are $(\alpha,\beta,\gamma)$. 
  Then, the Euler angles of $\mathfrak{pj}_{2\pi/m}$ are $(\alpha,\beta,\gamma+2\pi/m)$ (recall the discussion below \eqref{jtheta}). 
  Therefore, for every tensor in $\mathbb{A}^{\mathcal{C}_m,l}$, its components can only have the terms such that $j$ is a multiple of $m$. 
  Correspondingly, for every tensor in $(\mathbb{A}^{\mathcal{C}_m,l})^{\perp}$, its components can only have the terms such that $j$ is not a multiple of $m$. 





  Let us rearrange the components of the vector $\bm{w}_n$ (for any $n$) according to the $j$ in $e^{\sqrt{-1}j\gamma}$ by $j \bmod m$, such that 
  $$
  \bm{w}_{n}=\left(
  \begin{array}{c}
    \bm{w}_n^{(0)}\\
    \bm{w}_n^{(1)}\\
    \vdots\\
    \bm{w}_n^{(m-1)}
  \end{array}
  \right), 
  $$
  where $\bm{w}_n^{(i)}$ consists of all the functions with $j\equiv i\,\pmod m$. 
  The condition of the theorem indicates that in $\avrg{\bm{w}}_{2n}$, only the values of $\avrg{\bm{w}}_{2n}^{(0)}$ are given as constraints in the minimization problem \eqref{minprob2}. 
  In the covariance matrix $C(\avrg{\bm{w}}_{2n})$, by checking the $j$ in the factor $e^{\sqrt{-1}j\gamma}$, we recognize that the components of $\avrg{\bm{w}}_{2n}^{(0)}$ are all located in the diagonal blocks, actually in $\langle\bm{w}_n^{(0)}(\mathfrak{p})\bm{w}_n^{(0)}(\mathfrak{p})^*\rangle-\langle\bm{w}_n^{(0)}(\mathfrak{p})\rangle\langle\bm{w}_n^{(0)}(\mathfrak{p})\rangle^*$ and $\langle\bm{w}_n^{(i)}(\mathfrak{p})\bm{w}_n^{(i)}(\mathfrak{p})^*\rangle$. 
  The minimization in \eqref{minprob2} is about all the other components, $\avrg{\bm{w}}_{2n}^{(i)}$ for $1\le i\le m-1$. 
  For the determinant, Corollary \ref{diagest} gives the estimate 
  \begin{align*}
    \det&\Big(\langle\bm{w}_n(\mathfrak{p})\bm{w}_n(\mathfrak{p})^*\rangle-\langle\bm{w}_n(\mathfrak{p})\rangle\langle\bm{w}_n(\mathfrak{p})\rangle^*\Big)\\
    &\le \prod_{j=0}^{m-1}\det\Big(\langle\bm{w}_n^{(j)}(\mathfrak{p})\bm{w}_n^{(j)}(\mathfrak{p})^*\rangle-\langle\bm{w}_n^{(j)}(\mathfrak{p})\rangle\langle\bm{w}_n^{(j)}(\mathfrak{p})\rangle^*\Big)\\
    &\le \det\Big(\langle\bm{w}_n^{(0)}(\mathfrak{p})\bm{w}_n^{(0)}(\mathfrak{p})^*\rangle-\langle\bm{w}_n^{(0)}(\mathfrak{p})\rangle\langle\bm{w}_n^{(0)}(\mathfrak{p})\rangle^*\Big)\prod_{j=1}^{m-1}\det\langle\bm{w}_n^{(j)}(\mathfrak{p})\bm{w}_n^{(j)}(\mathfrak{p})^*\rangle. 
  \end{align*}
  The equality can indeed be attained, by letting $\avrg{\bm{w}}_n^{(j)}=0$ for $j\ne 0$.
  Note that $\avrg{\bm{w}}_n^{(j)}$ gives all the linearly independent components of the vanishing tensors of $\mathcal{C}_m$. Therefore, we prove the third property for the group $\mathcal{C}_m$. 

  For any cyclic group $\mathcal{G}$, we could choose an $\mathfrak{s}\in SO(3)$ such that $\mathcal{G}=\mathfrak{s}\mathcal{C}_m\mathfrak{s}^{-1}$ for some $m$.
  We verify that $U(\mathfrak{ps})\in \mathbb{A}^{\mathcal{G},l}$ is equivalent to $U(\mathfrak{p})\in \mathbb{A}^{\mathcal{C}_m,l}$. Let $U_1(\mathfrak{p})=U(\mathfrak{ps})$ and $\mathfrak{g}=\mathfrak{sj}_{\theta}\mathfrak{s}^{-1}$ where $\mathfrak{j}_{\theta}\in\mathcal{C}_m$. Then we have 
  \begin{align*}
    U_1(\mathfrak{pg})=U(\mathfrak{pgs})=U(\mathfrak{psj}_{\theta})=U(\mathfrak{ps})=U_1(\mathfrak{p}). 
  \end{align*}
  Hence, $U(\mathfrak{ps})\in (\mathbb{A}^{\mathcal{G},l})^{\perp}$ is equivalent to $U(\mathfrak{p})\in (\mathbb{A}^{\mathcal{C}_m,l})^{\perp}$. 
  Recall that the function $q_{2n}^{(0)}$ is invariant under rotations.
  Hence, we could substitute $W^k_j(\mathfrak{p})$ with $W^k_j(\mathfrak{ps})$ in \eqref{parbasis} and repeat the derivation in the previous paragraph. 
  In this way, we prove the third property for any cyclic group $\mathcal{G}$. 

  For a general point group $\mathcal{G}\subseteq SO(3)$, the above discussion is suitable for any of its cyclic subgroup. Every rotation in $\mathcal{G}$ can generate a cyclic subgroup of $\mathcal{G}$. Therefore, the third property is obtained immediately by noticing  
  \begin{align*}
    \mathbb{A}^{\mathcal{G},l}=\bigcap_{\substack{\mathcal{G}_1\subseteq\mathcal{G}\\\mathcal{G}_1\text{ cyclic}}} \mathbb{A}^{\mathcal{G}_1,l},\qquad
    (\mathbb{A}^{\mathcal{G},l})^{\perp}=\sum_{\substack{\mathcal{G}_1\subseteq\mathcal{G}\\\mathcal{G}_1\text{ cyclic}}} (\mathbb{A}^{\mathcal{G}_1,l})^{\perp}. 
  \end{align*}
\end{proof}



Theorem \ref{qprop}, especially the third property, indicates that although the density is not involved in the quasi-entropy, the symmetry arguments can still be utilized. This will bring great convenience when constructing the quasi-entropy for particular molecular symmetry and choice of tensors. 

\section{Quasi-entropy for different molecular symmetries\label{qentexpl}}
In this section, we use the properties of the quasi-entropy presented in the previous section to write down the quasi-entropy for several molecular symmetries. 
As we have mentioned earlier, for certain molecular symmetry, all the rotations allowed by the symmetry form a subgroup $\mathcal{G}$ of $SO(3)$.
Then, some symmetric traceless tensors $U_j$ invariant under the rotations in $\mathcal{G}$ are chosen, so that the averaged tensors $\langle U_j(\mathfrak{p})\rangle=\avrg{U}_j$ are nonvanishing and act as the order parameters. 
To construct the quasi-entropy for these averaged tensors, we choose the minimum $n$, such that $2n$ is no less than the highest order of $U_j$.
Then, we write down the function $q_{2n}(\avrg{U}_j)$ given by \eqref{minprob2}, 
where we utilize the third property in Theorem \ref{qprop} that vanishing tensors of $\mathcal{G}$ are zero to simplify the expression. 
We will write down explicit expressions for axisymmetries and two-fold symmetries based on $q_2^{(0)}$, 
for tetrahdedral, octahedral, three-fold and four-fold symmetries based on $q_4^{(0)}$. 

As pointed out in the proof of Theorem \ref{qprop}, for the group $\mathfrak{s}\mathcal{G}\mathfrak{s}^{-1}$, the invariant tensors are $U(\mathfrak{ps})$ where $U(\mathfrak{p})\in\mathbb{A}^{\mathcal{G},l}$.
A different choice of $\mathfrak{s}$ can be interpreted by a different posing of the body-fixed frame on a rigid molecule. 
Thus, we only need to choose a specific $\mathfrak{s}$ (a specific posing of the body-fixed frame) that simplifies our presentation.
We shall explain by rod-like molecules, whose symmetry is described by the group $\mathcal{D}_{\infty}$ (a Sch\" oflies notation). 
If we choose the body-fixed frame appropriately such that $\bm{m}_1$ is the rotation axis, the group $\mathcal{D}_{\infty}$ contains $\mathfrak{j}_{\theta}$ in \eqref{jtheta} for any $\theta$. 
It also allows the rotations that bring $\bm{m}_1$ to $-\bm{m}_1$, such as $\mathrm{diag}(-1,1,-1)$ that transforms $(\bm{m}_1,\bm{m}_2,\bm{m}_3)$ into $(-\bm{m}_1,\bm{m}_2,-\bm{m}_3)$. 
In this case, up to second order, the only invariant tensor is $\bm{m}_1^2-\mathfrak{i}/3$. 
Below, we do not explain the choice of body-fixed frame. 
Inseatd, we will write down directly the groups using the Sch\" oflies notations, together with the nonvanishing tensors following the notations in \cite{xu_tensors}. 

We mention that a rigid molecule may also be invariant under improper rotations. 
But they do not affect the invariant tensors, thus are not reflected in the quasi-entropy. 
The improper rotations play a role the interaction terms in the free energy, which we refer to another work for interested readers \cite{xu_kernel}. 

\subsection{Quasi-entropy based on $q^{(0)}_2$}
We write down the function $q_2^{(0)}$ that is constructed based on the vector $\bm{w}_1$. 
Choose $W^1_j(\mathfrak{p})=\bm{m}_j$, so that $\bm{w}_1$ is a nine-component vector consisting of $W^1_i\cdot W^1_j(\mathfrak{p})=\bm{e}_i\cdot\bm{m}_j$.
It can be written as 
$$
\bm{w}_1=\left(
\begin{array}{c}
  \bm{m}_1\\
  \bm{m}_2\\
  \bm{m}_3
\end{array}
\right). 
$$
The function $q_2^{(0)}$ is given by 
\begin{align*}
  &q_2^{(0)}
  =-\ln\det(\langle\bm{w}_1\bm{w}_1^t\rangle-\langle\bm{w}_1\rangle\langle\bm{w}_1\rangle^t)\\
  &=-\ln\det\left(\left(
  \begin{array}{ccc}
    \langle \bm{m}_1^2\rangle & \langle \bm{m}_1\otimes\bm{m}_2\rangle & \langle \bm{m}_1\otimes\bm{m}_3\rangle\\
    \langle \bm{m}_2\otimes\bm{m}_1\rangle & \langle \bm{m}_2^2\rangle & \langle \bm{m}_2\otimes\bm{m}_3\rangle\\
    \langle \bm{m}_3\otimes\bm{m}_1\rangle & \langle \bm{m}_3\otimes\bm{m}_2\rangle & \langle \bm{m}_3^2\rangle
  \end{array}
  \right)
  -\left(
  \begin{array}{c}
    \langle\bm{m}_1\rangle\\
    \langle\bm{m}_2\rangle\\
    \langle\bm{m}_3\rangle
  \end{array}
  \right)(\langle\bm{m}_1\rangle^t,\langle\bm{m}_2\rangle^t,\langle\bm{m}_3\rangle^t)\right). 
\end{align*}
In the above, we actually regard a second order tensor equivalently as a $3\times 3$ matrix, with the $(i,j)$ component located in the $i$-th row, $j$-th column. 
Recall again the monomial notation for symmetric tensors in \eqref{tensor_monomial}, such as $\bm{m}_1\bm{m}_2=\frac{1}{2}(\bm{m}_1\otimes\bm{m}_2+\bm{m}_1\otimes\bm{m}_2)$. 
The function $q_2^{(0)}$ is a function of eight tensors, 
including three first order tensors $\langle \bm{m}_1\rangle$, 
$\langle \bm{m}_2\rangle$,
$\langle \bm{m}_3\rangle$,
and five second order tensors 
$\langle \bm{m}_1^2-\mathfrak{i}/3\rangle$,
$\frac{1}{2}\langle \bm{m}_2^2-\bm{m}_3^2\rangle$,
$\langle \bm{m}_1\bm{m}_2\rangle$,
$\langle \bm{m}_1\bm{m}_3\rangle$,
$\langle \bm{m}_2\bm{m}_3\rangle$.
To recognize this, we express the diagonal blocks as 
\begin{align*}
  \langle \bm{m}_1^2\rangle=&\langle \bm{m}_1^2-\frac{\mathfrak{i}}{3}\rangle+\frac{I_3}{3}\\
  \langle \bm{m}_2^2\rangle=&\frac{1}{2}(\langle \bm{m}_2^2\rangle+\langle \bm{m}_3^2\rangle)+\frac{1}{2}(\langle \bm{m}_2^2\rangle-\langle \bm{m}_3^2\rangle)\\
  =&\frac{1}{3}I_3-\frac{1}{2}\langle \bm{m}_1^2-\frac{\mathfrak{i}}{3}\rangle+\frac{1}{2}\langle \bm{m}_2^2-\bm{m}_3^2\rangle, \\
  \langle \bm{m}_2^2\rangle=&\frac{1}{3}I_3-\frac{1}{2}\langle \bm{m}_1^2-\frac{\mathfrak{i}}{3}\rangle-\frac{1}{2}\langle \bm{m}_2^2-\bm{m}_3^2\rangle, 
\end{align*}
where we use \eqref{squaresum} in the second equality, and $I_3$ represents the $3\times 3$ identity matrix\footnote{Actually $\mathfrak{i}$ and $I_3$ are two different notations for the $3\times 3$ identity matrix. We use $\mathfrak{i}$ when it comes with $\bm{m}_i$, and $I_3$ for averaged tensors.}. 
The off-diagonal blocks can be calculated as 
\begin{align*}
  \langle \bm{m}_1\otimes\bm{m}_2\rangle_{ij}
  &=\langle \bm{m}_1\bm{m}_2\rangle_{ij}+\frac{1}{2}(\langle \bm{m}_1\otimes\bm{m}_2\rangle_{ij}-\langle \bm{m}_2\otimes\bm{m}_1\rangle_{ij})\\
  &=\langle \bm{m}_1\bm{m}_2\rangle_{ij}+\frac{1}{2}\sum_{s=1}^3\epsilon_{ijs}\langle \bm{m}_3\rangle_s, 
\end{align*}
so that $\langle \bm{m}_1\otimes\bm{m}_2\rangle$ can be expressed by a second order symmetric traceless tensor and a first order tensor. 
The other blocks can be calculated similarly. 



\subsubsection{Rod-like molecules, group $\mathcal{D}_{\infty}$}
Up to the second order, the only nonvanishing tensor is $Q=\langle\bm{m}_1^2-\mathfrak{i}/3\rangle$ that is the commonly used one in the literature. 
To derive $q_2(Q)$, according to \eqref{minprob2}, $q_2^{(0)}$ is minimized with $Q$ fixed. 
By Theorem \ref{qprop}, the other seven averaged tensors shall vanish, 
$$
\langle\bm{m}_1\rangle=\langle\bm{m}_2\rangle=\langle\bm{m}_3\rangle=\langle\bm{m}_2^2-\bm{m}_3^2\rangle=\langle\bm{m}_1\bm{m}_3\rangle=\langle\bm{m}_1\bm{m}_2\rangle=\langle\bm{m}_2\bm{m}_3\rangle=0. 
$$
Setting these tensors zero in $q_2^{(0)}$, we obtain the resulting quasi-entropy, 
\begin{equation}
  q_2(Q)=-\ln\det (Q+\frac{I_3}{3})-2\ln\det (\frac{I_3}{3}-\frac{Q}{2}), \label{Dinf}
\end{equation}
which is defined for $Q$ symmetric traceless. 

We point out that this function also satisfies the asymptotics of $f_{\mathrm{ent}}(Q)$ given in \eqref{entmin0}. 
Using the invariance under rotations in Theorem \ref{qprop}, we assume that $Q$ is diagonal, i.e. $Q=\mbox{diag}(q_1,q_2,q_3)$ with $q_1+ q_2+ q_3=0$, otherwise we could substitute $Q$ with $Q(\mathfrak{t})$ for some $\mathfrak{t}$ for diagonalization. 
The function $q_2(Q)$ can now be written as 
\begin{equation}
  \sum_{i=1}^3 -\ln (q_i+1/3)-2\ln(1/3-q_i/2). 
\end{equation}
We can see that the function enforces $-1/3<q_i< 2/3$.
Suppose that $q_1\le q_2\le q_3$. 
When $q_1\to (-1/3)^+$, the leading order of the function is $-\ln(q_1+1/3)$. 
This is consistent with the results for $f_{\mathrm{ent}}(Q)$ in
\cite{ball2010nematic}. 

\subsubsection{Group $\mathcal{C}_{\infty}$}
Up to the second order, the nonvanishing tensors are 
$$
Q^1=\langle\bm{m}_1\rangle, Q^2=\langle\bm{m}_1^2-\frac{1}{3}\mathfrak{i}\rangle.
$$
The other averaged tensors are all zero: $\langle\bm{m}_2\rangle=\langle\bm{m}_3\rangle=\langle\bm{m}_2^2-\bm{m}_3^2\rangle=\langle\bm{m}_1\bm{m}_3\rangle=\langle\bm{m}_1\bm{m}_2\rangle=\langle\bm{m}_2\bm{m}_3\rangle=0$. 
Thus, we can calculate $\langle\bm{m}_2^2\rangle=\langle\bm{m}_3^2\rangle=I_3/3-Q^1/2$, $\langle\bm{m}_1\otimes\bm{m}_3\rangle=\langle\bm{m}_1\otimes\bm{m}_2\rangle=0$, and 
$$
\langle\bm{m}_2\otimes\bm{m}_3\rangle_{ij}=\frac{1}{2}\langle\bm{m}_2\bm{m}_3\rangle_{ij}+\frac{1}{2}\langle\bm{m}_2\otimes\bm{m}_3-\bm{m}_3\otimes\bm{m}_2\rangle_{ij}=\frac{1}{2}\sum_{s=1}^3\epsilon_{ijs}Q^1_{s}.
$$
Denote $S_{ij}=\sum_s\epsilon_{ijs}Q^1_{s}$. 
The quasi-entropy is then written as 
\begin{equation}
  q_2(Q^1,Q^2)=-\ln\det (\frac{I_3}{3}+Q^2-Q^1\otimes Q^1)-\ln\det\left(
  \begin{array}{cc}
    \frac{I_3}{3}-\frac{Q^2}{2} & \frac{S}{2}\\
    -\frac{S}{2} & \frac{I_3}{3}-\frac{Q^2}{2}
  \end{array}
  \right). \label{Cinf}
\end{equation}
It is a function about $Q^1$ and $Q^2$. 
In some cases, only $Q^1$ is kept as the order parameter.
If this is the case, we can define the quasi-entropy for $Q^1$ by minimizing the above function about $Q^2$, 
\begin{equation}
  q_2(Q^1)=\min_{Q^2}q_2(Q^1,Q^2). 
\end{equation}

\subsubsection{Group $\mathcal{C}_2$}
Up to the second order, the nonvanishing tensors are $Q^1$, $Q^2$, $M_1^2=\frac{1}{2}\langle\bm{m}_2^2-\bm{m}_3^2\rangle$, and $M_2^2=\langle\bm{m}_2\bm{m}_3\rangle$. 
Taking $\langle\bm{m}_2\rangle=\langle\bm{m}_3\rangle=\langle\bm{m}_1\bm{m}_3\rangle=\langle\bm{m}_1\bm{m}_2\rangle=0$ into $Q_2^{(0)}$, we arrive at the quasi-entropy  
\begin{equation}
  q_2(Q^1,Q^2,M_1^2,M_2^2)=-\ln\det (\frac{I_3}{3}+Q^2-Q^1Q^1)-\ln\det\left(
  \begin{array}{cc}
    \frac{I_3}{3}-\frac{Q^2}{2}+M_1^2 & M^2_2+\frac{S}{2}\\
    M_2^2-\frac{S}{2} & \frac{I_3}{3}-\frac{Q^2}{2}-M_1^2
  \end{array}
  \right). \label{C2}
\end{equation}
Similar to the group $\mathcal{C}_{\infty}$, we could construct quasi-entropy for part of the four tensors. For example, if a quasi-entropy is needed for $(Q^1,Q^2,M_1^2)$, we could define as the minimization 
\begin{equation}
  q_2(Q^1,Q^2,M_1^2)=\min_{M_2^2}q_2(Q^1,Q^2,M_1^2,M_2^2). \label{C2simp}
\end{equation}

\subsubsection{Group $\mathcal{D}_2$}
Only $Q^2$ and $M_1^2$ are nonvanishing. Thus, the quasi-entropy is 
\begin{equation}
  q_2(Q^2,M_1^2)=-\ln\det (\frac{I_3}{3}+Q^2)-\ln\det(\frac{I_3}{3}-\frac{Q^2}{2}+M_1^2)-\ln\det(\frac{I_3}{3}-\frac{Q^2}{2}-M_1^2). \label{D2}
\end{equation}

\subsection{Quasi-entropy based on $q^{(0)}_4$}
Next, we discuss the quasi-entropy for some groups allowing three- or four-fold rotations. 
Since they involve tensors of third and fourth order, we need to consider $q_4^{(0)}$ from $\bm{w}_2$. 
Choose an orthogonal basis of second order symmetric traceless tensors, 
\begin{align*}
  &W^2_1(\mathfrak{p})=\bm{m}_1^2-\mathfrak{i}/3,\
  W^2_2(\mathfrak{p})=\frac{1}{2}(\bm{m}_2^2-\bm{m}_3^2),\\
  &W^2_3(\mathfrak{p})=\bm{m}_1\bm{m}_2,\
  W^2_4(\mathfrak{p})=\bm{m}_1\bm{m}_3,\
  W^2_5(\mathfrak{p})=\bm{m}_2\bm{m}_3.
\end{align*}
The vector $\bm{w}_2$ has $34$ components, consisting of $\bm{e}_i\cdot\bm{m}_j$ and $W^2_i\cdot W^2_j(\mathfrak{p})$. 

The calculation of covariance matrix is cumbersome, so we here write down the results only and present the derivations in Appendix. 
Let us introduce a notation of a $3\times 5$ matrix from a third order tensor, 
\begin{align*}
  \Psi_3(V)=\left(
  \begin{array}{ccccc}
    V_{111} & \frac{1}{2}(V_{122}-V_{133}) & V_{112} & V_{113} & V_{123}\\
    V_{211} & \frac{1}{2}(V_{222}-V_{233}) & V_{212} & V_{213} & V_{223}\\
    V_{311} & \frac{1}{2}(V_{322}-V_{333}) & V_{312} & V_{313} & V_{323}
  \end{array}
  \right), 
\end{align*}
where $V_{ijk}$ is the $(i,j,k)$-component of the tensor $V$. 
For a fourth order tensor, we define a $5\times 5$ matrix as 
\begin{align*}
  \Psi_4(V)=\left(
  \begin{array}{lllll}
    V_{1111} & \frac{1}{2}(V_{1122}-V_{1133}) & V_{1112} & V_{1113} & V_{1123}\vspace{6pt}\\
    \frac{1}{2}(V_{2211} & \frac{1}{4}(V_{2222}-V_{2233} & \frac{1}{2}(V_{2212} & \frac{1}{2}(V_{2213} & \frac{1}{2}(V_{2223}\\
    \quad-V_{3311}) & \quad-V_{3322}+V_{3333}) & \quad-V_{3312}) & \quad-V_{3313}) & \quad-V_{3323})\vspace{6pt}\\
    V_{1211} & \frac{1}{2}(V_{1222}-V_{1233}) & V_{1212} & V_{1213} & V_{1223}\\
    V_{1311} & \frac{1}{2}(V_{1322}-V_{1333}) & V_{1312} & V_{1313} & V_{1323}\\
    V_{2311} & \frac{1}{2}(V_{2322}-V_{2333}) & V_{2312} & V_{2313} & V_{2323}
  \end{array}
  \right). 
\end{align*}

\subsubsection{Octahedral symmetry, group $\mathcal{O}$}
Up to fourth order, the nonvanishing tensor is 
\begin{align}
  O=&\langle\bm{m}_1^2\bm{m}_2^2+\bm{m}_2^2\bm{m}_3^2+\bm{m}_1^2\bm{m}_3^2-\frac{1}{5}\mathfrak{i}^2\rangle. \label{Oten}
\end{align}
Denote
\begin{align}
  D=\mathrm{diag}(\frac{1}{15},\frac{1}{20},\frac{1}{20},\frac{1}{20},\frac{1}{20}). \label{Dmat}
\end{align}
The quasi-entropy is given by 
\begin{align}
  q_4(O)=-2\ln\det\big(D-\frac{1}{2}\Psi_4(O)\big)-3\ln\det\big(D+\frac{1}{3}\Psi_4(O)\big). \label{qoct}
\end{align}

\subsubsection{Tetrahedral symmetry, group $\mathcal{T}$}
The nonvanishing tensors include the fourth order tensor $O$, and a third order tensor 
\begin{align}
  T=&\langle\bm{m}_1\bm{m}_2\bm{m}_3\rangle. \label{Tten}
\end{align}
Define 
\begin{align}
  (A_1)_{ijkl}=\sum_{s=1}^3\epsilon_{jks}T_{ils}+\epsilon_{ils}T_{jks}. 
\end{align}
The quasi-entropy is given by
\begin{align}
  q_4(T,O)=&-\ln\det\left(
  \begin{array}{cc}
    \frac{4}{3}D-\frac{2}{3}\Psi_4(O) & \frac{1}{2}\Psi_4(A_1)\\
    \frac{1}{2}\Psi_4(A_1)^t & D-\frac{1}{2}\Psi_4(O)
  \end{array}
  \right)\nonumber\\
  &-3\ln\det\left(
  \begin{array}{cc}
    \frac{1}{3}I_3 & \Psi_3(T)\\
    \Psi_3(T)^t & D+\frac{1}{3}\Psi_4(O)
  \end{array}
  \right). \label{qtet}
\end{align}

\subsubsection{Four-fold dihedral group $\mathcal{D}_4$}
The nonvanishing tensors include $Q^2$, and
\begin{align}
  &Q^4=\langle\bm{m}_1^4-\frac{6}{7}\bm{m}_1^2\mathfrak{i}+\frac{3}{35}\mathfrak{i}^2\rangle, \\
  &M^4_1=\langle 8\bm{m}_2^4-8(\mathfrak{i}-\bm{m}_1^2)\bm{m}_2^2+(\mathfrak{i}-\bm{m}_1^2)^2\rangle.
\end{align}
Define
\begin{align}
  &(A_2)_{ijkl}=\delta_{kl}Q^2_{ij}+\delta_{ij}Q^2_{kl}-\frac{3}{4}(\delta_{ik}Q^2_{jl}+\delta_{jl}Q^2_{ik}+\delta_{il}Q^2_{jk}+\delta_{jk}Q^2_{il}),\\
  &(B_1)_{ijk}=\sum_{s=1}^3\epsilon_{ijs}Q^2_{ks}+\epsilon_{iks}Q^2_{js}. 
\end{align}
The quasi-entropy is given by 
\begin{align}
  q_4(Q^2,Q^4,M_1^4)=&-\ln\det(Q^2+\frac{1}{3}I_3)
  -\ln\det\big(\Psi_4(Q^4-\frac{4}{21}A_2-Q^2\otimes Q^2)+\frac{4}{3}D\big)\nonumber\\
  &-\ln\det\big(\Psi_4(\frac{1}{8}Q^4+\frac{1}{8}M_1^4+\frac{1}{7}A_2)+D\big)\nonumber\\
  &-\ln\det\big(\Psi_4(\frac{1}{8}Q^4-\frac{1}{8}M_1^4+\frac{1}{7}A_2)+D\big)\nonumber\\
  &-\ln\det\left(
  \begin{array}{cc}
    -\frac{1}{2}Q^2+\frac{1}{3}I_3 & \frac{1}{4}\Psi_3(B_1)\\
    \frac{1}{4}\Psi_3(B_1)^t & \Psi_4(-\frac{1}{2}Q^4-\frac{1}{14}A_2)+D
  \end{array}
  \right)\nonumber\\
  &-\ln\det\left(
  \begin{array}{cc}
    -\frac{1}{2}Q^2+\frac{1}{3}I_3 & -\frac{1}{4}\Psi_3(B_1)\\
    -\frac{1}{4}\Psi_3(B_1)^t & \Psi_4(-\frac{1}{2}Q^4-\frac{1}{14}A_2)+D
  \end{array}
  \right). 
\end{align}

\subsubsection{Three-fold dihedral group $\mathcal{D}_3$}
The nonvanishing tensors include $Q^2$, $Q^4$, and 
\begin{align}
  &M_1^3=\langle 4\bm{m}_2^3-3(\mathfrak{i}-\bm{m}_1^2)\bm{m}_2\rangle,\\
  &N^4=\langle 4\bm{m}_1\bm{m}_2^2\bm{m}_3-\bm{m}_1(\mathfrak{i}-\bm{m}_1^2)\bm{m}_3\rangle.
\end{align}
Define
\begin{align}
  (B_2)_{ijkl}=\sum_{s=1}^3\epsilon_{iks}(M_1^3)_{jls}+\epsilon_{jls}(M_1^3)_{iks}. 
\end{align}
The quasi-entropy is given by 
\begin{align}
  q_4(Q^2,&M_1^3,Q^4,M_2^4)\nonumber\\
  =&-\ln\det(Q^2+\frac{1}{3}I_3)
  -\ln\det\big(\Psi_4(Q^4-\frac{4}{21}A_2-Q^2\otimes Q^2)+\frac{4}{3}D\big)\nonumber\vspace{6pt}\\
  &-\ln\det\left(
  \begin{array}{ccc}
    -\frac{1}{2}Q^2+\frac{1}{3}I_3 & \frac{1}{4}\Psi_3(M_1^3) & \frac{1}{4}\Psi_3(B_1)\vspace{6pt}\\
    \frac{1}{4}\Psi_3(M_1^3)^t & \Psi_4(\frac{1}{8}Q^4+\frac{1}{7}A_2)+D & \frac{1}{4}\Psi_4(N^4+B_2)\vspace{6pt}\\
    \frac{1}{4}\Psi_3(B_1)^t & \frac{1}{4}\Psi_4(N^4+B_2)^t & \Psi_4(-\frac{1}{2}Q^4-\frac{1}{14}A_2)+D
  \end{array}
  \right)\nonumber\\
  &-\ln\det\left(
  \begin{array}{ccc}
    -\frac{1}{2}Q^2+\frac{1}{3}I_3 & -\frac{1}{4}\Psi_3(B_1) & -\Psi_3(M_1^3)\vspace{6pt}\\
    -\frac{1}{4}\Psi_3(B_1)^t & \Psi_4(-\frac{1}{2}Q^4-\frac{1}{14}A_2)+D & \frac{1}{4}\Psi_4(N^4-B_2)\vspace{6pt}\\
    -\frac{1}{4}\Psi_3(M_1^3)^t & \frac{1}{4}\Psi_4(N^4-B_2)^t & \Psi_4(\frac{1}{8}Q^4+\frac{1}{7}A_2)+D
  \end{array}
  \right).
\end{align}

\section{Applications\label{appl}}
In this section, we discuss some applications of the quasi-entropy to liquid crystals, for which we restrain our attention to spatially homogeneous systems. 
As we have described at the beginning, we consider the free energy consisting of the quasi-entropy and a polynomial. 
Generally, if the free energy is a function of the tensors $\langle U_j(\mathfrak{p})\rangle=\avrg{U}_j$, it is written in the following form, 
\begin{align}
  f(\avrg{U}_j)=\nu q_{2n}(\avrg{U}_j)+f_1(\avrg{U}_j). \label{genfe}
\end{align} 
A coefficient $\nu>0$ is introduced before the quasi-entropy, because for any $\nu$ the function satisfies the properties we discussed.
The value of $\nu$ can be determined under certain principle, as we will discuss below. 
In the second term, we only include the pairwise interaction term given by a quadratic polynomial of $\avrg{U}_j$.

We are interested in the stationary points of the free energy.
The local minimizers represent different phases, and the saddle points might stand for the transition states between the phases.

Several examples will be presented to illustrate the usefulness of quasi-entropy. 
On one hand, we examine some significant cases that have been discussed using the original entropy.
We will see that using the quasi-entropy leads to results consistent with using the original entropy. 
On the other hand, we present some results that have not been reported before. 
These examples give strong evidence on the capability of quasi-entropy to capture the essence of the orginal entropy, as well as its simplicity and versatility. 

\subsection{Rod-like molecules, group $\mathcal{D}_{\infty}$}
Only the tensor $Q=\langle\bm{m}_1^2-\mathfrak{i}/3\rangle$ is involved. Consider the free energy 
\begin{align}
  f(Q)=\nu(-\ln\det Q-2\ln\det\frac{I_3-Q}{2})-\frac{\eta}{2}|Q|^2. \label{oneQ}
\end{align}
For the convenience of analysis, we introduce $R_1=Q+I_3/3=\langle\bm{m}_1^2\rangle$. 

\subsubsection{Axisymmetry of the stationary points}
A well-known mathematical result when using the original entropy is that the stationary points must have two equal eigenvalues \cite{liu2005axial,fatkullin2005critical}. 
We show it when using the quasi-entropy. 
\begin{theorem}
The stationary points of \eqref{oneQ} must have at least two equal eigenvalues. 
\end{theorem}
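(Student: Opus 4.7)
The plan is to reduce the Euler--Lagrange equation to scalar equations in the eigenvalues of $Q$, and then derive a contradiction via Vieta's formulas combined with an elementary inequality. First, I would introduce $R_1 = Q + I_3/3 = \langle\bm{m}_1^2\rangle$, a symmetric positive-definite matrix with $\mathrm{tr}\,R_1 = 1$. Its eigenvalues $r_i$ lie in $(0,1)$ because both $R_1$ and $I_3 - R_1$ must be positive definite on the domain of the quasi-entropy \eqref{Dinf}. By Theorem \ref{qprop}, the quasi-entropy, and hence the whole free energy \eqref{oneQ}, is invariant under $Q \mapsto \mathfrak{t}\circ Q$, so $f$ depends on $Q$ only through its spectrum and without loss of generality $R_1$ may be taken diagonal.

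Second, I would impose the constraint $\mathrm{tr}\,Q = 0$ by a Lagrange multiplier and write the stationary condition as
\[
-\nu R_1^{-1} + 2\nu (I_3 - R_1)^{-1} - \eta R_1 = \mu\,I_3
\]
for some scalar $\mu$. Since every matrix on the left-hand side is a (rational) function of $R_1$, they all commute with $R_1$ and share its eigenbasis. In that basis the matrix equation decouples into the three scalar equations $g(r_i) = \mu$, $i = 1,2,3$, where
\[
g(r) = -\frac{\nu}{r} + \frac{2\nu}{1 - r} - \eta r.
\]

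The third step is the algebraic heart of the argument. Suppose for contradiction that $r_1, r_2, r_3$ are pairwise distinct. Clearing denominators in $g(r) - \mu = 0$ yields a cubic
\[
\eta r^3 + (\mu - \eta) r^2 + (3\nu - \mu) r - \nu = 0
\]
whose roots are precisely the $r_i$. By Vieta, $r_1 + r_2 + r_3 = (\eta - \mu)/\eta$; combined with the trace constraint $r_1 + r_2 + r_3 = 1$ this forces $\mu = 0$. The remaining Vieta relations then simplify to $\sum_{i} 1/r_i = 3$. Hence $(\sum_i r_i)(\sum_i 1/r_i) = 3$, contradicting the Cauchy--Schwarz inequality $(\sum_i r_i)(\sum_i 1/r_i) \geq 9$ for three positive reals, with equality only when $r_1 = r_2 = r_3$. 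Therefore at most two of the $r_i$ can be distinct.

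The main obstacle I anticipate is the clean execution of the algebraic step --- correctly clearing denominators, tracking the signs in Vieta, and spotting that the constraint $\sum r_i = 1$ collapses the problem to the sharp Cauchy--Schwarz bound --- rather than any conceptual difficulty. A secondary point worth handling separately is the degenerate case $\eta = 0$: the cubic then reduces to a quadratic, but in fact $f$ is already strictly convex in $Q$ and its unique stationary point is the isotropic state $Q = 0$, which trivially has three equal eigenvalues.
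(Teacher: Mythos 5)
Your proof is correct and follows essentially the same route as the paper: diagonalize $R_1=Q+I_3/3$ by rotational invariance, write the Euler--Lagrange equation $-\nu/r+2\nu/(1-r)-\eta r=\mu$, clear denominators to get a cubic whose roots are the three eigenvalues, and use the leading Vieta relation together with $\sum r_i=1$ to force $\mu=0$. The only divergence is the final contradiction: the paper solves for $\chi=\eta/\nu=(3x-1)/(x^2(1-x))$ at $\mu=0$ and notes that distinct eigenvalues summing to $1$ would force $\chi$ to be simultaneously positive and negative, whereas you use the remaining Vieta relations to get $\sum 1/r_i=3$ and contradict the AM--HM bound $(\sum r_i)(\sum 1/r_i)\ge 9$; both closings are valid, and your handling of the degenerate case $\eta=0$ (where the cubic collapses) is a welcome extra that the paper leaves implicit.
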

\begin{proof}
  By rotational invariance (Theorem \ref{qprop}), we may assume that $R_1=Q+I_3/3=\mbox{diag}(x_1,x_2,x_3)$ is diagonal, where $0<x_i<1$ and $x_1+x_2+x_3=1$. Denote $\chi=\eta/\nu$. Now the energy is given by 
  \begin{equation}
    \frac{f(Q)}{\nu}=\frac{1}{6}\chi+\sum_{i=1}^3 -\ln x_i-2\ln\frac{1-x_i}{2}-\frac{\chi}{2}x_i^2. 
  \end{equation}
  The Euler-Lagrange equation is written as 
  %
  \begin{equation}
    \frac{\partial F}{\partial x_i}=-\frac{1}{x_i}+\frac{2}{1-x_i}-\chi x_i=\mu, 
  \end{equation}
  where $\mu$ is a Lagrange multiplier that makes $\sum x_i=1$.
  The above equation implies that $x_i$ are the roots of the polynomial 
  \begin{equation}
    \chi x^3-(\chi-\mu)x^2+(3-\mu)x-1=0. 
  \end{equation}
  If $x_i$ are mutually unequal, they must be the three distinct roots of the above cubic polynomial, and shall lie within $(0,1)$. Thus, we must have $1-\mu/\chi=1$, indicating that $\mu=0$. In this case, we derive that 
  \begin{equation}
    \chi=\frac{3x_i-1}{x_i{}^2(1-x_i)}. 
  \end{equation}
  However, there must be one $x_i>1/3$ that makes $\chi>0$, and another $x_i<1/3$ that makes $\chi<0$, which is impossible. 
\end{proof}

\subsubsection{The coefficient $\nu$}
By the above theorem, we may let
\begin{equation}
  Q+\frac{1}{3}I_3=R_1=\mbox{diag}(x,\frac{1-x}{2},\frac{1-x}{2}). \label{uniaxl}
\end{equation}
Under this assumption, the quasi-entropy is given by 
\begin{align*}
  q_2(x)=\nu\left(-\ln x-4\ln\frac{1-x}{2}-4\ln\frac{1+x}{4}\right). 
\end{align*}
We could compare this energy with the one including the original entropy $f_{\mathrm{ent}}(Q)$, to propose a reasonable value of $\nu$. 
When $Q+I_3/3$ is given by \eqref{uniaxl}, the original entropy is given by 
\begin{align*}
  f_{\mathrm{ent}}(x)=bx-\ln Z=bx-\ln\frac{1}{2}\int_{-1}^1\md z\,\exp(bz^2),\quad x=\frac{1}{Z}\cdot\frac{\partial Z}{\partial b}. 
\end{align*}
For both of the original entropy $f_{\mathrm{ent}}$ and the quasi-entropy $q_2$, the minimizer is $x=1/3$. 
We choose $\nu$ such that the second derivative about $x$ of two functions are equal at $x=1/3$. 
Let us calculate 
\begin{align*}
  \frac{\partial f_{\mathrm{ent}}}{\partial x}=&b+x\frac{\partial b}{\partial x}-\frac{1}{Z}\frac{\partial Z}{\partial x}
  =b+x\frac{\partial b}{\partial x}-\frac{1}{Z}\frac{\partial Z}{\partial b}\frac{\partial b}{\partial x}\\
  =&b+x\frac{\partial b}{\partial x}-x\frac{\partial b}{\partial x}\\
  =&b. 
\end{align*}
Thus, when $x=1/3$, we have $b=0$. The second derivative of the original entropy is calculated as 
\begin{align*}
  \frac{\partial^2f_{\mathrm{ent}}}{\partial x^2}\bigg|_{b=0}=&\left(\frac{\partial b}{\partial x}\right)\bigg|_{b=0}
  =\left(\frac{\partial x}{\partial b}\right)^{-1}\bigg|_{b=0}\\
  =&\left(\frac{1}{Z}\cdot\frac{\partial^2 Z}{\partial b^2}-\left(\frac{1}{Z}\cdot\frac{\partial Z}{\partial b}\right)^2\right)^{-1}\bigg|_{b=0}\\
  =&\left(\int_{-1}^1\md z\,z^4\frac{\exp(bz^2)}{Z}-\left(\int_{-1}^1\md z\,z^2\frac{\exp(bz^2)}{Z}\right)^2\right)^{-1}\bigg|_{b=0}\\
  =&\frac{45}{4}. 
\end{align*}
For $q_2(x)$, the second derivative at $x=1/3$ is $\partial^2 q_2/\partial x^2\big|_{x=1/3}=81\nu/4$. 
Hence, $\nu=5/9$ would be a reasonable choice. 

\subsubsection{Stationary points}
With \eqref{uniaxl}, the free energy becomes 
\begin{equation}
  f(x)=\nu(-\ln x-4\ln\frac{1-x}{2}-4\ln\frac{1+x}{4})-\frac{\eta}{2}\Big(x^2+\frac{(1-x)^2}{2}-\frac{1}{3}\Big). \label{oneQ2}
\end{equation}

\begin{theorem}\label{rodsols}
The stationary points of \eqref{oneQ2} are characterized by $\chi=\eta/\nu$. 
Let $\chi_1/2=\frac{3x+1}{x(1-x^2)}\Big|_{x=x^*}\approx 6.532952$, where $x^*$ is the real root of $6x^3+3x^2=1$, and $\chi_2/2=6.75$. 
\begin{enumerate}
\item When $\chi<\chi_1$, there is one solution $1/3$, which is the minimizer. 
\item When $\chi=\chi_1$, there are two solutions: $1/3$ is the minimizer, and $x^*$ is a saddle point. 
\item When $\chi_1<\chi<\chi_2$, there are three solutions $1/3<x_2<x_3$, where $1/3$ and $x_3$ are two local minimizers, and $x_2$ is a local maximizer. 
\item When $\chi=\chi_2$, there are two solutions: $1/3$ is a saddle point, and a minimizer $x_3>1/3$. 
\item When $\chi>\chi_2$, there are three solutions $x_1<1/3<x_3$, where $1/3$ is a local maximizer, and the other two are local minimizers. 
\end{enumerate}
\end{theorem}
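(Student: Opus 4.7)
The plan is to reduce the stationarity condition to a one-variable problem on $(0,1)$ and then read off every case from the behaviour of a single rational function. Direct differentiation of \eqref{oneQ2} gives
\[
\frac{f'(x)}{\nu}=-\frac{1}{x}+\frac{8x}{1-x^{2}}-\frac{\chi(3x-1)}{2}=\frac{(3x-1)(3x+1)}{x(1-x^{2})}-\frac{\chi(3x-1)}{2},
\]
so that $f'(x)=0$ factors as $(3x-1)\bigl[\,2(3x+1)-\chi x(1-x^{2})\bigr]=0$. Hence $x=1/3$ is always stationary, and every other stationary point satisfies $\chi=g(x)$, where
\[
g(x):=\frac{2(3x+1)}{x(1-x^{2})}.
\]

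The next step is to analyse $g$ on the physical interval $(0,1)$. A computation gives $g'(x)=2(6x^{3}+3x^{2}-1)/(x(1-x^{2}))^{2}$. The cubic $p(x):=6x^{3}+3x^{2}-1$ has derivative $6x(3x+1)>0$ on $(0,1)$, with $p(0)<0$ and $p(1)>0$, so it has a unique root $x^{*}\in(0,1)$; moreover $p(1/3)=-12/27<0$ forces $x^{*}>1/3$. Therefore $g$ decreases strictly from $+\infty$ to $\chi_{1}:=g(x^{*})$ on $(0,x^{*})$ and increases strictly back to $+\infty$ on $(x^{*},1)$. Evaluating at $x=1/3$ gives $g(1/3)=27/2=\chi_{2}$, which is precisely the value of $\chi$ at which the two factors of $f'$ share the zero $x=1/3$. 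Counting preimages of $\chi$ under $g$ and merging with the trivial root yields the numbers of solutions $1,2,3,2,3$ claimed in (1)--(5), together with the orderings $1/3<x_{2}<x_{3}$ in (3) and $x_{1}<1/3<x_{3}$ in (5), since every preimage of $\chi$ on $(0,x^{*})$ lies above $1/3$ exactly when $\chi<\chi_{2}$.

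To classify each stationary point, the key identity is
\[
\frac{f''(x)}{\nu}=\frac{g'(x)\,(3x-1)}{2}\qquad\text{whenever }\chi=g(x).
\]
This follows by differentiating the relation $h(x):=-1/x+8x/(1-x^{2})=g(x)(3x-1)/2$ and substituting $3\chi/2=3g(x)/2$. Together with the trivial-branch formula $f''(1/3)/\nu=81/4-3\chi/2=3(\chi_{2}-\chi)/2$, this determines every sign: $x=1/3$ is a local minimum for $\chi<\chi_{2}$ and a local maximum for $\chi>\chi_{2}$, while a nontrivial root is a local minimum when $g'(x)$ and $3x-1$ share a sign and a local maximum otherwise. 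Using $x^{*}>1/3$, case (3) gives $x_{2}\in(1/3,x^{*})$ with $g'(x_{2})<0$ and $3x_{2}-1>0$, hence a local max, while $x_{3}>x^{*}$ gives a local min; case (5) gives $x_{1}<1/3<x^{*}$ with both factors negative, hence a local min.

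The main obstacle is the two degenerate values $\chi=\chi_{1}$ and $\chi=\chi_{2}$, at which the relevant factor in $f''/\nu$ vanishes. I would handle them by one further differentiation. At $\chi=\chi_{1}$ and $x=x^{*}$, $f'''(x^{*})/\nu=g''(x^{*})(3x^{*}-1)/2>0$ since $x^{*}$ is a strict interior minimum of $g$ and $3x^{*}-1>0$, so $x^{*}$ is an inflection and thus a saddle in the reduced variable. At $\chi=\chi_{2}$ and $x=1/3$, the factorisation shows $f'(x)\sim -C(x-1/3)^{2}$ with $C>0$ as $x\to 1/3$, identifying $1/3$ as a saddle. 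The remaining arithmetic is routine; the only bookkeeping to watch is the ordering $x^{*}>1/3$ and matching roots of $g=\chi$ to the theorem's labels.
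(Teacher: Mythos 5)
Your proposal is correct. It shares the paper's starting point — the factorization $f'(x)/\nu=(3x-1)\bigl(\tfrac{3x+1}{x(1-x^2)}-\tfrac{\chi}{2}\bigr)$, the cubic $6x^3+3x^2-1$ locating the minimum $x^*>1/3$ of the rational factor, and the identification $\chi_2=g(1/3)$ — and the solution count by preimages of $\chi$ under $g$ is essentially the paper's argument in different words. Where you genuinely diverge is the classification step. The paper argues globally: it computes $f''''>0$, concludes $f''$ has at most two zeros and $f'$ at most three, and then uses the boundary limits $f'\to\mp\infty$, $f''\to+\infty$ and interlacing of zeros to decide which stationary points are maxima or minima. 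You instead use the local identity $f''(x)/\nu=(3x-1)\,g'(x)/2$ valid at every nontrivial stationary point, together with $f''(1/3)/\nu=3(\chi_2-\chi)/2$, which reads off every sign immediately and handles the degenerate values by one more differentiation (or, at $\chi=\chi_2$, by the sign of $f'$ near $1/3$). Your route is more elementary and avoids the fourth derivative entirely; the paper's route is more robust in that it does not require evaluating $f''$ at the critical points. One small imprecision: you justify $g''(x^*)>0$ by saying $x^*$ is a strict interior minimum of $g$, which does not imply positivity of the second derivative in general; here it follows instead from $g'(x)=2(6x^3+3x^2-1)/(x(1-x^2))^2$, whose numerator has strictly positive derivative $6x(3x+1)$ at the simple root $x^*$. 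With that one line added, the argument is complete.
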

\begin{proof}
  Notice that $0<x<1$. 
  We compute the first, second and fourth derivatives of $f$ about $x$:
  \begin{align*}
    \frac{1}{\nu}f'=&(3x-1)\Big(\frac{3x+1}{x(1-x^2)}-\frac{\chi}{2}\Big),\\
    \frac{1}{\nu}f''=&\frac{1}{x^2}+\frac{4}{(1-x)^2}+\frac{4}{(1+x)^2}-\frac{3\chi}{2},\\
    \frac{1}{\nu}f''''=&\frac{6}{x^4}+\frac{12}{(1-x)^4}+\frac{12}{(1+x)^4}>0. 
  \end{align*}
  As a result of $f''''>0$, the second derivative $f''$ can have at most two zeros in the interval $(0,1)$. It further indicates that $f'$ can have at most three zeros. 
  
  The derivative of the second parentheses in $f'$ is calculated as 
  \begin{align*}
    \Big(\frac{3x+1}{x(1-x^2)}\Big)'=\frac{6x^3+3x^2-1}{(x-x^3)^2}. 
  \end{align*}
  When $0<x<1$, its unique zero point is $x^*$ that is greater than $1/3$. The derivative is positive when $x>x^*$, and negative when $x<x^*$. 
  Hence, the function $\frac{3x+1}{x(1-x^2)}$ reaches the minimum at $x=x^*$. 
  
  Therefore, when $\chi<\chi_1$, the function $\frac{3x+1}{x(1-x^2)}-\chi/2$ is positive, which implies that $f'$ has only one zero point $x=1/3$. 
  When $\chi=\chi_1$, the function $\frac{3x+1}{x(1-x^2)}-\chi/2$ has exactly one zero point $x^*$, indicating that $f'$ has exactly two zero points $1/3$ and $x^*$. 
  When $\chi>\chi_1$, the function $\frac{3x+1}{x(1-x^2)}-\chi/2$ has exactly two zero points.
  It is possible that $1/3$ is a zero point of $\frac{3x+1}{x(1-x^2)}-\chi/2$. This case occurs if and only if $\chi=\chi_2$.
  Thus, if $\chi_1<\chi<\chi_2$ or $\chi>\chi_2$, the first derivative $f'$ has three zero points. 
  When $\chi=\chi_2$, there are only two zero points. 
  
  Note that 
  \begin{align*}
    &\lim_{x\to 0^+}f'=-\infty,\ \lim_{x\to 1^-}f'=+\infty,\\
    &\lim_{x\to 0^+}f''=\lim_{x\to 1^-}f''=+\infty. 
  \end{align*}
  In the cases where $f'$ has three zero points $x_1<x_2<x_3$, one zero point of $f''$ must lie between $x_1$ and $x_2$, and the other between $x_2$ and $x_3$.
  Therefore, $x_1$ and $x_3$ are local minimizers of $f$, while $x_2$ is a local maximizer. 
  When $\chi_1<\chi<\chi_2$, we have $f''(1/3)>0$ and $f'(x^*)<0$, so that $x_1=1/3<x_2<x_3$. 
  When $\chi>\chi_2$, we have $f''(1/3)<0$. It implies that there exists $y_1<1/3,\,y_2>1/3$ such that $f'(y_1)>0,\,f'(y_2)<0$, so that $x_1<x_2=1/3<x_3$.
  It remains to examine the cases $\chi=\chi_1$ and $\chi=\chi_2$.
  
  When $\chi=\chi_1$, we have $f''(1/3)>0$. Meanwhile, there is only one zero point of $f'(x^*)=0$ other than $1/3$. Together with $f'\to+\infty$ when $x\to 1^-$, we deduce that $f'>0$ for $x>1/3,\, x\ne x^*$. 
  Therefore, $1/3$ is a local minimizer of $f$, and $x^*$ is a saddle point. 
  
  When $\chi=\chi_2$, we have $f'(1/3)=f''(1/3)=0$ and $f'(x^*)<0$. Thus, we can find some $y>1/3$ such that $f''(y)<0$. Let $y_1>1/3$ such that $f''(y_1)=0$. Since $f''$ has exactly two zero points, we must have $f''(y)<0$ for $1/3<y<y_1$ and $f''(y)>0$ for $y<1/3$ and $y>y_1$. Thus, $1/3$ is a saddle point. Then, from $f'(y_1)=\int_{1/3}^{y_1}f''(y)\md y<0$, we deduce that the zero point of $f'$ other than $1/3$ is greater than $y_1$. Hence, at this zero point of $f'$, the second derivative is positive, indicating that it is a local minimizer. 
  
\end{proof}

We compare the result with that with the original entropy \cite{liu2005axial}. 
When $\chi$ is increasing, the changes in the type of stationary points are identical. 
No matter we use the original entropy or the quasi-entropy, we could see that as $\eta$ is increasing, a minimizer other than $x=1/3$ emerges, followed by $x=1/3$ losing its stability.
This exactly describes the first order phase transition from the isotropic phase to the uniaxial nematic phase. 

Theorem \ref{rodsols} gives two critical values of $\chi=\eta/\nu$. 
If we adopt the value $\nu=5/9$, the two critical values are $\eta=7.258835$ and $\eta=7.5$. 
When using the original entropy, the two critical values, given in terms of $\eta$, are $\eta=6.731393$ and $\eta=7.5$.
The value $\eta=7.5$ are identical because we choose $\nu$ according to the second derivative at $x=1/3$. 
Although the specific value of another critical value is not identical, the structure of the solutions is the same, which is sufficient to identify the underlying physics in the free energy. 

\subsection{Two-fold symmetry, group $\mathcal{D}_2$}
Again, we introduce some alternative notations of tensors for our analysis. 
Define $R_i=\langle\bm{m}_i^2\rangle$. 
Recall that the quasi-entropy \eqref{D2} is expressed by $Q^2=\langle\bm{m}_1^2-\mathfrak{i}/3\rangle$ and $M_1^2=\frac{1}{2}\langle\bm{m}_2^2-\bm{m}_3^2\rangle$. 
It can be recognized that 
\begin{align}
  R_1=Q^2+\frac{1}{3}I_3,\quad R_2=\frac{1}{3}I_3-\frac{1}{2}Q^2+M_1^2,\quad R_3=\frac{1}{3}I_3-\frac{1}{2}Q^2-M_1^2=I_3-R_1-R_2. \label{Qs}
\end{align}
Each of the three tensors $R_i$ has positive eigenvalues and the trace one. 
It is equivalent to use any two of $R_i$ instead of $Q^2$ and $M_1^2$.

To simplify our derivation below, we express the free energy in terms of $R_1$ and $R_2$, 
\begin{align*}
  f(R_1,R_2)
  =&\nu(-\ln\det R_1-\ln\det R_2-\ln\det R_3)\nonumber\\
  &+\frac{\eta}{2}(c_{02}|R_1|^2+c_{03}|R_2|^2+2c_{04}R_1\cdot R_2), 
\end{align*}
where the coefficients $c_{02},\,c_{03},\,c_{04}$ in the pairwise interaction term can be derived from molecular parameters \cite{Bi1,rosso2006quadrupolar,SymmO,BentModel}. 
Moreover, we rewrite the coupling term $R_1\cdot R_2$ using $R_3$, yielding 
\begin{align}
  \frac{f(R_1,R_2)}{\nu}=&-\ln\det R_1-\ln\det R_2-\ln\det R_3\nonumber\\
  &+\frac{\eta}{2\nu}\Big((c_{02}-c_{04})|R_1|^2+(c_{03}-c_{04})|R_2|^2+c_{04}|R_3|^2\Big)+\text{Constant}. \label{D2eng}
\end{align}
Let us denote $c_1=-\eta(c_{02}-c_{04})/\nu$, $c_2=-\eta(c_{03}-c_{04})/\nu$, $c_3=-\eta c_{04}/\nu$. 

\subsubsection{Shared eigenframe of $R_i$}
When using the $f_{\mathrm{ent}}$, under some conditions on $c_j$, at stationary points the eigenframes of $R_i$ are identical \cite{xu2017transmission}.
We show an analogous result with quasi-entropy. 
\begin{theorem}\label{D2diag}
For the stationary points of \eqref{D2eng}, if either of $c_1$, $c_2$, $c_3$ is no greater than $4$, then $R_1$, $R_2$ and $R_3$ have the same eigenframe. 
\end{theorem}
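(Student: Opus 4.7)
The plan is to introduce a symmetric Lagrange multiplier for the constraint $R_1+R_2+R_3=I_3$ and write the stationarity equations of \eqref{D2eng}. Differentiating gives
\[
R_i^{-1}+c_iR_i=S,\qquad i=1,2,3,
\]
for a single symmetric matrix $S$. Left-multiplying by $R_i$ and taking transposes immediately yields $R_iS=SR_i$, so all three $R_i$'s commute with the same matrix $S$. If $S$ has three distinct eigenvalues, its eigenframe is unique and each $R_i$ is necessarily diagonal in it, which already proves the claim. The work therefore concentrates on the case when $S$ has a repeated eigenvalue $\sigma$ on a two-dimensional invariant subspace $V$; on the one-dimensional complement every $R_i$ acts as a scalar and trivially shares a frame with the others.

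On $V$, the stationarity identity reduces to the operator quadratic $c_i(R_i|_V)^2-\sigma R_i|_V+I_V=0$. I would decompose $R_i|_V=\tau_iI_V+T_i$ with $T_i$ traceless symmetric, so that $\sum_iR_i|_V=I_V$ splits into the scalar identity $\sum_i\tau_i=1$ and the traceless identity $T_1+T_2+T_3=0$. The pivotal observation is that if a single $T_i$ vanishes (i.e.\ $R_i|_V$ is scalar), then $T_j=-T_k$ for the other two indices, so $R_j|_V$ and $R_k|_V$ share their eigenvectors and the third restriction commutes with everything. Hence non-sharing forces all three $R_i|_V$ to be non-scalar. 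Non-scalarity requires $c_i>0$ and $\sigma^2>4c_i$; additionally, the eigenvalues $t_{i,\pm}$ of $R_i|_V$ must lie in $(0,1)$ (because each $R_j$ is positive definite with $R_1+R_2+R_3=I_3$), which gives $c_i>1$ and $\sigma<c_i+1$. By Vieta's formulas the trace identity reads $\sigma=2/(1/c_1+1/c_2+1/c_3)$.

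The contradiction with the hypothesis is then an arithmetic squeeze. Combining $\sigma>2\sqrt{c_{\max}}$ with $\sum_j 1/c_j\ge 3/c_{\max}$ and the trace identity forces $c_{\max}>9$; feeding this back through $\sigma>2\sqrt{c_{\max}}>6$ and $\sigma<c_{\min}+1$ gives $c_{\min}>5$, in particular $c_{\min}>4$, contradicting the hypothesis $\min_ic_i\le 4$. Hence non-sharing is impossible and the three $R_i$'s share an eigenframe. The main obstacle is the two-dimensional analysis on $V$: one must rule out the partially-scalar configurations using the traceless identity $T_1+T_2+T_3=0$, and translate the narrow eigenvalue interval $(0,1)$ combined with the quadratic $c_it^2-\sigma t+1=0$ into the sharp inequality $\sigma<c_i+1$. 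Once these structural facts are secured, closing the squeeze is routine.
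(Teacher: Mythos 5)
Your overall strategy --- make every $R_i$ commute with the common Lagrange multiplier matrix $S$, reduce to the degenerate eigenspace of $S$, and rule out non-scalar restrictions there --- is genuinely different from the paper's argument and is viable, but as written it rests on an incorrect stationarity equation. The variational problem carries the scalar constraints $\mathrm{tr}\,R_i=1$ in addition to $R_1+R_2+R_3=I_3$ (these are exactly the multipliers $\lambda_i$ in the paper's Euler--Lagrange system), so the correct equations are $R_i^{-1}+c_iR_i=S+\mu_i I$ with three \emph{different} scalars $\mu_i$, not $R_i^{-1}+c_iR_i=S$. The commutation step survives, since scalar shifts do not affect commutators, but on the two-dimensional eigenspace $V$ the quadratic becomes $c_it^2-\sigma_i t+1=0$ with three distinct $\sigma_i=\sigma+\mu_i$, and the identity $\sigma=2/(1/c_1+1/c_2+1/c_3)$ that drives your entire arithmetic squeeze ($c_{\max}>9$, hence $c_{\min}>5$) is simply not available. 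With the $\mu_i$ present, the constraints $\sigma_i^2>4c_i$, $\sigma_i<c_i+1$ and $\sum_i\sigma_i/c_i=2$ admit solutions with $c_{\min}\le 4$, so the contradiction does not close. A second, smaller gap: the case $S\propto I_3$ has no distinguished two-dimensional invariant subspace, so your reduction does not apply there and that configuration is left unexamined.

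The good news is that the argument is repairable precisely by using the constraint you omitted. Since $\mathrm{tr}\,R_i=1$ and the eigenvalue $r_i$ of $R_i$ on the one-dimensional complement satisfies $r_i>0$ (with $r_1+r_2+r_3=1$), one has $\mathrm{tr}(R_i|_V)=1-r_i<1$; if $R_i|_V$ is non-scalar its two eigenvalues are distinct with product $1/c_i$, so AM--GM gives $1>1-r_i>2/\sqrt{c_i}$, i.e. $c_i>4$. Combined with your (correct) observation that failure of a shared eigenframe forces all three restrictions to be non-scalar, this contradicts $\min_i c_i\le 4$ and recovers exactly the paper's threshold; the same trace estimate also disposes of the $S\propto I_3$ case. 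For comparison, the paper avoids the eigenspace case analysis entirely: from the Euler--Lagrange system it derives the single identity $R_2R_1-R_1R_2=c_2R_2(R_2R_1-R_1R_2)R_2$, diagonalizes $R_2=\mathrm{diag}(z_1,z_2,z_3)$, and uses $z_iz_j<1/4$ (again a consequence of positivity and trace one) to conclude the commutator vanishes whenever the \emph{one} index with $c_i\le4$ is chosen --- a shorter route to the same conclusion.
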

\begin{proof}
  %
  The Euler-Lagrange equation is given by 
  \begin{equation}
    R_1^{-1}+c_1R_1+\lambda_1I=R_2^{-1}+c_2R_2+\lambda_2I=R_3^{-1}+c_3R_3+\lambda_3I, 
  \end{equation}
  where $\lambda_i$ are Lagrange multipliers making $\mathrm{tr}R_i=1$. 
  Without losing generality, we assume $c_2\le 4$. 
  We can deduce from the above that 
  \begin{align*}
    R_1(R_2^{-1}+c_2R_2+\lambda_2I)=(R_2^{-1}+c_2R_2+\lambda_2I)R_1, 
  \end{align*}
  yielding
  \begin{align*}
    R_2R_1-R_1R_2=c_2R_2(R_2R_1-R_1R_2)R_2. 
  \end{align*}
  By rotational invariance, we can assume that $R_2=\mbox{diag}(z_1,z_2,z_3)$. 
  Denote $R=R_2R_1-R_1R_2$. Computing the off-diagonal entries in the above equation, we obtain
  \begin{align*}
    R_{ij}(z_i-z_j)(1-c_2z_iz_j)=0. 
  \end{align*}
  Because $R_2$ is positive definite and trace-one, we have $z_iz_j<1/4$. Along with $c_2\le 4$, we have $1-c_2z_iz_j>0$, leading to 
  \begin{align*}
    R_{ij}(z_i-z_j)=0. 
  \end{align*}
  If $z_i\ne z_j$, we have $R_{ij}=0$.
  If $z_i=z_j$, we calculate that
  $$
  (R_2R_1)_{ij}=z_i(R_1)_{ij},\quad (R_1R_2)_{ij}=z_j(R_1)_{ij}, 
  $$
  which also implies that $R_{ij}=0$.
  Therefore, we always have $R_1R_2=R_2R_1$, so that $R_1$ and $R_2$ share an eigenframe. 
\end{proof}
Let us compare the condition in Theorem \ref{D2diag} with that in Theorem 1.1 in \cite{xu2017transmission}. 
If expressed by $c_1,c_2,c_3$ in the current work, the condition in \cite{xu2017transmission} can be rewritten as: 
\begin{enumerate}[(a)]
\item The matrix $\displaystyle\left(\begin{array}{cc}c_1+c_3 & c_3\\ c_3 & c_2+c_3\end{array}\right)$ is not positive definite; OR
\item This matrix is positive definite, but $-\frac{c_3^2}{c_1+c_3}+c_2+c_3\le \frac{2}{\nu}$.
\end{enumerate}
When taking $\nu=5/9$ as we suggest, it is easy to verify that when $c_1,c_2,c_3>4$, the above condition cannot hold. Thus, the condition in Theorem \ref{D2diag} is necessary, but not sufficient, for the condition above.
On the other hand, some coefficients derived from molecular parameters are checked in \cite{xu2017transmission}.
It turns out that all of these coefficients satisfy the condition in Theorem \ref{D2diag}. 

\subsubsection{Stationary points without shared eigenframe}
It has not been answered in \cite{xu2017transmission} whether there exists stationary points with $R_i$ having different eigenframes if the condition on $c_1,c_2,c_3$ does not hold. 
Here, with the quasi-entropy, the answer is yes. 
In the following, we construct a class of examples. 
Let $0<a<1/2$, $c,r>0$, and 
$$
c_1=\frac{1}{a^2-c^2}>0,\quad c_2=c_3=\frac{1}{\frac{1}{4}(1-a)^2-\frac{1}{4}c^2-r^2}>0. 
$$
We consider a stationary point of the following form, 
\begin{align*}
R_1=&\left(
\begin{array}{ccc}
  a+c & 0 & 0 \\
  0 & a-c & 0 \\
  0 & 0 & 1-2a
\end{array}
\right),\\ 
R_2=&\left(
\begin{array}{ccc}
  \frac{1}{2}(1-a-c) & r & 0 \\
  r & \frac{1}{2}(1-a+c) & 0 \\
  0 & 0 & a
\end{array}
\right),\\ 
R_3=&\left(
\begin{array}{ccc}
  \frac{1}{2}(1-a-c) & -r & 0 \\
  -r & \frac{1}{2}(1-a+c) & 0 \\
  0 & 0 & a
\end{array}
\right). 
\end{align*}
Then we have 
\begin{align*}
&R_1^{-1}+c_1R_1=\mbox{diag}\Big(2ac_1,2ac_1,\frac{1}{1-2a}+c_1(1-2a)\Big), \\ 
&R_2^{-1}+c_2R_2=R_3^{-1}+c_3R_3=\mbox{diag}\Big((1-a)c_2,(1-a)c_2,\frac{1}{a}+c_2a\Big). 
\end{align*}
By the Euler-Lagrange equation, we have 
$$
c_2(1-a)-2c_1a=\frac{1}{a}+c_2a-\frac{1}{1-2a}-c_1(1-2a). 
$$
Let $\lambda=a^2-c^2\in (0, a^2]$. Then we have 
$$
\frac{a(1-2a)^2\lambda}{(1-3a)\lambda+a(1-2a)(4a-1)}=\frac{1-2a+\lambda}{4}-r^2\le \frac{1-2a+\lambda}{4}. 
$$
It gives an inequality about $a$ and $\lambda=a^2-c^2$.
If the inequality holds, we can always solve $r$. 
Thus, let us focus on the inequality. 
\begin{enumerate}
\item $a=1/3$. The above inequality gives $c^2\ge 0$. So we have $0\le c<a$. 
\item $1/3<a<1/2$. When $a\ne 1/3$, the inequality can be converted into 
$$
(\lambda-\lambda_0)(\lambda-\lambda_1)(\lambda-\lambda_2)\ge 0,\quad \lambda\ne\lambda_0, 
$$
with 
$$
\lambda_0=\frac{a(1-2a)(4a-1)}{3a-1},\ \lambda_1=\frac{a(1-2a)}{1-3a},\ \lambda_2=(1-2a)(4a-1). 
$$
When $1/3<a<1/2$, we have $\lambda_1<0<\lambda_2<\lambda_0$. 
So the inequality gives $0<\lambda\le (1-2a)(4a-1)$. 
From this, we solve that $c\ge 3a-1$. Thus, we have $3a-1\le c<a$. 
\item $1/4<a<1/3$. In this case, we can verify that $\lambda_1>\lambda_2>0>\lambda_0$. 
So we have $a^2\ge\lambda\ge \lambda_1$ or $0<\lambda\le \lambda_2$. 
We can check that $a^2<\lambda_1$ for any $0<a<1/3$, so we can only have $0<\lambda\le \lambda_2$, which gives $1-3a\le c<a$. 
\item $0<a\le 1/4$. Now we have $\lambda_1>0\ge \lambda_2,\lambda_0$. However, we still have $a^2<\lambda_1$ because it holds for any $0<a<1/3$. 
Therefore, when $a\le 1/4$ such a stationary point does not exists. 
\end{enumerate}
To summarize, the stationary point given in this paragraph exists for $1/4<a<1/2$ and $|1-3a|\le c<a$. 

\subsection{Bent-core molecules, group $\mathcal{C}_2$}
A free energy involving three tensors is proposed in \cite{SymmO,BentModel}, which includes the original entropy $f_{\mathrm{ent}}$. 
The three tensors are equivalent to $Q^1,Q^2,M_1^2$, where we recall $Q^1=\langle\bm{m}_1\rangle$. We still use the notations $R_i$ in \eqref{Qs}. 
When the original entropy is substituted by the quasi-entropy, the free energy is written as 
\begin{align*}
  f(Q^1,Q^2,M_1^2)
  =&\nu q_2(Q^1,Q^2,M_1^2)\nonumber\\
  &+\frac{\eta}{2}(c_{01}|Q^1|^2+c_{02}|R_1|^2+c_{03}|R_2|^2+2c_{04}R_1\cdot R_2). 
\end{align*}
The quasi-entropy about three tensors is given in \eqref{C2simp}. 
The coefficients can be calculated as functions of molecular shape parameters \cite{SymmO,BentModel,xu2018onsager-theory-based}.  
We will focus on the effect of varying bending angle $\theta$ (see Fig. \ref{qhom}). 

With the coefficients derived from molecular parameters, the coefficient $c_{01}$ turns out to be positive. 
Using the third property in Theorem \ref{qprop}, when $Q^2$ and $M_1^2$ are fixed, $q_2(Q^1,Q^2,M_1^2)$ has the unique minimizer $Q^1=0$. 
As a result, the free energy is reduced to \eqref{D2eng}. 
Furthermore, we could verify that the conditions in Theorem \ref{D2diag} indeed hold, so we can assume that $R_i$ are all diagonal. 

We would like to examine the phase diagram of bent-core molecules, about $\eta$ and $\theta$. 
The coefficient before the quasi-entropy is chosen as $\nu=5/9$ from the discussion of rod-like molecules. 
We solve the local minimizers numerically by steepest descend method, using multiple initial values in order not to miss any. 
Then, we pick the one that has the lowest energy and label it as the one to appear in the phase diagram. 
The phases are characterized by the eigenvalues of $R_i$ (see \cite{BentModel}). 
\begin{itemize}
\item Isotropic phase: $R_1=R_2=R_3=I_3/3$. 
\item Uniaxial nematic phase $N_i$ ($i=1,2,3$): The tensors $R_j=(a_j,(1-a_j)/2,(1-a_j)/2)$ are uniaxial for $j=1,2,3$. In the phase $N_i$, we have $a_i>1/3$ but the other two $a_j<1/3$. 
\item Biaxial phase $B$: each $R_i$ has three distinct eigenvalues. 
\end{itemize}
We draw the phase diagram in Fig. \ref{qhom}, comparing the ones using the original entropy and the quasi-entropy. 
It turns out that they are very close.
\begin{figure}
\centering
\includegraphics[width=.3\textwidth,keepaspectratio]{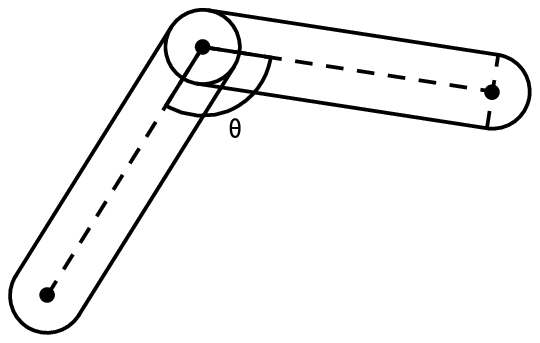}
\includegraphics[width=.5\textwidth,keepaspectratio]{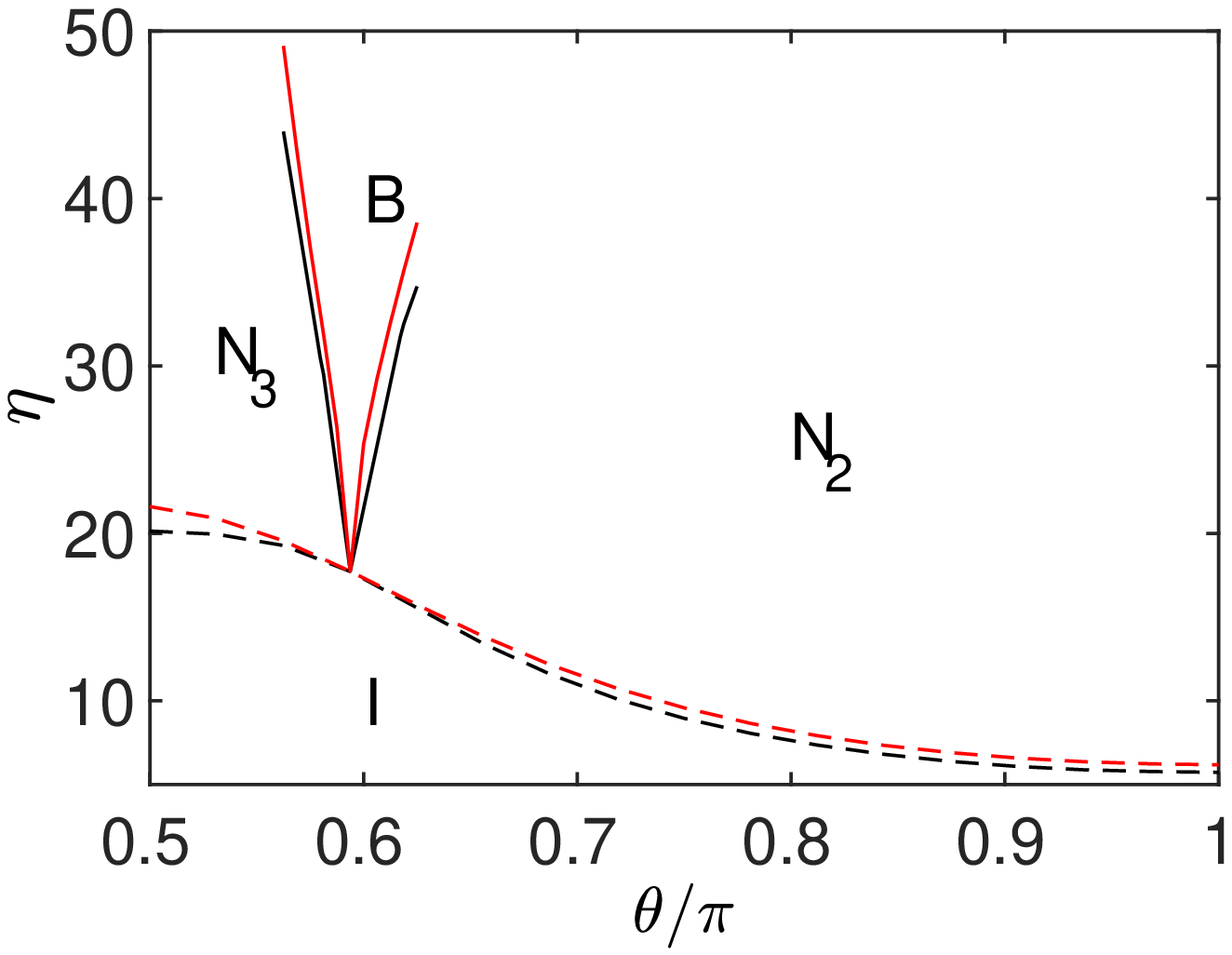}
\caption{Comparison of phase diagram of bent-core molecules. The black lines represent the one with the original entropy, the red lines represent the one with quasi-entropy.}\label{qhom}
\end{figure}

\subsection{Tetrahedral and octahedral symmetries}
For the group $\mathcal{T}$, two tensors $T$ and $O$ are involved, which are defined in \eqref{Tten} and \eqref{Oten}. 
If the interaction term is quadratic, we could write down the free energy using the quasi-entropy as follows (cf. \cite{xu_kernel} for interaction terms) 
\begin{align}
  f(T,O)=q_4(T,O)-\frac{1}{2}\mu_1|T|^2-\frac{1}{2}\mu_2|O|^2, \label{Teng}
\end{align}
where $q_4(T,O)$ is given in \eqref{qtet}.
If $\mu_1=0$, by $\min_Tq_4(T,O)=q_4(O)$, \eqref{Teng} is reduced to a free energy for the group $\mathcal{O}$, 
\begin{align}
  f(O)=q_4(O)-\frac{1}{2}\mu_2|O|^2, \label{Oeng}
\end{align}
with $q_4(O)$ given in \eqref{qoct}.


As a simplification, we assume that the two tensors can be written as 
\begin{align*}
  T=s\bm{e}_1\bm{e}_2\bm{e}_3,\quad O=t(\bm{e}_1^2\bm{e}_2^2+\bm{e}_3^2\bm{e}_3^2+\bm{e}_3^2\bm{e}_1^2-\frac{1}{5}\mathfrak{i}^2), 
\end{align*}
where a polynomial of $\bm{e}_i$ represents a symmetric tensor (see \eqref{tensor_monomial}).
Such an assumption restrains our attention to the phases having tetrahedral symmetry (see \cite{xu_tensors}). 
The nonzero components in the two tensors are given by 
\begin{align*}
  &T_{123}=\frac{1}{6}s,\\
  &O_{1122}=O_{1133}=O_{2233}=\frac{1}{10}t,\\
  &O_{1111}=O_{2222}=O_{3333}=-\frac{1}{5}t. 
\end{align*}
Taking them into the quasi-entropy, we arrive at
\begin{align}
  q_4(s,t)=&-2\ln\Big((\frac{1}{15}+\frac{1}{10}t)^2-\frac{1}{36}s^2\Big)\nonumber\\
  &-9\ln(\frac{1}{15}-\frac{1}{15}t)-3\ln(\frac{1}{20}-\frac{1}{20}t)\nonumber\\
  &-9\ln(\frac{1}{60}+\frac{1}{90}t-\frac{1}{36}s^2)\nonumber\\
  =&-2\ln(2+3t+5s)-2\ln(2+3t-5s)\nonumber\\
  &-12\ln(1-t)-9\ln(3+2t-5s^2)+C_0, 
\end{align}
where $C_0$ is some constant.
The free energy becomes 
\begin{align}
  F(s,t)=\nu q_4(s,t)-\frac{\eta}{12}\mu_1s^2-\frac{3\eta}{20}\mu_2t^2. \label{Tengsimp}
\end{align}
It is clear that $F(s,t)=F(-s,t)$.
We investigate the phase diagram of \eqref{Tengsimp}. 

The isotropic phase corresponds to $s=t=0$. 
The case $s\ne 0$ is recognized as the tetrahedral phase; 
the case $s=0$ but $t\ne 0$ is recognized as the octahedral phase. 
By solving the local minimizers numerically, we draw the phase diagram about the two parameters $\bar{\mu}_1=\eta\mu_1/6\nu$ and $\bar{\mu}_2=3\eta\mu_2/10\nu$ (Fig. \ref{TO}).
The octahedral phase occurs when $\bar{\mu}_2$ exceeds certain value.
The appearance of tetrahedral phase depends on $\bar{\mu}_1$, for which the transition value decreases as $\bar{\mu}_2$ increases. In other words, the increase of $\bar{\mu}_2$ makes it easier for the tetrahedral phase to emerge. 
\begin{figure}
\centering
\includegraphics[width=.5\textwidth,keepaspectratio]{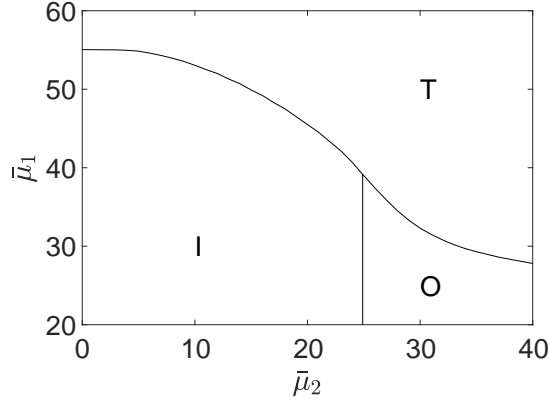}
\caption{Phase diagram of the free energy \eqref{Tengsimp} containing tetrahedral and octahedral phases. }\label{TO}
\end{figure}
To the author's knowledge, tetrahedral and octahedral phases have not been reported in one phase diagram before. 

\section{Concluding remarks\label{concl}}
We propose the quasi-entropy for averaged tensors using the log-determinant of covariance matrix. 
It is an elementary function that keeps the essential properties of the original entropy derived from constrained minimization approach: strict convexity; the ability to constrain the covariance matrix positive definite; invariant under rotations; consistency when reduced by symmetry. 
We write down the explicit expressions for several $SO(3)$ point groups: $\mathcal{D}_{\infty}$, $\mathcal{C}_{\infty}$, $\mathcal{D}_2$, $\mathcal{C}_{2}$, $\mathcal{D}_3$, $\mathcal{D}_4$, $\mathcal{T}$, and $\mathcal{O}$. 

Using the quasi-entropy, we examine some spatially homogeneous liquid crystalline systems. 
Among the results, those when adopting the quasi-entropy are consistent with previous results when adopting the original entropy. 
In particular, for rod-like molecules, we obtain the fact that at stationary points the second order tensor must have two identical eigenvalues. 
For two-fold symmetries, we prove that at stationary points the two second order tensors shall share an eigenframe under some conditions of the coefficients, and construct a class of counterexamples if these conditions do not hold. 
We also investigate the phase diagram for the bent-core molecules and molecules with tetrahedral/octahedral symmetries. 

The quasi-entropy can be derived in a unified method for all symmetries and choice of tensors, which avoids the problem of many undetermined coefficients when using polynomials as the entropy. 
Meanwhile, in contrast with the entropy derived from constrained minimization from $\int \rho\ln\rho\md\mathfrak{p}$ that involves integrals on $SO(3)$, the fact that the quasi-entropy is an elementary function is expected to greatly simplify the analysis and numerical simulations. 
It would be promising to apply the quasi-entropy to spatially inhomogeneous systems consisting of rigid molecules. 

Let us conclude with stating an open problem.
With the quasi-entropy, it is guaranteed that a certain covariance matrix $C(\avrg{\bm{w}}_{2n})$ is positive definite.
The matrix, however, only involves the components of $\langle U(\mathfrak{p})\rangle$ where $U$ is no more than $2n$-th order. 
It would be interesting to discuss whether requiring such a covariance matrix to be positive definite is equivalent to the fact that there exists a density function $\rho>0$ that gives these averaged tensors $U$.
For the case of rod-like molecules where only a $Q=\langle\bm{m}_1^2-\mathfrak{i}/3\rangle$ is involved, the statement is indeed correct.
It has been proved in \cite{ball2010nematic} that for each $Q$ with eigenvalues lying within $(-1/3,2/3)$, there exists such a density $\rho$. 
When other tensors are involved, some special cases are discussed in  \cite{BentModel}. 
To the knowledge of the author, further results of this kind are not found. 

\appendix
\section{Derivation of quasi-entropy based on $q_4^{(0)}$}
The derivation below weighs heavily on the elements of the group $\mathcal{G}$ and the expressions of invariant tensors.
We will write down the expressions we need and refer to \cite{xu_tensors} for further details. 
The Einstein convention of summation on repeated indices will be adopted.

We begin with some basic expressions. 
For each $k$-th order symmetric tensor $U$, there exists a unique $(k-2)$-th symmetric tensor $V$ such that $U-(\mathfrak{i}\otimes V)_{\mathrm{sym}}$ is a symmetric traceless tensor, which is denoted by $(U)_0$. 
Recall again that a polynomial of $\bm{m}_i$ represents a symmetric tensor, and $m_{ij}$ is the $j$-th component of $\bm{m}_i$. 
Let us review by an example, 
\begin{align*}
  (\bm{m}_1^2\bm{m}_2^2)_{ijkl}
  =&\frac{1}{6}(m_{1i}m_{1j}m_{2k}m_{2l}+m_{1i}m_{2j}m_{1k}m_{2l}+m_{1i}m_{2j}m_{2k}m_{1l}\\
  &+m_{2i}m_{1j}m_{1k}m_{2l}+m_{2i}m_{1j}m_{2k}m_{1l}+m_{2i}m_{2j}m_{1k}m_{1l}). 
\end{align*}
We write down some symmetric traceless tensors generated by monomials, 
\begin{align*}
  (\bm{m}_1^2)_0=&\bm{m}_1^2-\frac{1}{3}\mathfrak{i},\\
  (\bm{m}_1^4)_0=&\bm{m}_1^4-\frac{6}{7}\bm{m}_1^2\mathfrak{i}+\frac{3}{35}\mathfrak{i}^2, \\
  (\bm{m}_1^2\bm{m}_2^2)_0=&\bm{m}_1^2\bm{m}_2^2-\frac{1}{7}(\bm{m}_1^2+\bm{m}_2^2)\mathfrak{i}+\frac{1}{35}\mathfrak{i}^2. 
\end{align*}
Hence, we deduce that 
\begin{align*}
  \bm{m}_1^4=&(\bm{m}_1^4)_0+\frac{6}{7}\bm{m}_1^2\mathfrak{i}-\frac{3}{35}\mathfrak{i}^2\\
  =&(\bm{m}_1^4)_0+\frac{6}{7}(\bm{m}_1^2-\frac{1}{3}\mathfrak{i})\mathfrak{i}+(\frac{2}{7}-\frac{3}{35})\mathfrak{i}^2\\
  =&(\bm{m}_1^4)_0+\frac{6}{7}(\bm{m}_1^2)_0\mathfrak{i}+\frac{1}{5}\mathfrak{i}^2, \\
  \bm{m}_1^2\bm{m}_2^2=&(\bm{m}_1^2\bm{m}_2^2)_0+\frac{1}{7}(\bm{m}_1^2+\bm{m}_2^2)\mathfrak{i}-\frac{1}{35}\mathfrak{i}^2\\
  =&(\bm{m}_1^2\bm{m}_2^2)_0+\frac{1}{7}(\frac{1}{3}\mathfrak{i}-\bm{m}_3^2)\mathfrak{i}+(\frac{2}{21}-\frac{1}{35})\mathfrak{i}^2\\
  =&(\bm{m}_1^2\bm{m}_2^2)_0-\frac{1}{7}(\bm{m}_3^2)_0\mathfrak{i}+\frac{1}{15}\mathfrak{i}^2. 
\end{align*}

The Levi-Civita symbol actually defines a third order tensor, given as 
\begin{align}
  \epsilon=&\epsilon_{ijk}\bm{m}_i\otimes\bm{m}_j\otimes\bm{m}_k \nonumber\\
  =&\bm{m}_1\otimes\bm{m}_2\otimes\bm{m}_3+\bm{m}_2\otimes\bm{m}_3\otimes\bm{m}_1+\bm{m}_3\otimes\bm{m}_1\otimes\bm{m}_2\nonumber\\
  &-\bm{m}_1\otimes\bm{m}_3\otimes\bm{m}_2-\bm{m}_2\otimes\bm{m}_1\otimes\bm{m}_3-\bm{m}_3\otimes\bm{m}_2\otimes\bm{m}_1. \nonumber
\end{align}
The equality holds for any $(\bm{m}_i)$.
When there are two Levi-Civita symbols, we have 
\begin{align*}
  \epsilon_{ils}\epsilon_{jkt}=\det\left(
  \begin{array}{ccc}
    \delta_{ij} & \delta_{ik} & \delta_{it}\\
    \delta_{lj} & \delta_{lk} & \delta_{lt}\\
    \delta_{sj} & \delta_{sk} & \delta_{st}
  \end{array}
  \right). 
\end{align*}
An anti-symmetric part of a tensor can be expressed by a lower order tensor, using equalities like 
\begin{align*}
  m_{1i}m_{2j}-m_{2i}m_{1j}=\epsilon_{ijs}m_{3s}. 
\end{align*}

Now, we write down the covariance matrix of $\bm{w}_2$, for which we focus on the matrix $\langle \bm{w}_2\bm{w}_2^t\rangle$.
We divide it into four blocks, 
\begin{align*}
  \left(
  \begin{array}{cc}
    K_{11} & K_{12}\\
    K_{12}^t & K_{22}
  \end{array}
  \right), 
\end{align*}
where $K_{11}$ and $K_{22}$ are $9\times 9$ and $25\times 25$, respectively.
The block $K_{11}$ is written down in the main text. 
The blocks $K_{12}$ and $K_{22}$ are given by 
\begin{align*}
  &K_{12}=\\
  &\left(
  \begin{array}{lllll}
    \Psi_3\big(\langle \bm{m}_1 & \Psi_3\big(\frac{1}{2}\langle \bm{m}_1 & \Psi_3\big(\langle \bm{m}_1 & \Psi_3\big(\langle \bm{m}_1 & \Psi_3\big(\langle \bm{m}_1\\
    \quad \otimes(\bm{m}_1^2-\frac{1}{3}\mathfrak{i})\rangle\big) & \quad \otimes(\bm{m}_{2}^2-\bm{m}_3^2)\rangle\big) & \quad\otimes\bm{m}_1\bm{m}_2\rangle\big) & \quad\otimes\bm{m}_1\bm{m}_3\rangle\big) & \quad\otimes\bm{m}_2\bm{m}_3\rangle\big)\vspace{6pt}\\
    \Psi_3\big(\langle \bm{m}_2 & \Psi_3\big(\frac{1}{2}\langle \bm{m}_2 & \Psi_3\big(\langle \bm{m}_2 & \Psi_3\big(\langle \bm{m}_2 & \Psi_3\big(\langle \bm{m}_2\\
    \quad \otimes(\bm{m}_1^2-\frac{1}{3}\mathfrak{i})\rangle\big) & \quad \otimes(\bm{m}_{2}^2-\bm{m}_3^2)\rangle\big) & \quad\otimes\bm{m}_1\bm{m}_2\rangle\big) & \quad\otimes\bm{m}_1\bm{m}_3\rangle\big) & \quad\otimes\bm{m}_2\bm{m}_3\rangle\big)\vspace{6pt}\\
    \Psi_3\big(\langle \bm{m}_3 & \Psi_3\big(\frac{1}{2}\langle \bm{m}_3 & \Psi_3\big(\langle \bm{m}_3 & \Psi_3\big(\langle \bm{m}_3 & \Psi_3\big(\langle \bm{m}_3\\
    \quad \otimes(\bm{m}_1^2-\frac{1}{3}\mathfrak{i})\rangle\big) & \quad \otimes(\bm{m}_{2}^2-\bm{m}_3^2)\rangle\big) & \quad\otimes\bm{m}_1\bm{m}_2\rangle\big) & \quad\otimes\bm{m}_1\bm{m}_3\rangle\big) & \quad\otimes\bm{m}_2\bm{m}_3\rangle\big)
  \end{array}
  \right), 
\end{align*}
and
\begin{align*}
  &K_{22}=\\
  &\small
  \left(
  \begin{array}{lllll}
    \Psi_4\big(\langle (\bm{m}_1^2-\frac{1}{3}\mathfrak{i}) & \Psi_4\big(\frac{1}{2}\langle (\bm{m}_1^2-\frac{1}{3}\mathfrak{i}) & \Psi_4\big(\langle (\bm{m}_1^2-\frac{1}{3}\mathfrak{i}) & \Psi_4\big(\langle (\bm{m}_1^2-\frac{1}{3}\mathfrak{i}) & \Psi_4\big(\langle (\bm{m}_1^2-\frac{1}{3}\mathfrak{i})\\
    \quad \otimes(\bm{m}_1^2-\frac{1}{3}\mathfrak{i})\rangle\big) & \quad \otimes(\bm{m}_{2}^2-\bm{m}_3^2)\rangle\big) & \quad\otimes\bm{m}_1\bm{m}_2\rangle\big) & \quad\otimes\bm{m}_1\bm{m}_3\rangle\big) & \quad\otimes\bm{m}_2\bm{m}_3\rangle\big)\vspace{6pt}\\
    \Psi_4\big(\langle \frac{1}{2}(\bm{m}_2^2-\bm{m}_3^2) & \Psi_4\big(\frac{1}{4}\langle (\bm{m}_2^2-\bm{m}_3^2) & \Psi_4\big(\frac{1}{2}\langle (\bm{m}_2^2-\bm{m}_3^2) & \Psi_4\big(\frac{1}{2}\langle (\bm{m}_2^2-\bm{m}_3^2) & \Psi_4\big(\frac{1}{2}\langle (\bm{m}_2^2-\bm{m}_3^2)\\
    \quad \otimes(\bm{m}_1^2-\frac{1}{3}\mathfrak{i})\rangle\big) & \quad \otimes(\bm{m}_{2}^2-\bm{m}_3^2)\rangle\big) & \quad\otimes\bm{m}_1\bm{m}_2\rangle\big) & \quad\otimes\bm{m}_1\bm{m}_3\rangle\big) & \quad\otimes\bm{m}_2\bm{m}_3\rangle\big)\vspace{6pt}\\
    \Psi_4\big(\langle \bm{m}_1\bm{m}_2 & \Psi_4\big(\frac{1}{2}\langle \bm{m}_1\bm{m}_2 & \Psi_4\big(\langle \bm{m}_1\bm{m}_2 & \Psi_4\big(\langle \bm{m}_1\bm{m}_2 & \Psi_4\big(\langle \bm{m}_1\bm{m}_2\\
    \quad \otimes(\bm{m}_1^2-\frac{1}{3}\mathfrak{i})\rangle\big) & \quad \otimes(\bm{m}_{2}^2-\bm{m}_3^2)\rangle\big) & \quad\otimes\bm{m}_1\bm{m}_2\rangle\big) & \quad\otimes\bm{m}_1\bm{m}_3\rangle\big) & \quad\otimes\bm{m}_2\bm{m}_3\rangle\big)\vspace{6pt}\\
    \Psi_4\big(\langle \bm{m}_1\bm{m}_3 & \Psi_4\big(\frac{1}{2}\langle \bm{m}_1\bm{m}_3 & \Psi_4\big(\langle \bm{m}_1\bm{m}_3 & \Psi_4\big(\langle \bm{m}_1\bm{m}_3 & \Psi_4\big(\langle \bm{m}_1\bm{m}_3\\
    \quad \otimes(\bm{m}_1^2-\frac{1}{3}\mathfrak{i})\rangle\big) & \quad \otimes(\bm{m}_{2}^2-\bm{m}_3^2)\rangle\big) & \quad\otimes\bm{m}_1\bm{m}_2\rangle\big) & \quad\otimes\bm{m}_1\bm{m}_3\rangle\big) & \quad\otimes\bm{m}_2\bm{m}_3\rangle\big)\vspace{6pt}\\
    \Psi_4\big(\langle \bm{m}_2\bm{m}_3 & \Psi_4\big(\frac{1}{2}\langle \bm{m}_2\bm{m}_3 & \Psi_4\big(\langle \bm{m}_2\bm{m}_3 & \Psi_4\big(\langle \bm{m}_2\bm{m}_3 & \Psi_4\big(\langle \bm{m}_2\bm{m}_3\\
    \quad \otimes(\bm{m}_1^2-\frac{1}{3}\mathfrak{i})\rangle\big) & \quad \otimes(\bm{m}_{2}^2-\bm{m}_3^2)\rangle\big) & \quad\otimes\bm{m}_1\bm{m}_2\rangle\big) & \quad\otimes\bm{m}_1\bm{m}_3\rangle\big) & \quad\otimes\bm{m}_2\bm{m}_3\rangle\big)\vspace{6pt}\\
  \end{array}
  \right). 
\end{align*}
The notations $\Psi_3$ and $\Psi_4$ have been defined in the main text. 
For the four point groups, $\mathcal{O}$, $\mathcal{T}$, $\mathcal{D}_4$, $\mathcal{D}_3$, our task is to identify the vanishing tensors, and express remaining terms in the above matrices by nonvanishing symmetric traceless tensors. 


When discussing a particular group $\mathcal{G}$, we could choose to write down the expression of $q_4^{(0)}$ by symmetric traceless tensors like what is done for $q_2^{(0)}$, followed by setting the seven tensors to be zero. However, a shortcut is to calculate directly the average over a subgroup of $\mathcal{G}$ to identify some vanishing tensors, which we will adopt frequently in the following. 

Before discussing particular groups, we calculate the diagonal blocks of $K_{22}$ that are always nonzero, and three blocks in $K_{12}$ given by $\bm{m}_i\otimes\bm{m}_j\bm{m}_k$ where $i,j,k$ are mutually unequal. 
We begin with an expression to be utilized many times afterwards, 
\begin{align*}
  &2m_{2i}m_{2j}m_{3k}m_{3l}+2m_{3i}m_{3j}m_{2k}m_{2l}-(m_{2i}m_{3j}+m_{3i}m_{2j})(m_{2k}m_{3l}+m_{3k}m_{2l})\\
  =&m_{2i}m_{3l}(m_{2j}m_{3k}-m_{3j}m_{2k})+m_{2i}m_{3k}(m_{2j}m_{3l}-m_{3j}m_{2l})\\
  &+m_{3i}m_{2l}(m_{3j}m_{2k}-m_{2j}m_{2k})+m_{3i}m_{2k}(m_{3j}m_{2l}-m_{2j}m_{3l})\\
  =&(m_{2i}m_{3l}-m_{3i}m_{2l})(m_{2j}m_{3k}-m_{3j}m_{2k})+(m_{2i}m_{3k}-m_{3i}m_{2k})(m_{2j}m_{3l}-m_{3j}m_{2l})\\
  =&(\epsilon_{ils}\epsilon_{jkt}+\epsilon_{iks}\epsilon_{jlt})(\bm{m}_1^2)_{st} \\
  =&2\delta_{ij}\delta_{kl}-\delta_{ik}\delta_{jl}+\delta_{il}\delta_{jk}\\
  &-2\delta_{ij}(\bm{m}_1^2)_{kl}-2\delta_{kl}(\bm{m}_1^2)_{ij}+\delta_{ik}(\bm{m}_1^2)_{jl}+\delta_{il}(\bm{m}_1^2)_{jk}+\delta_{jk}(\bm{m}_1^2)_{il}+\delta_{jl}(\bm{m}_1^2)_{ik}\\
  =&\frac{1}{3}(2\delta_{ij}\delta_{kl}-\delta_{ik}\delta_{jl}+\delta_{il}\delta_{jk})\\
  &-2\delta_{ij}\big((\bm{m}_1^2)_0\big)_{kl}-2\delta_{kl}\big((\bm{m}_1^2)_0\big)_{ij}+\delta_{ik}\big((\bm{m}_1^2)_0\big)_{jl}+\delta_{il}\big((\bm{m}_1^2)_0\big)_{jk}+\delta_{jk}\big((\bm{m}_1^2)_0\big)_{il}+\delta_{jl}\big((\bm{m}_1^2)_0\big)_{ik}. 
\end{align*}
The eight blocks are calculated as 
\begin{align*}
  &\hspace{-36pt}(m_{1i}m_{1j}-\frac{1}{3}\delta_{ij})(m_{1k}m_{1l}-\frac{1}{3}\delta_{kl})\\
  =&\big((\bm{m}_1^4)_0+\frac{6}{7}(\bm{m}_1^2)_0\mathfrak{i}+\frac{1}{5}\mathfrak{i}^2\big)_{ijkl}
  -\frac{1}{3}\delta_{ij}\big(\bm{m}_1^2\big)_{kl}
  -\frac{1}{3}\delta_{kl}\big(\bm{m}_1^2\big)_{ij}
  +\frac{1}{9}\delta_{ij}\delta_{kl}\\
  =&\big((\bm{m}_1^4)_0+\frac{6}{7}(\bm{m}_1^2)_0\mathfrak{i}+\frac{1}{5}\mathfrak{i}^2\big)_{ijkl}
  -\frac{1}{3}\delta_{ij}\big((\bm{m}_1^2)_0\big)_{kl}
  -\frac{1}{3}\delta_{kl}\big((\bm{m}_1^2)_0\big)_{ij}
  -\frac{1}{9}\delta_{ij}\delta_{kl}\\
  =&\big((\bm{m}_1^4)_0\big)_{ijkl}
  -\frac{1}{7}\Big(\frac{4}{3}\delta_{ij}\big((\bm{m}_1^2)_0\big)_{kl}+\frac{4}{3}\delta_{kl}\big((\bm{m}_1^2)_0\big)_{ij}\\
  &-\delta_{ik}\big((\bm{m}_1^2)_0\big)_{jl}
  -\delta_{ik}\big((\bm{m}_1^2)_0\big)_{jk}
  -\delta_{jk}\big((\bm{m}_1^2)_0\big)_{il}
  -\delta_{jl}\big((\bm{m}_1^2)_0\big)_{ik}\Big)\\
  &-\frac{1}{45}(2\delta_{ij}\delta_{kl}-3\delta_{ij}\delta_{kl}-3\delta_{ij}\delta_{kl}), \\
  &\hspace{-36pt}\frac{1}{4}(m_{2i}m_{2j}-m_{3i}m_{3j})(m_{2k}m_{2l}-m_{3k}m_{3l})\\
  =&\frac{1}{4}(\bm{m}_2^4-2\bm{m}_2^2\bm{m}_3^2+\bm{m}_2^4)_{ijkl}\\
  &-\frac{1}{12}\big(2m_{2i}m_{2j}m_{3k}m_{3l}+2m_{3i}m_{3j}m_{2k}m_{2l}-(m_{2i}m_{3j}+m_{3i}m_{2j})(m_{2k}m_{3l}+m_{3k}m_{2l})\big)\\
  =&\big(\frac{1}{4}(\bm{m}_2^4+\bm{m}_3^4-2\bm{m}_2^2\bm{m}_3^2)_0+\frac{3}{14}(\bm{m}_2^2+\bm{m}_3^2)_0\mathfrak{i}+\frac{1}{14}(\bm{m}_1^2)_0\mathfrak{i}+(\frac{1}{20}+\frac{1}{20}-\frac{1}{30})\mathfrak{i}^2\big)_{ijkl}\\
  &+\frac{1}{12}\Big(2\delta_{kl}\big((\bm{m}_1^2)_0\big)_{ij}+2\delta_{ij}\big((\bm{m}_1^2)_0\big)_{kl}\\
  &\quad -\delta_{ik}\big((\bm{m}_1^2)_0\big)_{jl}-\delta_{jl}\big((\bm{m}_1^2)_0\big)_{ik}-\delta_{il}\big((\bm{m}_1^2)_0\big)_{jk}-\delta_{jk}\big((\bm{m}_1^2)_0\big)_{il}\Big)\\
  &-\frac{1}{36}(2\delta_{ij}\delta_{kl}-\delta_{ik}\delta_{jl}-\delta_{il}\delta_{jk})\\
  =&\big(\frac{1}{4}(\bm{m}_2^4+\bm{m}_3^4-2\bm{m}_2^2\bm{m}_3^2)_0\big)_{ijkl}\\
  &-\big(\frac{1}{7}(\bm{m}_1^2)_0\mathfrak{i}\big)_{ijkl}+\frac{1}{12}\Big(2\delta_{kl}\big((\bm{m}_1^2)_0\big)_{ij}+2\delta_{ij}\big((\bm{m}_1^2)_0\big)_{kl}\\
  &\quad -\delta_{ik}\big((\bm{m}_1^2)_0\big)_{jl}-\delta_{jl}\big((\bm{m}_1^2)_0\big)_{ik}-\delta_{il}\big((\bm{m}_1^2)_0\big)_{jk}-\delta_{jk}\big((\bm{m}_1^2)_0\big)_{il}\Big)\\
  &+\big(\frac{1}{15}\mathfrak{i}^2\big)_{ijkl}-\frac{1}{36}(2\delta_{ij}\delta_{kl}-\delta_{ik}\delta_{jl}-\delta_{il}\delta_{jk})\\
  =&\big(\frac{1}{4}(\bm{m}_2^4+\bm{m}_3^4-2\bm{m}_2^2\bm{m}_3^2)_0\big)_{ijkl}\\
  &+\frac{1}{7}\Big(\delta_{kl}\big((\bm{m}_1^2)_0\big)_{ij}+\delta_{ij}\big((\bm{m}_1^2)_0\big)_{kl}\\
  &\quad -\frac{3}{4}\delta_{ik}\big((\bm{m}_1^2)_0\big)_{jl}-\frac{3}{4}\delta_{jl}\big((\bm{m}_1^2)_0\big)_{ik}-\frac{3}{4}\delta_{il}\big((\bm{m}_1^2)_0\big)_{jk}-\frac{3}{4}\delta_{jk}\big((\bm{m}_1^2)_0\big)_{il}\Big)\\
  &-\frac{1}{60}(2\delta_{ij}\delta_{kl}-3\delta_{ik}\delta_{jl}-3\delta_{il}\delta_{jk}), \\
  &\hspace{-36pt}\frac{1}{4}(m_{1i}m_{2j}+m_{2i}m_{1j})(m_{1k}m_{2l}+m_{2k}m_{1l})\\
  =&(\bm{m}_1^2\bm{m}_2^2)_{ijkl}\\
  &-\frac{1}{12}\big(2m_{2i}m_{2j}m_{3k}m_{3l}+2m_{3i}m_{3j}m_{2k}m_{2l}-(m_{2i}m_{3j}+m_{3i}m_{2j})(m_{2k}m_{3l}+m_{3k}m_{2l})\big)\\
  =&\big((\bm{m}_1^2\bm{m}_2^2)_0\big)_{ijkl}
  +\frac{1}{7}\Big(\delta_{kl}\big((\bm{m}_3^2)_0\big)_{ij}+\delta_{ij}\big((\bm{m}_3^2)_0\big)_{kl}\\
  &\quad -\frac{3}{4}\delta_{ik}\big((\bm{m}_3^2)_0\big)_{jl}-\frac{3}{4}\delta_{jl}\big((\bm{m}_3^2)_0\big)_{ik}-\frac{3}{4}\delta_{il}\big((\bm{m}_3^2)_0\big)_{jk}-\frac{3}{4}\delta_{jk}\big((\bm{m}_3^2)_0\big)_{il}\Big)\\
  &-\frac{1}{60}(2\delta_{ij}\delta_{kl}-3\delta_{ik}\delta_{jl}-3\delta_{il}\delta_{jk}), \\
  &\hspace{-36pt}\frac{1}{4}(m_{1i}m_{3j}+m_{3i}m_{1j})(m_{1k}m_{3l}+m_{3k}m_{1l})\\
  =&\big((\bm{m}_1^2\bm{m}_3^2)_0\big)_{ijkl}
  +\frac{1}{7}\Big(\delta_{kl}\big((\bm{m}_2^2)_0\big)_{ij}+\delta_{ij}\big((\bm{m}_2^2)_0\big)_{kl}\\
  &\quad -\frac{3}{4}\delta_{ik}\big((\bm{m}_2^2)_0\big)_{jl}-\frac{3}{4}\delta_{jl}\big((\bm{m}_2^2)_0\big)_{ik}-\frac{3}{4}\delta_{il}\big((\bm{m}_2^2)_0\big)_{jk}-\frac{3}{4}\delta_{jk}\big((\bm{m}_2^2)_0\big)_{il}\Big)\\
  &-\frac{1}{60}(2\delta_{ij}\delta_{kl}-3\delta_{ik}\delta_{jl}-3\delta_{il}\delta_{jk}), \\
  &\hspace{-36pt}\frac{1}{4}(m_{2i}m_{3j}+m_{3i}m_{2j})(m_{2k}m_{3l}+m_{3k}m_{2l})\\
  =&\big((\bm{m}_2^2\bm{m}_3^2)_0\big)_{ijkl}
  +\frac{1}{7}\Big(\delta_{kl}\big((\bm{m}_1^2)_0\big)_{ij}+\delta_{ij}\big((\bm{m}_1^2)_0\big)_{kl}\\
  &\quad -\frac{3}{4}\delta_{ik}\big((\bm{m}_1^2)_0\big)_{jl}-\frac{3}{4}\delta_{jl}\big((\bm{m}_1^2)_0\big)_{ik}-\frac{3}{4}\delta_{il}\big((\bm{m}_1^2)_0\big)_{jk}-\frac{3}{4}\delta_{jk}\big((\bm{m}_1^2)_0\big)_{il}\Big)\\
  &-\frac{1}{60}(2\delta_{ij}\delta_{kl}-3\delta_{ik}\delta_{jl}-3\delta_{il}\delta_{jk}), \\
  &\hspace{-36pt}\frac{1}{2}m_{1i}(m_{2j}m_{3k}+m_{3j}m_{2k})\\
  =&(\bm{m}_1\bm{m}_2\bm{m}_3)_{ijk}
  +\frac{1}{6}\big((m_{1i}m_{2j}-m_{2i}m_{1j})m_{3k}+(m_{1i}m_{3j}-m_{3i}m_{1j})m_{2k}\\
  &+(m_{1i}m_{2k}-m_{2i}m_{1k})m_{3j}+(m_{1i}m_{3k}-m_{3i}m_{1k})m_{3j}\big)\\
  =&(\bm{m}_1\bm{m}_2\bm{m}_3)_{ijk}+\frac{1}{6}\Big(\epsilon_{ijs}\big((\bm{m}_3^2)_0-(\bm{m}_2^2)_0\big)_{ks}+\epsilon_{iks}\big((\bm{m}_3^2)_0-(\bm{m}_2^2)_0\big)_{js}\Big), \\
  &\hspace{-36pt}\frac{1}{2}m_{2i}(m_{1j}m_{3k}+m_{3j}m_{1k})\\
  =&(\bm{m}_1\bm{m}_2\bm{m}_3)_{ijk}+\frac{1}{6}\Big(\epsilon_{ijs}\big((\bm{m}_1^2)_0-(\bm{m}_3^2)_0\big)_{ks}+\epsilon_{iks}\big((\bm{m}_1^2)_0-(\bm{m}_3^2)_0\big)_{js}\Big), \\
  &\hspace{-36pt}\frac{1}{2}m_{3i}(m_{1j}m_{2k}+m_{2j}m_{1k})\\
  =&(\bm{m}_1\bm{m}_2\bm{m}_3)_{ijk}+\frac{1}{6}\Big(\epsilon_{ijs}\big((\bm{m}_2^2)_0-(\bm{m}_1^2)_0\big)_{ks}+\epsilon_{iks}\big((\bm{m}_2^2)_0-(\bm{m}_1^2)_0\big)_{js}\Big). 
\end{align*}
For other blocks, we will calculate when they are nonzero under certain symmetry. 




\subsection{Group $\mathcal{O}$}
The lowest order nonvanishing tensor is the fourth order tensor $O$. 
Thus, symmetric traceless tensors of third or lower order are all zero. 
However, it is very tedious if we decompose each block into symmetric traceless tensors. 
Below, we would like to make use of the symmetries directly to eliminate many blocks. 

The group allows the rotations transforming two of $\bm{m}_1$, $\bm{m}_2$, $\bm{m}_3$ into their opposites.
Therefore, only the blocks where each of $\bm{m}_1$, $\bm{m}_2$, $\bm{m}_3$ appears odd or even times can survive.
As a result, only nine blocks in $K_{12}$ and $K_{22}$ can be nonzero, of which the eight calculated above are included.
Another one is the block given by $(\bm{m}_1^2-\mathfrak{i}/3)\otimes(\bm{m}_2^2-\bm{m}_3^2)$. 

We go on examining these nine blocks.
The group allows the four-fold rotation $\mathfrak{j}_{\pi/2}$ giving $\mathfrak{pj}_{\pi/2}=(\bm{m}_1,\bm{m}_3,-\bm{m}_2)$.
Therefore, 
$$
\langle(\bm{m}_1^2-\mathfrak{i}/3)\otimes(\bm{m}_2^2-\bm{m}_3^2)\rangle
=\langle(\bm{m}_1^2-\mathfrak{i}/3)\otimes(\bm{m}_3^2-\bm{m}_2^2)\rangle
=0. 
$$
So, we only need to look into the eight blocks calculated above.

Since symmetric traceless tensors of the order no greater than three are all zero, we have 
$$
\langle(\bm{m}_1^2)_0\rangle=\langle(\bm{m}_2^2)_0\rangle=\langle(\bm{m}_3^2)_0\rangle=\langle\bm{m}_1\bm{m}_2\bm{m}_3\rangle=0. 
$$
Furthermore, the group allows the rotations
\begin{align}
(\bm{m}_1, \bm{m}_2, \bm{m}_3)\to (\bm{m}_2, \bm{m}_3, \bm{m}_1), (\bm{m}_3, \bm{m}_1, \bm{m}_2). \label{3fold}
\end{align}
Together with the definition of the tensor $O$, 
\begin{align*}
  O=&\langle\bm{m}_1^2\bm{m}_2^2+\bm{m}_2^2\bm{m}_3^2+\bm{m}_1^2\bm{m}_3^2-\frac{1}{5}\mathfrak{i}^2\rangle\\
  =&\langle(\bm{m}_1^2\bm{m}_2^2+\bm{m}_2^2\bm{m}_3^2+\bm{m}_1^2\bm{m}_3^2)_0\rangle\\
  =&-\frac{1}{2}\langle(\bm{m}_1^4+\bm{m}_2^4+\bm{m}_3^4)_0\rangle, 
\end{align*}
we deduce that 
\begin{align*}
  &\langle(\bm{m}_1^2\bm{m}_2^2)_0\rangle=\langle(\bm{m}_2^2\bm{m}_3^2)_0\rangle=\langle(\bm{m}_1^2\bm{m}_3^2)_0\rangle\\
  &\qquad =-\frac{1}{2}\langle(\bm{m}_1^4)_0\rangle=-\frac{1}{2}\langle(\bm{m}_2^4)_0\rangle=-\frac{1}{2}\langle(\bm{m}_3^4)_0\rangle=\frac{1}{3}O. 
\end{align*}
Taking them into the expressions of the eight blocks, we obtain the quasi-entropy. 
Here, we note that 
\begin{align*}
  \frac{1}{60}\Psi_4(3\delta_{ik}\delta_{jl}+3\delta_{il}\delta_{jk}-2\delta_{ij}\delta_{kl})=\mathrm{diag}(\frac{1}{15},\frac{1}{20},\frac{1}{20},\frac{1}{20},\frac{1}{20})=D 
\end{align*}
is the diagonal matrix defined in \eqref{Dmat}. 

\subsection{Group $\mathcal{T}$}
The lowest order nonvanishing tensor is $T=\langle\bm{m}_1\bm{m}_1\bm{m}_3\rangle$.
Thus, the symmetric traceless tensors of first or second order are all zero. 
Most derivations for the group $\mathcal{O}$ are still valid. 
The only difference is that the block given by $(\bm{m}_1^2-\frac{1}{3}\mathfrak{i})\otimes(\bm{m}_2^2-\bm{m}_3^2)$ is nonzero. 
Since the second order symmetric traceless tensor $\langle\bm{m}_2^2-\bm{m}_3^2\rangle=0$, we calculate the average as 
\begin{align*}
  \frac{1}{2}\langle(\bm{m}_1^2-\frac{1}{3}\mathfrak{i})_{ij}(\bm{m}_2^2-\bm{m}_3^2)_{kl}\rangle
  =\frac{1}{2}\langle(\bm{m}_1^2)_{ij}(\bm{m}_2^2-\bm{m}_3^2)_{kl}\rangle. 
\end{align*}
By the three-fold rotations \eqref{3fold}, we have 
\begin{align*}
  \langle(\bm{m}_1^2)_{ij}(\bm{m}_2^2)_{kl}\rangle=\langle(\bm{m}_2^2)_{ij}(\bm{m}_3^2)_{kl}\rangle=\langle(\bm{m}_3^2)_{ij}(\bm{m}_1^2)_{kl}\rangle.
\end{align*}
Therefore, 
\begin{align*}
  &\frac{1}{2}\langle(\bm{m}_1^2)_{ij}(\bm{m}_2^2-\bm{m}_3^2)_{kl}\rangle\\
  =&\frac{1}{6}\langle(\bm{m}_1^2)_{ij}(\bm{m}_2^2)_{kl}+(\bm{m}_2^2)_{ij}(\bm{m}_3^2)_{kl}+(\bm{m}_3^2)_{ij}(\bm{m}_1^2)_{kl}\\
  &-(\bm{m}_2^2)_{ij}(\bm{m}_1^2)_{kl}-(\bm{m}_3^2)_{ij}(\bm{m}_2^2)_{kl}-(\bm{m}_1^2)_{ij}(\bm{m}_3^2)_{kl}\rangle. 
\end{align*}
Using
\begin{align*}
  &(\bm{m}_1^2)_{ij}(\bm{m}_2^2)_{kl}-(\bm{m}_2^2)_{ij}(\bm{m}_1^2)_{kl}\\
  =&m_{1i}m_{1j}m_{2k}m_{2l}-m_{1k}m_{1j}m_{2i}m_{2l}+m_{1k}m_{1j}m_{2i}m_{2l}-m_{1k}m_{1l}m_{2i}m_{2j}\\
  =&\epsilon_{iks}m_{1j}m_{2l}m_{3s}+\epsilon_{jls}m_{1k}m_{2i}m_{3s}, 
\end{align*}
we arrive at 
\begin{align*}
  \frac{1}{6}\langle&(\bm{m}_1^2)_{ij}(\bm{m}_2^2)_{kl}+(\bm{m}_2^2)_{ij}(\bm{m}_3^2)_{kl}+(\bm{m}_3^2)_{ij}(\bm{m}_1^2)_{kl}\\
  &-(\bm{m}_2^2)_{ij}(\bm{m}_1^2)_{kl}-(\bm{m}_3^2)_{ij}(\bm{m}_2^2)_{kl}-(\bm{m}_1^2)_{ij}(\bm{m}_3^2)_{kl}\rangle\\
  =&\frac{1}{6}\epsilon_{iks}\langle m_{1j}m_{2l}m_{3s}+m_{2j}m_{3l}m_{1s}+m_{3j}m_{1l}m_{2s}\rangle\\
  &+\frac{1}{6}\epsilon_{jls}\langle m_{1k}m_{2i}m_{3s}+m_{2k}m_{3i}m_{1s}+m_{3k}m_{1i}m_{2s}\rangle\\
  =&\frac{1}{2}\epsilon_{iks}\langle \bm{m}_1\bm{m}_2\bm{m}_3\rangle_{jls}+\frac{1}{2}\epsilon_{jls}\langle\bm{m}_1\bm{m}_2\bm{m}_3\rangle_{kis}\\
  &+\frac{1}{12}\epsilon_{iks}\langle m_{1j}m_{2l}m_{3s}+m_{2j}m_{3l}m_{1s}+m_{3j}m_{1l}m_{2s}-m_{1j}m_{3l}m_{2s}-m_{2j}m_{1l}m_{3s}-m_{3j}m_{2l}m_{1s}\rangle\\
  &+\frac{1}{12}\epsilon_{jls}\langle m_{1k}m_{2i}m_{3s}+m_{2k}m_{3i}m_{1s}+m_{3k}m_{1i}m_{2s}-m_{1k}m_{3i}m_{2s}-m_{2k}m_{1i}m_{3s}-m_{3k}m_{2i}m_{1s}\rangle\\
  =&\frac{1}{2}\epsilon_{iks}T_{jls}+\frac{1}{2}\epsilon_{jls}T_{kis}
  +\frac{1}{12}(\epsilon_{iks}\epsilon_{jls}+\epsilon_{jls}\epsilon_{kis})\\
  =&\frac{1}{2}\epsilon_{iks}T_{jls}+\frac{1}{2}\epsilon_{jls}T_{kis}. 
\end{align*}
Here, we notice that for any third order tensor $V$, the fourth order tensor $\epsilon_{iks}V_{jls}+\epsilon_{jls}V_{kis}$ is symmetric about the index pairs $(i,j)$ and $(k,l)$, as long as $V_{iki}=V_{kjj}=0$. Actually, 
\begin{align*}
  \epsilon_{ijt}(\epsilon_{iks}V_{jls}+\epsilon_{jls}V_{kis})
  =&V_{ili}\delta_{kt}-V_{kjj}\delta_{lt}. 
\end{align*}
As a special case, when $V$ is symmetric traceless, we have 
\begin{align*}
  \epsilon_{iks}V_{jls}+\epsilon_{jls}V_{kis}=\epsilon_{jks}V_{ils}+\epsilon_{ils}V_{kjs}. 
\end{align*}

\subsection{Group $\mathcal{D}_4$}
Following the same derivation for the group $\mathcal{O}$, we deduce that only the eight blocks calculated at the beginning of the section are nonzero. 
Rewrite the nonvanishing tensors as
\begin{align*}
  &Q^2=\langle(\bm{m}_1^2)_0\rangle,\qquad
  Q^4=\langle(\bm{m}_1^4)_0\rangle,\\
  &M^4_1=\langle(8\bm{m}_2^4-8(I-\bm{m}_1^2)\bm{m}_2^2+(I-\bm{m}_1^2)^2\rangle=\langle(8\bm{m}_2^4+8\bm{m}_1^2\bm{m}_2^2+\bm{m}_1^4)_0\rangle. 
\end{align*}
We then calculate using $\bm{m}_1^2+\bm{m}_2^2+\bm{m}_3^2=\mathfrak{i}$ that 
\begin{align*}
  &\langle(\bm{m}_2^2)_0\rangle=\langle(\bm{m}_3^2)_0\rangle=-\frac{1}{2}Q^2, \\
  &\langle(\bm{m}_1^2\bm{m}_2^2)_0\rangle=\langle(\bm{m}_1^2\bm{m}_3^2)_0\rangle=-\frac{1}{2}Q^4.
\end{align*}
Then, using the four-fold rotation $\mathfrak{j}_{\pi/2}$, we deduce that 
\begin{align*}
  &\langle(\bm{m}_2^4)_0\rangle=\langle(\bm{m}_3^4)_0\rangle=\frac{1}{8}M^4_1-\langle \bm{m}_1^2\bm{m}_2^2\rangle-\frac{1}{8}\langle\bm{m}_1^4\rangle=\frac{1}{8}M^4_1+\frac{3}{8}Q^4, \\
  &\langle(\bm{m}_2^2\bm{m}_3^2)_0\rangle=-\langle(\bm{m}_1^2\bm{m}_2^2)_0\rangle-\langle(\bm{m}_2^4)_0\rangle=\frac{1}{8}(Q^4-M^4_1). 
\end{align*}

\subsection{Group $\mathcal{D}_3$}
For the eight blocks calculated at the beginning, they are identical to the group $\mathcal{D}_4$ where we need to set $M^4_1=0$. 

There are four extra nonzero blocks, given by $\bm{m}_3\otimes\bm{m}_2\bm{m}_3$, $\frac{1}{2}\bm{m}_2\otimes(\bm{m}_2^2-\bm{m}_3^2)$, $\bm{m}_1\bm{m}_2\otimes\bm{m}_2\bm{m}_3$, and $\frac{1}{2}(\bm{m}_2^2-\bm{m}_3^2)\otimes\bm{m}_1\bm{m}_3$. 
For these blocks, we need to examine the average on the three-fold rotations, $\mathfrak{p}=(\bm{m}_1,\bm{m}_2,\bm{m}_3)$ and
\begin{align*}
  &\mathfrak{pj}_{2\pi/3}=(\bm{m}_1, -\frac{1}{2}\bm{m}_2+\frac{\sqrt{3}}{2}\bm{m}_3, -\frac{\sqrt{3}}{2}\bm{m}_2-\frac{1}{2}\bm{m}_3),\\
  &\mathfrak{pj}_{4\pi/3}=(\bm{m}_1,-\frac{1}{2}\bm{m}_2-\frac{\sqrt{3}}{2}\bm{m}_3, \frac{\sqrt{3}}{2}\bm{m}_2-\frac{1}{2}\bm{m}_3). 
\end{align*}

We start from the block $\bm{m}_3\otimes\bm{m}_2\bm{m}_3$, for which we have 
\begin{align*}
  &\hspace{-36pt}\frac{1}{2}m_{3i}(m_{2j}m_{3k}+m_{3j}m_{2k})\\
  =&(\bm{m}_2\bm{m}_3^2)_{ijk}+\frac{1}{6}(m_{2i}m_{3j}-m_{3i}m_{2j})m_{3k}+\frac{1}{6}(m_{2i}m_{3k}-m_{3i}m_{2k})m_{3j}\\
  =&(\bm{m}_2\bm{m}_3^2)_{ijk}+\frac{1}{6}\epsilon_{ijs}m_{1s}m_{3k}+\frac{1}{6}\epsilon_{iks}m_{1s}m_{3j}. 
\end{align*}
Using the three-fold rotations, we have
\begin{align*}
  \langle m_{1s}m_{3k}\rangle=\frac{1}{3}\big(\langle m_{1s}m_{3k}\rangle+\langle m_{1s}(-\frac{\sqrt{3}}{2}m_{2k}-\frac{1}{2}m_{3k})\rangle+\langle m_{1s}(\frac{\sqrt{3}}{2}m_{2k}-\frac{1}{2}m_{3k})\rangle\big)=0. 
\end{align*}
The first term yields 
\begin{align*}
  \bm{m}_2\bm{m}_3^2=&-\bm{m}_2^3-\bm{m}_1^2\bm{m}_2+\mathfrak{i}\bm{m}_2\\
  =&-\big(\bm{m}_2^3-\frac{3}{4}(\mathfrak{i}-\bm{m}_1^2)\bm{m}_2\big)
  +\frac{1}{4}(\mathfrak{i}-\bm{m}_1^2)\bm{m}_2. 
\end{align*}
When averaged, the second term is zero, and the first term gives $-\frac{1}{4}M_1^3$. Therefore, we arrive at 
\begin{align*}
  \langle\frac{1}{2}m_{3i}(m_{2j}m_{3k}+m_{3j}m_{2k})\rangle=-\frac{1}{4}(M_1^3)_{ijk}. 
\end{align*}
Similarly, we deduce that 
\begin{align*}
  \frac{1}{2}\langle m_{2i}(m_{2j}m_{2k}-m_{3j}m_{3k})\rangle
  =&\frac{1}{2}\langle m_{2i}(2m_{2j}m_{2k}+m_{1j}m_{1k}-\delta_{jk})\rangle
  =\langle \bm{m}_2^3\rangle_{ijk}
  =\frac{1}{4}(M_1^3)_{ijk}. 
\end{align*}

For the block $\bm{m}_1\bm{m}_2\otimes\bm{m}_2\bm{m}_3$, we have 
\begin{align*}
  m_{1i}m_{2j}m_{2k}m_{3l}
  =&\delta_{jk}m_{1i}m_{3l}-m_{1i}m_{1j}m_{1k}m_{3l}-m_{1i}m_{3j}m_{3k}m_{3l}\\
\end{align*}
The averages of the first two terms are zero because of three-fold rotations.
So, we obtain 
\begin{align*}
  &\frac{1}{4}\langle(m_{1i}m_{2j}+m_{2i}m_{1j})(m_{2k}m_{3l}+m_{3k}m_{2l})\rangle\\
  =&\frac{1}{2}\langle -m_{1i}m_{3j}m_{3k}m_{3l}-m_{3i}m_{1j}m_{3k}m_{3l}\rangle\\
  =&-\langle\bm{m}_1\bm{m}_3^3\rangle_{ijkl}+\frac{1}{4}\langle m_{3i}m_{3j}m_{1k}m_{3l}+m_{3i}m_{3j}m_{3k}m_{1l}-m_{1i}m_{3j}m_{3k}m_{3l}-m_{3i}m_{1j}m_{3k}m_{3l}\rangle\\
  =&-\langle\bm{m}_1\bm{m}_3^3\rangle_{ijkl}+\frac{1}{4}(\epsilon_{iks}\langle m_{2s}m_{3j}m_{3l}\rangle+\epsilon_{jls}\langle m_{2s}m_{3i}m_{3k}\rangle). 
\end{align*}
The tensor $\bm{m}_1\bm{m}_3^3$ can be expressed as 
\begin{align*}
  -\bm{m}_1\bm{m}_3^3=&-\mathfrak{i}\bm{m}_1\bm{m}_3+\bm{m}_1^3\bm{m}_3+\bm{m}_1\bm{m}_2^2\bm{m}_3\\
  =&\bm{m}_1(\bm{m}_2^2-\frac{1}{4}(\mathfrak{i}-\bm{m}_1^2))\bm{m}_3
  -\frac{3}{4}\mathfrak{i}\bm{m}_1\bm{m}_3+\frac{3}{4}\bm{m}_1^3\bm{m}_3. 
\end{align*}
When averaged, the first term gives $\frac{1}{4}N^4$, while the others are zero because of three-fold rotations. 
Therefore,
\begin{align*}
  \frac{1}{4}\langle(m_{1i}m_{2j}+m_{2i}m_{1j})(m_{2k}m_{3l}+m_{3k}m_{2l})\rangle
  =\frac{1}{4}(N^4)_{ijkl}-\frac{1}{4}\big(\epsilon_{iks}(M_1^3)_{jls}+\epsilon_{jls}(M_1^3)_{iks}\big). 
\end{align*}

For the block $\frac{1}{2}(\bm{m}_2^2-\bm{m}_3^2)\otimes \bm{m}_1\bm{m}_3$, we can similarly calculate 
\begin{align*}
  &\frac{1}{4}\langle(m_{2i}m_{2j}-m_{3i}m_{3j})(m_{1k}m_{3l}+m_{3k}m_{1l})\rangle\\
  =&\frac{1}{4}\langle(\delta_{ij}-m_{1i}m_{1j}-2m_{3i}m_{3j})(m_{1k}m_{3l}+m_{3k}m_{1l})\rangle\\
  =&-\frac{1}{2}\langle m_{3i}m_{3j}(m_{1k}m_{3l}+m_{3k}m_{1l})\rangle\\
  =&-\langle\bm{m}_1\bm{m}_3^3\rangle_{ijkl}-\frac{1}{4}\langle m_{3i}m_{3j}m_{1k}m_{3l}+m_{3i}m_{3j}m_{3k}m_{1l}-m_{1i}m_{3j}m_{3k}m_{3l}-m_{3i}m_{1j}m_{3k}m_{3l}\rangle\\
  =&-\langle\bm{m}_1\bm{m}_3^3\rangle_{ijkl}-\frac{1}{4}(\epsilon_{iks}\langle m_{2s}m_{3j}m_{3l}\rangle+\epsilon_{jls}\langle m_{2s}m_{3i}m_{3k}\rangle)\\
  =&\frac{1}{4}(N^4)_{ijkl}+\frac{1}{4}\big(\epsilon_{iks}(M_1^3)_{jls}+\epsilon_{jls}(M_1^3)_{iks}\big). 
\end{align*}

\bibliographystyle{plain}
\bibliography{bib_sym}

\end{document}